\documentclass[a4paper,11pt]{article}

\usepackage[english]{babel}
\usepackage[utf8]{inputenc} 
\usepackage[T1]{fontenc}    
\usepackage{lmodern}
\usepackage{hyperref}       
\usepackage{url}            
\usepackage{booktabs}       
\usepackage{amsfonts}       
\usepackage{nicefrac}       
\usepackage{microtype}      
\usepackage{enumitem}

\usepackage{graphicx}
\usepackage[titletoc,title]{appendix}
\usepackage{amsmath,amsthm,amssymb,mathtools}
\usepackage{braket}
\newtheorem{theorem}{Theorem}[section]
\newtheorem*{theorem*}{Theorem}

\newtheorem{proposition}[theorem]{Proposition}
\newtheorem*{proposition*}{Proposition}
\newtheorem{lemma}[theorem]{Lemma}
\newtheorem*{lemma*}{Lemma}

\newtheorem*{conjecture*}{Conjecture}

\newtheorem*{fact*}{Fact}

\newtheorem*{hypothesis*}{Hypothesis}

\newtheorem{itheorem}[theorem]{Informal Theorem}

\newtheorem*{claim*}{Claim}

\theoremstyle{definition}
\newtheorem{definition}[theorem]{Definition}

\theoremstyle{remark}
\newtheorem{remark}[theorem]{Remark}
\newtheorem*{remark*}{Remark}

\newtheorem*{observation*}{Observation}

\newcommand{\R}{\mathbb{R}}

\newcommand{\Rplus}{\R^+}

\newcommand{\calC}{\mathcal{C}}

\newcommand{\calG}{\mathcal{G}}
\newcommand{\calN}{\mathcal{N}}

\newcommand{\calX}{\mathcal{X}}

\newcommand{\poly}{\mathrm{poly}}


\newcommand{\Abs}[1]{\left\lvert#1\right\rvert}



\newcommand{\Bigset}[1]{\Big\{#1\Big\}}
\newcommand{\norm}[1]{\lVert #1 \rVert}






\newcommand{\iprod}[1]{\langle#1\rangle}

\newcommand{\Esymb}{\mathbb{E}}
\newcommand{\Psymb}{\mathbb{P}}

\DeclareMathOperator*{\E}{\Esymb}

\DeclareMathOperator*{\ProbOp}{\Psymb}

\renewcommand{\Pr}{\ProbOp}


\newcommand{\tmu}{\tilde{\mu}}

\newcommand{\eps}{\varepsilon}
\renewcommand{\epsilon}{\varepsilon}

\newcommand{\wmin}{w_{\text{min}}}

\newcommand{\ehat}{\widehat{e}}

\newcommand{\eii}{\widehat{e}_i}


\newif\ifnotes\notesfalse

\ifnotes
\usepackage{color}
\definecolor{mygrey}{gray}{0.50}
\newcommand{\notename}[2]{{\textcolor{mygrey}{\footnotesize{\bf (#1:} {#2}{\bf ) }}}}

\else

\newcommand{\notename}[2]{{}}

\fi

\newcommand{\pnote}[1]{{\notename{Pranjal}{#1}}}
\newcommand{\anote}[1]{{\notename{Aravindan}{#1}}}



\DeclarePairedDelimiter\inner{\langle}{\rangle}
\DeclarePairedDelimiter\abs{\lvert}{\rvert}

\DeclarePairedDelimiter\parens{(}{)}











\newcommand{\ybar}{\bar{y}}
\newcommand{\xbar}{\bar{x}}

\usepackage{tikz}
\usepackage{multirow}
\usepackage[noend]{algpseudocode}


\usepackage[a4paper,top=3cm,bottom=2cm,left=3cm,right=3cm,marginparwidth=1.75cm]{geometry}

\usepackage{amsmath}
\usepackage{graphicx}
\usepackage[colorinlistoftodos]{todonotes}

\title{Clustering Semi-Random Mixtures of Gaussians}
\author{Pranjal Awasthi\thanks{Department of Computer Science, Rutgers University.}
 \\{\tt pranjal.awasthi@rutgers.edu} \and
Aravindan Vijayaraghavan\thanks{
  Department of Electrical Engineering and Computer Science,
  Northwestern University. Supported by the National Science Foundation (NSF) under Grant No.~CCF-1652491 and CCF-1637585.} \\
  {\tt aravindv@northwestern.edu}
}
\date{}

\begin{document}
\maketitle

\begin{abstract}
\pnote{Changes}
Gaussian mixture models (GMM) are the most widely used statistical model for the $k$-means clustering problem and form a popular framework for clustering in machine learning and data analysis. In this paper, we propose a natural semi-random model for $k$-means clustering that generalizes the Gaussian mixture model, and that we believe will be useful in identifying robust algorithms.
In our model, a semi-random adversary is allowed to make arbitrary ``monotone'' or helpful changes to the data generated from the Gaussian mixture model. 

Our first contribution is a polynomial time algorithm that provably recovers the ground-truth up to small classification error w.h.p., assuming certain separation between the components. Perhaps surprisingly, the algorithm we analyze is the popular Lloyd's algorithm for $k$-means clustering that is the method-of-choice in practice. 
Our second result complements the upper bound by giving a nearly matching information-theoretic lower bound on the number of misclassified points incurred by any $k$-means clustering algorithm on the semi-random model.



\end{abstract}

\section{Introduction}\label{sec:intro}

\pnote{Changes.}
Clustering is a ubiquitous task in machine learning and data mining for partitioning a data set into groups of similar points. 
The $k$-means clustering problem is arguably the most well-studied problem in machine learning. However, designing provably optimal $k$-means clustering algorithms is a challenging task as the $k$-means clustering objective is NP-hard to optimize~\cite{SW} (in fact, it is also NP-hard to find near-optimal solutions~\cite{awasthi2015hardness,lee2017improved}). 
A popular approach to cope with this intractability is to study average-case models for the $k$-means problem.  
The most widely used such statistical model for clustering is the {\it Gaussian Mixture Model~(GMM)}, that has a long and rich history~\cite{Tei61,Pea94,Das99,AK01,VW04,DS07,BV08,MV10,BS10,KK10}. 

In this model there are $k$ clusters, and the points from cluster $i$ are generated from a Gaussian in $d$ dimensions with mean $\mu_i \in \R^d$, and covariance matrix $\Sigma_i \in \R^{d \times d}$ with spectral norm $\norm{\Sigma_i} \le \sigma^2$. Each of the $N$ points in the instance is now generated independently at random, and is drawn from the $i$th component with probability $w_i \in [0,1]$ ($w_1, w_2, \dots, w_k$ are also called mixing weights). 
If the means of the underlying Gaussians are separated enough, the ground truth clustering is well defined\footnote{A separation of $\norm{\mu_i-\mu_j}_2 \ge \Omega(\sigma \sqrt{\log (Nk)})$ for $i \ne j \in [k]$ suffices w.h.p.}. The algorithmic task is to recover the ground truth clustering for any data set generated from such a model (note that the parameters of the Gaussians, mixing weights and the cluster memberships of the points are unknown). 

Starting from the seminal work of Dasgupta~\cite{Das99}, there have been a variety of algorithms to provably cluster data from a GMM model. Algorithms based on PCA and distance-based clustering~\cite{AK01,VW04,AM05,KSV08} provably recover the clustering when there is adequate separation between  every pair of components (parameters). 
Other algorithmic approaches include the method-of-moments~\cite{KMV10,MV10,BS10}, and algebraic methods based on tensor decompositions~\cite{HK12,GVX14,BCMV,ABGRV14,GHK}. (Please see Section~\ref{sec:relatedwork} for a more detailed comparison of the guarantees).  

\anote{Changes.}
On the other hand, the method-of-choice in practice are iterative algorithms like the Lloyd's algorithm (also called  $k$-means algorithm)~\cite{lloyd1982least} and the $k$-means++ algorithm of ~\cite{AV07} (Lloyd's algorithm initialized with centers from distance-based sampling). In the absence of good worst-case guarantees, a compelling direction is to use {\em beyond-worst-case} paradigms like average-case analysis to provide provable guarantees. Polynomial time guarantees for recovering $k$-means optimal clustering by the Lloyd's algorithm and $k$-means++ are known when the points are drawn from a GMM model under sufficient separation conditions~\cite{DS07, KK10,AS12}.


Although the study of Gaussian mixture models has been very fruitful in designing a variety of efficient algorithms, real world data rarely satisfies such strong distributional assumptions. 
Hence, our choice of algorithm should be informed not only by its computational efficiency but also by its robustness to errors and model misspecification. 
As a first step, we need theoretical frameworks that can distinguish between algorithms that are tailored towards a specific probabilistic model and algorithms robust to modeling assumptions. In this paper we initiate such a study in the context of clustering, by studying a natural {\em semi-random model} that generalizes the GMM model and also captures robustness to certain adversarial dependencies in the data. 


Semi-random models involve a set of adversarial choices in addition to the random choices of the probabilistic model, while generating the instance. These models have been successfully applied to study the design of robust algorithms for various optimization problems~\cite{BS92,FK99,MS10,KMM,MMV12,MMV14} (see Section~\ref{sec:relatedwork}) 
In a typical semi-random model, there is a ``planted'' or ``ground-truth'' solution, and an instance is first generated according to a simple probabilistic model. An adversary is then allowed to make ``monotone'' or helpful changes to the instance that only make the planted solution more pronounced. 
For instance, in the semi-random model of Feige and Kilian~\cite{FK99} for graph partitioning, the adversary is allowed to arbitrarily add extra edges within each cluster or delete edges between different clusters of the planted partitioning.  
These adversarial choices only make the planted partition more prominent; however, the choices can be dependent and thwart algorithms that rely on the excessive independence or strong but unrealistic structural properties of these instances. 

Hence, the study of semi-random models helps us understand and identify robust algorithms. 
Our motivation for studying semi-random models for clustering is two-fold: a) design algorithms that are robust to strong distributional data assumptions, and b) explain the empirical success of simple heuristics such as the Lloyd's algorithm.

\paragraph{Semi-random mixtures of Gaussians}

In an ideal clustering instance, each point $x$ in the $i$th cluster is significantly closer to the mean $\mu_i$ than to any other mean $\mu_j$ for $j \ne i$ (for a general instance, in the optimal solution, $\norm{x-\mu_i}_2 - \norm{x-\mu_j}_2 \le 0~ \forall j \ne i$). Moving each point in $C_i$ toward its own mean $\mu_i$ only increases this gap between the distance to its mean and to any other mean. Hence, this perturbation corresponds to a monotone perturbation that only make this planted clustering even better. In our semi-random model, the points are first drawn from a mixture of Gaussians (this is the planted clustering). The adversary is then allowed to move each point in the $i$th cluster closer to its mean $\mu_i$. This allows the points to be even better clustered around their respective means, however these perturbations are allowed to have arbitrary dependencies.  
 We now formally define the semi-random model.
\begin{definition}[Semi-random GMM model]\label{def:srmodel}
Given a set of parameters $\mu_1, \mu_2, \dots, \mu_k \in \R^d$ and $\sigma \in \Rplus$, a clustering instance 
$\calX$ on $N$ points is generated as follows. 
\begin{enumerate}
  \item Adversary chooses an arbitrary partition $\calC=(C_1,C_2, \dots, C_k)$ of $\set{1,\dots,N}$ and let $N_i=|C_i|$ for all $i \in [k]$.  
  \item For each $i \in [k]$ and each $t \in C_i$, $y^{(t)} \in \R^d$ is generated independently at random according to a Gaussian with mean $\mu_i$ and covariance $\Sigma_i$ with $\norm{\Sigma} \le \sigma$ i.e., variance at most $\sigma^2$ in each direction. 
  \item The adversary then moves each point $y^{(t)}$ towards the mean of its component by an arbitrary amount i.e., for each $i \in [k], t \in C_i$, the adversary picks $x^{(t)}$ arbitrarily in $\set{\mu_i+\lambda(y^{(t)}-\mu_i): \lambda \in [0,1]}$. (Note that these choices can be correlated arbitrarily.)   
 \end{enumerate}
The instance is $\calX=\set{x^{(t)}: t \in [N]}$ and is parameterized by $(\mu_1, \dots, \mu_k, \sigma)$ with the planted clustering $C_1, \dots, C_k$. We will denote by $\wmin=\min_{i \in [k]} N_i/ N$.
\end{definition}

Data generated by mixtures of high-dimensional Gaussians have certain properties that are often not exhibited by real-world instances. High-dimensional Gaussians have strong concentration properties; for example, all the points generated from a high-dimensional Gaussian are concentrated at a reasonably far distance from the mean (they are $\approx \sqrt{d} \sigma$ far away w.h.p.). In many real-world datasets on the other hand, clusters in the ground-truth often contain dense ``cores'' that are close to the mean. Our semi-random model admits such instances by allowing points in a cluster to move arbitrarily close to the mean. 


\paragraph{Our Results.}
Our first result studies the Lloyd's algorithm on the semi-random GMM model and gives an upper bound on the clustering error achieved by the Lloyd's algorithm with the initialization procedure used in~\cite{KK10}.  
\begin{itheorem}\label{informalthm:upperbound}
Consider any semi-random instance $\calX$ with $N$ points generated by the semi-random GMM model (Definition~\ref{def:srmodel}) with planted clustering $C_1, \dots, C_k$ and parameters $\mu_1, \dots, \mu_k, \sigma^2$ satisfying
\begin{equation}
\forall i \ne j \in [k],~\norm{\mu_i - \mu_j}_2 > \Delta \sigma, ~\text{ where } \Delta\ge c_0 \sqrt{\min\set{k,d} \log N}, \nonumber
\end{equation}
and $N \ge k^2 d^2/\wmin^2$. There is polynomial time algorithm based on the Lloyd's iterative algorithm that recovers the cluster memberships of all but $\tilde{O}(kd/\Delta^4)$ points.   
\end{itheorem}
The $\tilde{O}$ in the above statement hides a $\log(\log N/\Delta^4)$ and $\log(d/\Delta^2)$  factor. 
Please see Theorem~\ref{thm:upperbound} for a formal statement. 
\pnote{Changes.} 
Furthermore, we show that in the above result the Lloyd's iterations can be initialized using the popular $k$-mean++ algorithm that uses $D^2$-sampling~\cite{AV07}. The most closely related to our work is that of~\cite{KK10} and \cite{AS12} who provided deterministic data conditions under which the Lloyd's algorithm converges to the optimal clustering. Along these lines, our work provides further theoretical justification for the enormous empirical success that the Lloyd's algorithm enjoys.

\pnote{Changes.}
It is also worth noting that in spite of being robust to semi-random perturbations, the separation requirement of $\sigma \sqrt{k \log N}$ in our upper bound matches the separation requirement in the best guarantees \cite{AS12} for Lloyd's algorithm even in the absence of any semi-random errors or perturbations (see Section~\ref{sec:relatedwork} for a comparison) \footnote{We note that for clustering GMMs, the work of Brubaker and Vempala~\cite{BV08} give a qualitatively different separation condition that does not depend on the maximum variance, and can model Gaussian mixtures that look like ``parallel pancakes''. However this separation condition is incomparable to \cite{AS12}, because of the potentially worse dependence on $k$.}. We also remark that while the algorithm recovers a clustering of the given data that is very close to the planted clustering, this does not necessarily estimate the means of the original Gaussian components up to inverse polynomial accuracy (in fact the centers of the planted clustering after the semi-random perturbation may be $\Omega(\sigma)$ far from the original means). This differs from the recent body of work on parameter estimation in the presence of some adversarial noise (please refer to Section~\ref{sec:relatedwork} for a comparison).  


While the monotone changes allowed in the semi-random model should only make the clustering task easier, our next result shows that the error achieved by the Lloyd's algorithm is in fact near optimal. More specifically, we provide a lower bound on the number of points that will be misclassified by any $k$-means optimal solution for the instance.

\begin{itheorem}\label{informalthm:lowerbound}
Given any $N$ (that is sufficiently large polynomial in $d,k$) and $\Delta$ such that  $\sqrt{\log N} \le \Delta \le d/(4 \log d)$, there exists an instance $\calX$ on $N$ points in $d$ dimensions generated from the semi-random GMM model~\ref{def:srmodel} with parameters $\mu_1, \dots, \mu_k, \sigma^2$, and planted clustering $C_1, \dots, C_k$ having separation $\forall i \ne j \in [k], \norm{\mu_i - \mu_j}_2 \ge \Delta \sigma$ s.t. any optimal $k$-means clustering solution $C'_1, C'_2, \dots, C'_k$ of $\calX$ misclassifies at least $\Omega(k d/\Delta^4)$ points with high probability.   
\end{itheorem}
The above lower bound also holds when the semi-random (monotone) perturbations are applied to points generated from a mixture of $k$ spherical Gaussians each with covariance $\sigma^2 I$ and weight $1/k$. Further, the lower bound holds not just for the optimal $k$-means solution, but also for any ``locally optimal'' clustering solution.  Please see Theorem~\ref{thm:lowerbound} for a formal statement. These two results together show that the Lloyd's algorithm essentially recovers the planted clustering up to the optimal error possible for any $k$-means clustering based algorithm. 


Unlike algorithmic results for other semi-random models, an appealing aspect of our algorithmic result is that it gives provable robust guarantees in the semi-random model for a simple, popular algorithm that is used in practice (Lloyd's algorithm). 
Further, other approaches for clustering like distance-based clustering, method-of-moments and tensor decompositions seem inherently non-robust to these semi-random perturbations (see Section~\ref{sec:relatedwork} for details). This robustness of the Lloyd's algorithm suggests an explanation for its widely documented empirical success across different application domains.



\paragraph{Considerations in the choice of the Semi-random GMM model.}

Here we briefly discuss different semi-random models, and considerations involved in favoring Definition~\ref{def:srmodel}. Another semi-random model that comes to mind is one that can move each point closer to the mean of its own cluster (closer just in terms of distance, regardless of direction). Intuitively this seems appealing since this improves the cost of the planted clustering. However, in this model the optimal $k$-means clustering of the perturbed instance can be vastly different from the planted solution. This is because one can move many points $x$ in cluster $C_i$ in such a way that $x$ becomes closer to a different mean rather than $\mu_i$. For high dimensional Gaussians it is easy to see that the distance of each point to its own mean will be on the order of $(\sqrt{d}+2\sqrt{\log N})\sigma$. Hence, in our regime of interest, the inter mean separation of $\sqrt{k\log N} \sigma$  could be much smaller than the radius of any cluster (when $d \gg k$). 
Consider an adversary that moves a large fraction of the points in a given cluster to the mean of another cluster. While the distance of these points to their cluster mean has only decreased from roughly $(\sqrt{d}+2\sqrt{\log N})\sigma$ to around $\sqrt{k \log N}\sigma$, these points now become closer to the mean of a different cluster! In the semi-random GMM model on the other hand, the adversary is only allowed to move the point $x$ along the direction of $x-\mu$; hence, each point $x$ becomes closer to its own mean than to the means of other clusters. Our results show that in such a model, the optimal clustering solution can change by at most $\tilde{O}(d/\Delta^4)$ points.

\anote{Added the "hence, each point becomes close...." line.}

\paragraph{Challenges in the Semi-random GMM model and Overview of Techniques.}

Lloyd's algorithm has been analyzed in the context of clustering mixtures of Gaussians~\cite{KK10, AS12}. Any variant of the Lloyd's algorithm consists of two steps --- an initialization stage where a set of $k$ initial centers are computed, and the iterative algorithm which successively improves the clustering in each step. Kumar and Kannan~\cite{KK10} considered a variant of the Lloyd's method where the initialization is given by using PCA along with a $O(1)$ factor approximation to the $k$-means optimization problem. The improved analysis of this algorithm in~\cite{AS12} leads to state of the art results that perfectly recovers all the clusters when the separation is of the order $\sqrt{k \log N} \sigma$.


We analyze the variant of Lloyd's algorithm that was introduced by Kumar and Kannan~\cite{KK10}. However, there are several challenges in extending the analysis of \cite{AS12} to the semi-random setting. While the semi-random perturbations in the model only move points in a cluster $C_i$ closer to the mean $\mu_i$, these perturbations can be co-ordinated in a way that can move the empirical mean of the cluster significantly. For instance, Lemma~\ref{lem:movingcenter} gives a simple semi-random perturbation to the points in $C_i$ that moves the empirical mean of the points in $C_i$ to $\tmu_i$ s.t. $\tmu_i \approx \mu_i + \Omega(\sigma) \ehat$, for any desired direction $\ehat$. This shift in the empirical means may now cause some of the points in cluster $C_i$ to become closer to $\tmu_j$ (in particular points that have a relatively large projection onto $\ehat$) and vice-versa.  In fact, the lower bound instance in Theorem~\ref{thm:lowerbound} is constructed by applying such a semi-random perturbation given by Lemma~\ref{lem:movingcenter} to the points in a cluster, along a carefully picked direction  so that $m=\Omega(d/\Delta^4)$ points are misclassified per cluster. 


The main algorithmic contribution of the paper is an analysis of the Lloyd's iterative algorithm when the points come from the semi-random GMM model. The key is to understand the number of points that can be misclassified in an intermediate step of the Lloyd's iteration. We show in Lemma~\ref{lem:badpoints} that if in the current iteration of the Lloyd's algorithm, each of the current estimates of the means $\mu'_i$ is within $\tau \sigma$ from $\mu_i$, then the number of misclassified points by the current iteration of Lloyd's iteration is at most $\tilde{O}(kd \tau^2/ \Delta^4)$. This relies crucially on Lemma~\ref{lem:baddirections} which upper bounds the number of points $x$ in a cluster $C_i$ s.t. $(x-\mu_i)$ has a large inner product along any (potentially bad) direction $\ehat$. 

The effect of these bad points has to be carefully accounted for when analyzing both stages of the algorithm -- the initialization phase, and the iterative algorithm. 
Proposition~\ref{prop:initialization-strong} argues about the closeness of the initial centers to the true means. As in~\cite{KK10}, these initial centers are obtained via  a boosting technique that first maps the points to an expanded feature space and then uses the ($k$-SVD + $k$-means approximation) to get initial centers. When using this approach for semi-random data one needs to carefully argue about how the set of bad points behave in the expanded feature space. This is done in Lemmas~\ref{lem:boost-mean-separation} and ~\ref{lem:boosting-spectral-norm}.
Given the initial centers, it is not hard to see that the analysis of ~\cite{AS12} can be carried out to argue about the improvements made in the Lloyd's iterative step; however, this leads to a bound that is sub-optimal by the factor of $O(k^2)$. Instead, we perform a much finer analysis for the semi-random model to control the effect of the bad points and achieve nearly optimal error bounds. This is done in Lemma~\ref{lem:lloyds-iterate}.

\subsection{Related Work} \label{sec:relatedwork}

There has been a long line of algorithmic results on Gaussian mixture models starting from ~\cite{Tei61,Tei67, Pea94}. These results fall into two broad categories: 
{\em (1) Clustering algorithms}, which aim to recover the component/cluster memberships of the points and {\em (2) Parameter estimation}, where the goal is to estimate the parameters of the Gaussian components. When the components of the mixture are sufficiently well-separated, i.e., $\norm{\mu_i - \mu_j}_2 \ge \sigma \sqrt{\log (Nk)}$, then the Gaussians do not overlap w.h.p., and then the two tasks become equivalent w.h.p. We now review the different algorithms that have been designed for these two tasks, and comment on their robustness to semi-random perturbations.      

\paragraph{Clustering Algorithms.} The first polynomial time algorithmic guarantees were given by Dasgupta~\cite{Das99}, who showed how to cluster a mixture of $k$ Gaussians with identical covariance matrices when the separation between the cluster means is of the order $\Omega(\sigma \sqrt{d} \text{polylog}(N) )$, 
where $\sigma$ denotes the maximum variance of any cluster along any direction\footnote{The $\text{polylog}(N)$ term involves a dependence of either $(\log N)^{1/4}$ or $(\log N)^{1/2}$. }. 
Distance-based clustering algorithms that are based on strong distance-concentration properties of high-dimensional Gaussians improved the separation requirement between means $\mu_i$ and $\mu_j$ to be $\Omega(d^{1/4} \text{polylog}(N)) (\sigma_i+\sigma_j)$ ~\cite{AK01,DS07}, where $\sigma_i$ denotes the maximum variance of points in cluster $i$ along any direction. Vempala and Wang~\cite{VW04} and subsequent results~\cite{KSV08,AM05} used PCA to project down to $k$ dimensions (when $k \le d$), and then used the above distance-based algorithms to get state-of-the-art guarantees for many settings: for spherical Gaussians a separation of roughly $\norm{\mu_i - \mu_j}_2 \ge (\sigma_i+\sigma_j) \min\set{k,d}^{1/4} \text{polylog}(N)$ suffices~\cite{VW04}. For non-spherical Gaussians, a separation of $\norm{\mu_i-\mu_j}_2 \ge (\sigma_i+\sigma_j) k^{3/2} \sqrt{\log N}$ is known to suffice~\cite{AM05,KSV08}. Brubaker and Vempala~\cite{BV08} gave a qualitative improvement on the separation requirement for non-spherical Gaussians by having a dependence only on the variance along the direction of the line joining the respective means, as opposed to the maximum variance along any direction.  

Recent work has also focused on provable guarantees for heuristics such as the Lloyd's algorithm for clustering mixtures of Gaussians~\cite{KK10, AS12}. Iterative algorithms like the Lloyd's algorithm (also called  $k$-means algorithm)~\cite{lloyd1982least} and its variants like $k$-means++~\cite{AV07} are the method-of-choice for clustering in practice. 
The best known guarantee~\cite{AS12} along these lines requires a separation of order $\sigma \sqrt{k \log N}$ between any pair of means, where $\sigma$ is the maximum variance among all clusters along any direction. To summarize, for a mixture of $k$ Gaussians in $d$ dimensions with variance of each cluster being bounded by $\sigma^2$ in every direction, the state-of-the-art guarantees require a separation of roughly $\sigma \min\set{k,d}^{1/4} \text{polylog}(N)$ between the means of any two components~\cite{VW04} for spherical Gaussians, while a separation of $\sigma \sqrt{\min\set{k,d} \log N}$ is known to suffice for non-spherical Gaussians~\cite{AS12}. 


The techniques in many of the above works rely on strong distance concentration properties of high-dimensional Gaussians. For instance, the arguments of~\cite{AK01,VW04} that obtain a separation of order $\min\set{k^{1/4}, d^{1/4}}$ crucially rely on the tight concentration of the squared distance around $\sigma^2(d \pm c\sqrt{d})$, between any pair of points in the same cluster. 
These arguments do not seem to carry over to the semi-random model. Brubaker~\cite{Bru} gave a robust algorithm for clustering a mixture of Gaussians when at most $o(1/k)$ fraction of the points are corrupted arbitrarily. However, it is unclear if the arguments can be modified to work under the semi-random model, since the perturbations can potentially affect all the points in the instance. On the other hand, our results show that the Lloyd's algorithm of Kumar and Kannan~\cite{KK10} is robust to these semi-random perturbations.


\paragraph{Parameter Estimation.} A different approach is to design algorithms that estimate the parameters of the underlying Gaussian mixture model, and then assuming the means are well separated, accurate clustering can be performed. A very influential line of work focuses on the method-of-moments~\cite{KMV10,MV10,BS10} to learn the parameters of the model when the number of clusters $k=O(1)$. Moment methods (necessarily) require running time (and sample complexity) of roughly $d^{O(k^2)}$, but do not assume any explicit separation between the components of the mixture. 
Recent work~\cite{HK13,BCV,GVX14,BCMV,ABGRV14,GHK} uses uniqueness of tensor decompositions (of order $3$ and above) to implement the method of moments and give polynomial time algorithms assuming the means are sufficiently high dimensional, and do not lie in certain degenerate configurations~\cite{HK12,GVX14,BCMV,ABGRV14,GHK}.

Algorithmic approaches based on method-of-moments and tensor decompositions rely heavily on the exact parametric form of the Gaussian distribution and the exact algebraic expressions to express various moments of the distribution in terms of the parameters. These algebraic methods can be easily foiled by a monotone adversary, since the adversary can perturb any subset to alter the moments significantly (for example, even the first moment, i.e., the mean of a cluster, can change by $\Omega(\sigma)$). 

Recent work has also focused on provable guarantees for heuristics such as Maximum Likelihood estimation and the Expectation Maximization (EM) algorithm for parameter estimation~\cite{DS07,BWY, Hsuetal16, DTZ16}.
Very recently, \cite{RV17} considered other iterative algorithms for parameter estimation, and studied the optimal order of separation required for parameter estimation. However, we are not aware of any existing analysis that shows that these iterative algorithms for parameter estimation are robust to modeling errors.

\anote{Changes.}
Another recent line of exciting work concerns designing robust high-dimensional estimators of the mean and covariance of a single Gaussian (and mixtures of $k$ Gaussians) when an $\eps=\Omega_k(1)$ fraction of the points are adversarially corrupted~\cite{DKKLMS,LRV16,CSV17}. However, these results and similar results on agnostic learning do not necessarily recover the ground-truth clustering. Further, they typically assume that only a $o(1/k)$ fraction of the points are corrupted, while potentially all the points could be perturbed in the semi-random model. On the other hand, our work does not necessarily give guarantees for estimating the means of the original Gaussians (in fact the centers given by the planted clustering in the semi-random instance can be $\Omega(\sigma)$ far from the original means). Hence, our semi-random model is incomparable to the model of robustness considered in these works.   

\paragraph{Semi-random models for other optimization problems.} 
There has been a long line of work on the study of semi-random models for various optimization problems. Blum and Spencer~\cite{BS92} initiated the study of semi-random models, and studied the problem of graph coloring. Feige and Kilian~\cite{FK99} considered semi-random models involving monotone adversaries for various problems including graph partitioning, independent set and clique. Makarychev et al.~\cite{MMV12, MMV14} designed algorithms for more general semi-random models for various graph partitioning problems. The work of~\cite{MPW15} studied the power of monotone adversaries in the context of community detection (stochastic block models), while \cite{MMVSBM} considered the robustness of community detection to monotone adversaries and different kinds of errors and model misspecification. Semi-random models have also been studied for correlation clustering~\cite{MS10,MMVCC}, noisy sorting~\cite{MMVfas} and coloring~\cite{DF16}.

\section{Preliminaries and Semi-random model}\label{sec:model}
\anote{Maybe introduce $k$-means problem, Voronoi partition and locally optimal solution for $k$-means? 
Also introduce GMM model more formally?}
We first formally define the Gaussian mixture model.
\begin{definition} (Gaussian Mixture Model). A Gaussian mixture model with $k$ components is defined by the parameters $(\mu_1, \mu_2, \ldots \mu_k, \Sigma_1, \ldots, \Sigma_k, w_1, \ldots, w_k)$. Here $\mu_i \in \mathbb{R}^d$ is the mean for component $i$ and $\Sigma_i \in \mathbb{S}^{d}_{+}$ is the corresponding $d \times d$ covariance matrix. $w_i \in [0,1]$ is the mixing weight and we have that $\sum_{i=1}^k w_i=1$. An instance $\calX = \set{x^{(1)},\dots, x^{(N)}}$ from the mixture is generated as follows: for each $t \in [N]$, sample a component $i \in [k]$ independently at random with probability $w_i$. Given the component, sample $x^{(t)}$ from $\mathcal{N}(\mu_i, \Sigma_i)$. The $N$ points can be naturally partitioned into $k$ clusters $C_1, \ldots, C_k$ where cluster $C_i$ corresponds to the points that are sampled from component $i$. We will refer to this as the {\em planted clustering} or {\em ground truth clustering}.
\end{definition}
Clustering data from a mixture of Gaussians is a natural average-case model for the $k$-means clustering problem. Specifically, if the means of a Gaussian mixture model are well separated, then with high probability, the ground truth clustering of an instance sampled from the model corresponds to the $k$-means optimal clustering.
\begin{definition} ($k$-means clustering). Given an instance $\calX=\set{x^{(1)},\dots, x^{(N)}}$ of $N$ points in $\mathbb{R}^d$, the $k$-means problems is to find $k$ points $\mu_1, \ldots, \mu_k$ such as to minimize $\sum_{t \in [N]} \min_{i \in [k]} \|x^{(t)}-\mu_i\|^2$.
\end{definition}
The optimal means or centers $\mu_1, \ldots, \mu_k$ naturally define a clustering of the data where each point is assigned to its closest cluster. A key property of the $k$-means objective is that the optimal solution induces a locally optimal clustering.
\begin{definition} (Locally Optimal Clustering). A clustering $C_1, \ldots, C_k$ of $N$ data points in $\mathbb{R}^d$ is locally optimal if for each $i \in [k], x^{(t)} \in C_i$, and $j \neq i$ we have that $\|x^{(t)}-\mu(C_i)\| \leq \|x^{(t)}-\mu_j\|$. Here $\mu(C_i)$ is the average of the points in $C_i$.
\end{definition}

Hence, given the optimal $k$-means clustering, the optimal centers can be recovered by simply computing the average of each cluster. This is the underlying principle behind the popular Lloyd's algorithm~\cite{lloyd1982least} for $k$-means clustering. The algorithm starts with a choice of initial centers. It then repeatedly computes new centers to be the average of the clusters induced by the current centers. Hence the algorithm converges to a locally optimal clustering. Although popular in practice, the worst case performance of Lloyd's algorithm can be arbitrarily bad~\cite{arthur2005worst}. 
\pnote{Changes.}The choice of initial centers is very important in the success of the Lloyd's algorithm. We show that our theoretical guarantees hold when the initialization is done via the popular $k$-means++ algorithm~\cite{AV07}. There also exist more sophisticated constant factor approximation algorithms for the $k$-means problem~\cite{kanungo2002local,AhmadianNSW16} that can be used for seeding in our framework.

While the clustering $C_1, C_2, \dots, C_k$ typically represents a partition of the index set $[N]$, we will sometimes abuse notation and use $C_i$ to also denote the set of points in $\calX$ that correspond to these indices in $C_i$. Finally, many of the statements are probabilistic in nature depending on the randomness in the semi-random model. In the following section, w.h.p. will refer to a probability of at least $1-o(1)$ (say $1-1/\text{poly}(N)$), unless specified otherwise.   

\subsection{Properties of Semi-random Gaussians}
\label{sec:prelim-sr}
In this section we state and prove properties of semi-random mixtures that will be used throughout the analysis in the subsequent sections. 
We first start with a couple of simple lemmas that follow directly from the corresponding lemmas about high dimensional Gaussians.

\begin{lemma}\label{lem:sr:length}
Consider any semi-random instance $\calX=\set{x^{(1)},\dots, x^{(N)}}$ with parameters $\mu_1, \dots, \mu_k, \sigma^2$ and clusters $C_1, \dots, C_k$. Then with high probability we have
\begin{equation}\label{eq:sr:length}
\forall i \in [k], ~\forall \ell \in C_i,~~ \norm{x^{(t)}-\mu_i }_2 \le \sigma(\sqrt{d}+2\sqrt{\log N})  .
\end{equation}
\end{lemma}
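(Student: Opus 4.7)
The key observation is that the semi-random perturbation in Definition~\ref{def:srmodel} can only \emph{decrease} the distance to the component mean: by construction, $x^{(t)} = \mu_i + \lambda(y^{(t)} - \mu_i)$ for some $\lambda \in [0,1]$, so
\[
\norm{x^{(t)} - \mu_i}_2 \;=\; \lambda\,\norm{y^{(t)} - \mu_i}_2 \;\le\; \norm{y^{(t)} - \mu_i}_2.
\]
Therefore it suffices to prove the claimed bound for the underlying Gaussian samples $y^{(t)}$, since the adversarial contraction preserves it. This also means the dependencies introduced by the adversary play no role in this particular lemma — we only need a high-probability bound on the original i.i.d.\ Gaussian points, which is a standard fact.

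The plan is as follows. Fix $i\in[k]$ and $t \in C_i$. Write $y^{(t)}-\mu_i = \Sigma_i^{1/2} z$ with $z\sim \mathcal{N}(0,I_d)$; since $\norm{\Sigma_i}\le \sigma^2$, we have $\norm{y^{(t)}-\mu_i}_2 \le \sigma \norm{z}_2$. The map $z \mapsto \norm{z}_2$ is $1$-Lipschitz, so by Gaussian concentration (or directly by a standard $\chi^2$-tail bound),
\[
\Pr\bigl[\norm{z}_2 \ge \sqrt{d} + t\bigr] \;\le\; \Pr\bigl[\norm{z}_2 \ge \E\norm{z}_2 + t\bigr] \;\le\; e^{-t^2/2},
\]
using $\E\norm{z}_2 \le \sqrt{\E\norm{z}_2^2}=\sqrt{d}$. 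Choosing $t = 2\sqrt{\log N}$ yields $\norm{z}_2 \le \sqrt{d}+2\sqrt{\log N}$ with probability at least $1 - 1/N^2$, which gives $\norm{y^{(t)}-\mu_i}_2 \le \sigma(\sqrt{d}+2\sqrt{\log N})$ with the same probability.

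Finally, I would take a union bound over all $N$ indices $t\in[N]$, obtaining that the bound holds simultaneously for every point with probability at least $1-1/N$. Combining with the contraction inequality above yields (\ref{eq:sr:length}) with high probability. There is no real obstacle here: the semi-random adversary is powerless in this direction, so the lemma reduces to a one-line Gaussian tail bound plus a union bound.
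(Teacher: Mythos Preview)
Your proposal is correct and follows essentially the same approach as the paper: the contraction $\norm{x^{(t)}-\mu_i}_2 \le \norm{y^{(t)}-\mu_i}_2$ reduces the claim to a Gaussian norm tail bound followed by a union bound over the $N$ points. The only cosmetic difference is that the paper invokes a $\chi^2$-type tail (Lemma~\ref{lem:lengthconc}) via a coupling to a spherical Gaussian, whereas you use Gaussian Lipschitz concentration on $\norm{z}_2$; both yield the same $\sigma(\sqrt{d}+2\sqrt{\log N})$ bound with failure probability $O(1/N)$.
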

\begin{proof}
Let $y^{(t)}$ denote the point generated in the semi-random model in step 2 (Definition~\ref{def:srmodel}) before the semi-random perturbation was applied. Let $\xbar^{(t)}=x-\mu_i, ~\ybar^{(t)}=y-\mu_i$ where $t \in C_i$. We have 
$$\forall i \in [k], \forall t \in C_i,~~ \norm{\xbar^{(t)}}_2 \le  \norm{\ybar^{(t)}}_2 \le \sigma(\sqrt{d}+2\sqrt{\log N}),$$
from Lemma~\ref{lem:amphelp:length}. 
\end{proof}

\begin{lemma}\label{lem:sr:innerprod}
Consider any semi-random instance $\calX=\set{x^{(1)},\dots, x^{(N)}}$ with parameters $\mu_1, \dots, \mu_k, \sigma^2$ and clusters $C_1, \dots, C_k$, and let $u$ be a fixed unit vector in $\R^d$. Then with probability at least $(1-1/(N^3))$ we have
\begin{equation}\label{eq:sr:innerprod}
\forall i \in [k], t \in C_i, ~~ \abs{\iprod{x^{(\ell)}-\mu_i,u}} < 3\sigma\sqrt{\log N} .
\end{equation}
\end{lemma}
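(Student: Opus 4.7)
The plan is to reduce the statement for the semi-random points $x^{(t)}$ to the corresponding statement for the underlying Gaussian samples $y^{(t)}$, which is then a routine Gaussian tail bound.

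First I would recall the structure of the semi-random model: by Definition~\ref{def:srmodel}, for each $t \in C_i$ there exists some $\lambda_t \in [0,1]$ (possibly chosen adversarially, possibly depending on all other samples) such that
\[
x^{(t)} - \mu_i = \lambda_t \bigparen{y^{(t)} - \mu_i},
\]
where $y^{(t)} \sim \mathcal{N}(\mu_i, \Sigma_i)$ independently across $t$, and $\norm{\Sigma_i} \le \sigma^2$. Fixing the unit vector $u$ in advance (this is crucial since $u$ is not allowed to depend on the randomness), we get
\[
\iprod{x^{(t)} - \mu_i, u} \;=\; \lambda_t \iprod{y^{(t)} - \mu_i, u},
\]
so $\abs{\iprod{x^{(t)} - \mu_i, u}} \le \abs{\iprod{y^{(t)} - \mu_i, u}}$. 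Therefore it suffices to prove the bound for the Gaussian variables $y^{(t)}$, where the adversary plays no role.

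Next I would apply a standard one-dimensional Gaussian tail bound. Since $u$ is fixed and $y^{(t)} - \mu_i \sim \mathcal{N}(0, \Sigma_i)$, the scalar $\iprod{y^{(t)} - \mu_i, u}$ is a centered Gaussian with variance $u^\transpose \Sigma_i u \le \norm{\Sigma_i} \le \sigma^2$. Hence
\[
\Pr\Bigbrac{\Bigabs{\iprod{y^{(t)} - \mu_i, u}} \ge 3\sigma\sqrt{\log N}} \;\le\; 2\exp\Paren{-\tfrac{9\log N}{2}} \;\le\; \tfrac{2}{N^{4}}.
\]
Finally I would take a union bound over all $N$ points $t \in [N]$, yielding failure probability at most $2/N^{3}$, which is at most $1/N^3$ for $N$ large enough (or one can tighten the constant in the exponent slightly to get exactly $1/N^3$ as stated).

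There is no real obstacle here; the only subtle point is to emphasize that $u$ must be a \emph{fixed} direction independent of the random draws, so that one can condition on $u$ and directly use Gaussian concentration before the monotone adversary acts. The adversary's freedom in choosing $\lambda_t$ is harmless because scaling by $\lambda_t \in [0,1]$ can only shrink the inner product in absolute value.
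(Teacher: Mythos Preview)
Your proposal is correct and follows essentially the same argument as the paper: reduce to the underlying Gaussian sample via $x^{(t)}-\mu_i=\lambda_t(y^{(t)}-\mu_i)$ with $\lambda_t\in[0,1]$, observe that $\iprod{y^{(t)}-\mu_i,u}$ is a one-dimensional Gaussian with variance $u^\transpose\Sigma_i u\le\sigma^2$, apply the standard tail bound, and union bound over the $N$ points. The paper's proof is the same up to the exact constants it quotes from its appendix tail-bound lemma.
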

\begin{proof}
Let $y^{(t)}$ denote the point generated in the semi-random model in step 2 (Definition~\ref{def:srmodel}) before the semi-random perturbation was applied. Let $\xbar^{(t)}=x-\mu_i, ~\ybar^{(t)}=y-\mu_i$ where $t \in C_i$. 

Consider the sample $t \in C_i$. Let $\Sigma_i$ be the covariance matrix of $i$th Gaussian component; hence $\norm{\Sigma_i} \le \sigma$. The projection $\iprod{\ybar^{(t)},u}$ is a Gaussian with mean $0$ and variance $u^T \Sigma_i u \le \sigma^2$. From Lemma~\ref{lem:gaussian1d} 
$$\Pr\Big[|\iprod{\xbar^{(t)}, u}| \ge 3 \sigma \sqrt{\log N} \Big] \le \Pr\Big[|\iprod{\ybar^{(t)}, u}| \ge 3 \sigma \sqrt{\log N} \Big] \le \exp(-4 \log N) \le N^{-4}.$$
Hence from a union bound over all $N$ samples, the lemma follows. 
\end{proof}

The above lemma immediately implies the following lemma after a union bound over the $k^2<N^2$ directions given by the unit vectors along $(\mu_i-\mu_j)$ directions.

\begin{lemma}\label{lem:sr:innerprod:means}
Consider any semi-random instance $\calX=\set{x^{(1)},\dots, x^{(N)}}$ with parameters $\mu_1, \dots, \mu_k, \sigma^2$ and clusters $C_1, \dots, C_k$
. Then with high probability we have
\begin{equation}\label{eq:sr:innerprod:means}
\forall i \in [k], t \in C_i, ~~ \Abs{\inner[\Big]{x^{(\ell)}-\mu_i,\frac{\mu_i - \mu_j}{\norm{\mu_i-\mu_j}_2}}} < 3\sigma\sqrt{\log N} .
\end{equation}
\end{lemma}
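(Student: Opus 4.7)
The plan is to derive this directly from Lemma~\ref{lem:sr:innerprod} via a union bound over the polynomially many relevant directions. The previous lemma gives, for \emph{any fixed} unit vector $u \in \R^d$, that with probability at least $1 - 1/N^3$, every point $x^{(t)}$ in every cluster $C_i$ satisfies $|\iprod{x^{(t)} - \mu_i, u}| < 3\sigma\sqrt{\log N}$. The current lemma merely asks for the same bound, but restricted to the particular directions $u_{ij} \coloneqq (\mu_i - \mu_j)/\norm{\mu_i - \mu_j}_2$. These directions depend only on the (deterministic) parameters $\mu_1,\dots,\mu_k$ of the model, so each such $u_{ij}$ is a fixed unit vector in $\R^d$ to which Lemma~\ref{lem:sr:innerprod} applies verbatim.

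The steps I would carry out are: (i) fix the collection of at most $k(k-1) < k^2$ unit vectors $\{u_{ij}\}_{i \ne j \in [k]}$; (ii) apply Lemma~\ref{lem:sr:innerprod} to each $u_{ij}$, obtaining a bad event of probability at most $1/N^3$ per direction; (iii) union bound over all these directions. Since $k^2 \le N^2$ in the regime of interest (indeed, the results of the paper assume $N$ is a sufficiently large polynomial in $d$ and $k$), the total failure probability is bounded by $k^2 \cdot N^{-3} \le N^{-1}$, which is $o(1)$. On the complement of this bad event, the desired inequality holds simultaneously for every pair $i \ne j$, every $i \in [k]$, and every $t \in C_i$.

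There is no real obstacle here; the lemma is essentially a corollary of the previous one, as noted by the sentence preceding the statement in the excerpt. The only thing worth a brief sanity check is the observation that the $u_{ij}$ are indeed \emph{deterministic} unit vectors (they are functions of the model parameters, not of the random draws $y^{(t)}$ or the adversary's choices), which is what allows Lemma~\ref{lem:sr:innerprod} to be applied as a black box without any conditioning issues.
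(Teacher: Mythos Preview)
Your proposal is correct and matches the paper's approach exactly: the paper states that the lemma follows immediately from Lemma~\ref{lem:sr:innerprod} by a union bound over the at most $k^2 < N^2$ fixed unit directions $(\mu_i-\mu_j)/\norm{\mu_i-\mu_j}_2$, which is precisely what you do.
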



We next state a lemma about how far the mean of the points in a component of a semi-random GMM can move away from the true parameters.
\begin{lemma}
\label{lem:semirandom-mean}
Consider any semi-random instance $\mathcal{X}$ with $N$ points generated with parameters $\mu_1, \ldots, \mu_k, C_1, \ldots , C_k$ such that $N_i  \geq 4(d+\log(\frac k \delta))$ for all $i \in [k]$. Then with probability at least $1-\delta$ we have that 
\begin{equation}\label{eq:sr:distance:means}
\forall i \in [k], \norm{\frac 1 {|C_i|}\sum_{x \in C_i} x - \mu_i }_2 \leq 2\sigma.
\end{equation}
\end{lemma}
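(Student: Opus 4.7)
Let $y^{(t)}$ for $t\in C_i$ denote the unperturbed Gaussian samples from step~2 of Definition~\ref{def:srmodel}, so that $x^{(t)} = \mu_i + \lambda_t(y^{(t)}-\mu_i)$ with $\lambda_t \in [0,1]$. Setting $v_i := \frac{1}{N_i}\sum_{t \in C_i}(x^{(t)}-\mu_i) = \frac{1}{N_i}\sum_{t \in C_i}\lambda_t(y^{(t)}-\mu_i)$, the task reduces to bounding $\norm{v_i}_2 \le 2\sigma$. The main difficulty is that the adversary may choose the $\lambda_t$ in a correlated, sample-dependent way, so one cannot argue that $v_i$ is a sum of independent mean-zero terms. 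Still, since $\lambda_t \in [0,1]$, the adversary can only shrink each coordinate contribution in absolute value, which is the key property to exploit.

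The first step is to pass from the adversarially-weighted sum to an adversary-free quantity. For an arbitrary unit vector $u \in \R^d$, the triangle inequality together with $\lambda_t \in [0,1]$ gives
\[
|\iprod{v_i,u}| \;\le\; \frac{1}{N_i}\sum_{t \in C_i}|\iprod{y^{(t)}-\mu_i,\,u}|.
\]
Applying Cauchy--Schwarz to the right-hand side yields $|\iprod{v_i,u}|^2 \le u^{T} M_i u$, where $M_i := \frac{1}{N_i}\sum_{t \in C_i}(y^{(t)}-\mu_i)(y^{(t)}-\mu_i)^{T}$ is the empirical covariance of the Gaussian samples in cluster $i$. Taking the supremum over unit $u$ gives the deterministic bound $\norm{v_i}_2 \le \sqrt{\norm{M_i}_{\mathrm{op}}}$, valid for \emph{any} choice of the adversary's parameters $\lambda_t$.

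The second step is a standard concentration bound for the spectral norm of a Gaussian sample covariance. Writing $y^{(t)}-\mu_i = \Sigma_i^{1/2} g^{(t)}$ with $g^{(t)} \sim \mathcal{N}(0,I_d)$ i.i.d., we have $\norm{M_i}_{\mathrm{op}} \le \norm{\Sigma_i}_{\mathrm{op}} \cdot \norm{\tfrac{1}{N_i} G G^{T}}_{\mathrm{op}}$, where $G$ is the $d \times N_i$ matrix with columns $g^{(t)}$. Using the standard bound $\sigma_{\max}(G) \le \sqrt{N_i}+\sqrt{d}+s$ with probability at least $1-2e^{-s^2/2}$ and choosing $s = \sqrt{2\log(2k/\delta)}$, the hypothesis $N_i \ge 4(d+\log(k/\delta))$ implies both $\sqrt{d/N_i} \le 1/2$ and $s/\sqrt{N_i}$ is $O(1)$, from which $\norm{M_i}_{\mathrm{op}} \le 4\sigma^2$ holds with probability at least $1-\delta/k$. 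Hence $\norm{v_i}_2 \le 2\sigma$ for that cluster. A union bound over $i \in [k]$ then gives the lemma with failure probability at most $\delta$.

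The main obstacle is pinning down constants to recover the precise factor of $2$ in the claimed bound, since the Cauchy--Schwarz step is slack whenever the $|\iprod{y^{(t)}-\mu_i,u}|$ are non-uniform. If the constants from the generic spectral-norm bound do not line up, I would instead bound $\frac{1}{N_i}\sum_t |\iprod{y^{(t)}-\mu_i,u}|$ directly for each $u$: its expectation is at most $\sigma\sqrt{2/\pi} \approx 0.8\,\sigma$, and sub-Gaussian concentration over an $\epsilon$-net of the unit sphere (of size $(3/\epsilon)^d$) gives a sharper constant under the same sample-size assumption. Either route yields the required $O(\sigma)$ bound; only the bookkeeping to reach the stated constant $2$ requires care.
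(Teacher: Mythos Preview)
Your proposal is correct and follows essentially the same route as the paper: both arguments reduce the adversarial mean shift to the operator norm of the unperturbed Gaussian sample matrix (equivalently, the sample covariance $M_i$), then invoke standard spectral concentration for Gaussian matrices to get $\norm{M_i}_{\mathrm{op}}\le 4\sigma^2$. The paper packages step one as the matrix-norm inequality $\bignorm{\tfrac{1}{N_i}A_i D\mathbf{1}}\le \tfrac{1}{N_i}\norm{A_i}\norm{D}\norm{\mathbf{1}}\le \norm{A_i}/\sqrt{N_i}$ (using $\norm{D}\le 1$), which is exactly your triangle-inequality-plus-Cauchy--Schwarz argument written in matrix form.
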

\begin{proof}
For each point $x \in C_i$ in the semi-random GMM, let $y_x$ be the original point in the GMM that is modified to produce $x$. Then, we know that $x-\mu_i = \lambda_x(y_x-\mu_i)$ where $\lambda_x \in [0,1]$. Hence, $\frac 1 {|C_i|} \sum_{x \in C_i} (x-\mu_i) = \frac 1 {|C_i|} A_i D v$, where $A_i$ is the matrix with columns as $(y_x-\mu_i)$ for $x \in C_i$, $D$ is a diagonal matrix with values $\lambda_x$, and $v$ is a unit length vector in the direction of $\frac 1 {|C_i|} \sum_{x \in C_i} (x-\mu_i)$. Then, we have that $\|\frac 1 {|C_i|} \sum_{x \in C_i} x - \mu_i\| = \|\frac 1 {|C_i|} A_i D v\| \leq \frac 1 {|C_i|} \|A\| \leq 2\sigma$~(from \ref{lem:Gaussian-variance-concentration}).
\end{proof}

The next lemma argues about the variance of component $i$ around $\mu_i$ in a semi-random GMM.
\begin{lemma}
\label{lem:semirandom-variance}
Consider any semi-random instance $\mathcal{X}$ with $N$ points generated with parameters $\mu_1, \ldots, \mu_k, C_1, \ldots , C_k$ such that $N_i  \geq 4(d+\log(\frac k \delta))$ for all $i \in [k]$. Then with probability at least $1-\delta$ we have that 
\begin{equation}\label{eq:sr:variance}
\forall i \in [k], \max_{v:\|v\|=1} \frac 1 {|C_i|}\sum_{x \in C_i}\abs{\iprod{x - \mu_i, v}}^2 \leq 4\sigma^2.
\end{equation}
\end{lemma}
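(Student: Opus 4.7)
The plan is to mimic the argument of Lemma~\ref{lem:semirandom-mean}, reducing the semi-random case to the pure Gaussian case, and then invoking the spectral concentration bound for a Gaussian sample matrix that was just used (Lemma~\ref{lem:Gaussian-variance-concentration}).

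First, I would fix a cluster index $i \in [k]$ and, for each $x \in C_i$, let $y_x$ denote the original GMM point that was shifted to produce $x$ in Step~3 of Definition~\ref{def:srmodel}, so that $x - \mu_i = \lambda_x(y_x - \mu_i)$ with $\lambda_x \in [0,1]$. The crucial monotonicity observation is that for every unit vector $v \in \R^d$,
\[
\bigl|\iprod{x-\mu_i,\,v}\bigr|^2 \;=\; \lambda_x^2\,\bigl|\iprod{y_x-\mu_i,\,v}\bigr|^2 \;\le\; \bigl|\iprod{y_x-\mu_i,\,v}\bigr|^2,
\]
because $\lambda_x \le 1$. Summing over $x \in C_i$ and taking a supremum over unit $v$ reduces the problem to bounding the empirical second moment of the original Gaussian sample.

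Next, form the $d \times |C_i|$ matrix $A_i$ whose columns are $y_x - \mu_i$ for $x \in C_i$, exactly as in the proof of Lemma~\ref{lem:semirandom-mean}. Then
\[
\max_{\|v\|=1}\;\frac{1}{|C_i|}\sum_{x \in C_i} \bigl|\iprod{y_x-\mu_i,\,v}\bigr|^2 \;=\; \frac{1}{|C_i|}\,\max_{\|v\|=1}\,\|A_i^\transpose v\|_2^{\,2} \;=\; \frac{\|A_i\|^2}{|C_i|}.
\]
Since the columns of $A_i$ are i.i.d.\ centered Gaussians with covariance $\Sigma_i$ satisfying $\|\Sigma_i\|\le \sigma^2$, and because we assumed $N_i \ge 4(d+\log(k/\delta))$, Lemma~\ref{lem:Gaussian-variance-concentration} gives $\|A_i\| \le 2\sigma\sqrt{|C_i|}$ with probability at least $1-\delta/k$. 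Combining the two inequalities yields $\max_{\|v\|=1}\frac{1}{|C_i|}\sum_{x \in C_i}|\iprod{x-\mu_i,v}|^2 \le 4\sigma^2$, and a union bound over the $k$ clusters gives the claim with probability at least $1-\delta$.

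There is essentially no obstacle here: the monotonicity step is a one-line calculation, and the spectral bound on $A_i$ is imported verbatim from the same auxiliary lemma already invoked in Lemma~\ref{lem:semirandom-mean}. The only mildly subtle point worth highlighting is that taking the maximum over $v$ and the truncation by $\lambda_x$ commute favorably, so the same direction $v^\star$ can be used for both the semi-random and Gaussian empirical variances, which is what makes the pointwise bound $\lambda_x^2 \le 1$ sufficient rather than needing any more delicate argument about the dependence of $\lambda_x$ on $y_x$.
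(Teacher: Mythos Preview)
Your proposal is correct and follows essentially the same route as the paper: both reduce to the original Gaussian points via the pointwise inequality $\lambda_x^2\le 1$, then invoke the spectral norm bound on the centered Gaussian sample matrix (Lemma~\ref{lem:Gaussian-variance-concentration}) together with a union bound over the $k$ clusters. Your write-up is in fact a bit more explicit than the paper's, but the argument is identical.
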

\begin{proof}
Exactly as in the proof of Lemma~\ref{lem:semirandom-mean}, we can write $\max_{v:\|v\|=1} \frac 1 {|C_i|}\sum_{x \in C_i}\abs{\iprod{x - \mu_i,v}}^2 = \max_{v:\|v\|=1} \frac 1 {|C_i|}\sum_{x \in C_i}\abs{\lambda^2_x\iprod{y_x - \mu_i,v}}^2 \leq \max_{v:\|v\|=1} \frac 1 {|C_i|}\sum_{x \in C_i}\abs{\iprod{y_x - \mu_i, v}}^2$. Furthermore, since $y_x$ are points from a Gaussian we know that with probability at least $1-\delta$, for all $i \in [k]$, $\max_{v:\|v\|=1} \|\frac 1 {|C_i|} \sum_{x \in C_i} \abs{\iprod{y_x - \mu_i, v}}^2 \leq 4\sigma^2$. Hence, the claim follows. 
\end{proof}

We would also need to argue about the mean of a large subset of points from a component of a semi-random GMM.
\begin{lemma}
\label{lem:semi-random-mean-of-subset}
Consider any semi-random instance $\mathcal{X}$ with $N$ points generated with parameters $\mu_1, \ldots, \mu_k$ and  planted clustering $C_1, \ldots , C_k$ such that $N_i  \geq 16(d+\log(\frac k \delta))$ for all $i \in [k]$. Let $G_i \subseteq C_i$ be such that $|G_i| \geq (1-\epsilon)|C_i|$ where $\epsilon < \frac 1 2$. Then, with probability at least $1-\delta$, we have that 
\begin{equation}\label{eq:sr:mean:subset}
\forall i \in [k], \|\mu(G_i) - \mu_i\| \leq (4 + \frac{2}{\sqrt{1-\epsilon}})\sigma.
\end{equation}
\end{lemma}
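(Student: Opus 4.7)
The plan is to decompose $\mu(G_i) - \mu_i$ into a ``bias'' term coming from the full cluster's empirical mean (which is already controlled by Lemma~\ref{lem:semirandom-mean}), and a ``restriction'' term measuring how much the empirical mean moves when we drop the $\le \epsilon|C_i|$ points in $B_i := C_i \setminus G_i$. Concretely, I would write
\[
\mu(G_i) - \mu_i = \bigl(\mu(C_i) - \mu_i\bigr) + \bigl(\mu(G_i) - \mu(C_i)\bigr),
\]
and apply the triangle inequality. By Lemma~\ref{lem:semirandom-mean}, the first piece has norm at most $2\sigma$, so the whole task reduces to bounding $\|\mu(G_i) - \mu(C_i)\|$.

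For the restriction term I would test against an arbitrary unit vector $v$ and recenter around $\mu(C_i)$: since $\mu(G_i) - \mu(C_i) = |G_i|^{-1} \sum_{x \in G_i}(x - \mu(C_i))$, Cauchy--Schwarz gives
\[
\iprod{\mu(G_i) - \mu(C_i), v}^2 \;\le\; \frac{1}{|G_i|} \sum_{x \in G_i} \iprod{x - \mu(C_i), v}^2 \;\le\; \frac{1}{|G_i|} \sum_{x \in C_i} \iprod{x - \mu(C_i), v}^2,
\]
where in the last step I use $G_i \subseteq C_i$. Now the variance identity $\sum_{x \in C_i}\iprod{x-\mu(C_i),v}^2 \le \sum_{x \in C_i}\iprod{x-\mu_i,v}^2$ followed by Lemma~\ref{lem:semirandom-variance} bounds this by $4\sigma^2 |C_i|$, so $\|\mu(G_i)-\mu(C_i)\|^2 \le 4\sigma^2 |C_i|/|G_i| \le 4\sigma^2/(1-\epsilon)$, i.e.\ $\|\mu(G_i)-\mu(C_i)\| \le 2\sigma/\sqrt{1-\epsilon}$. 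Combining the two pieces via the triangle inequality yields $\|\mu(G_i) - \mu_i\| \le 2\sigma + 2\sigma/\sqrt{1-\epsilon}$, which is comfortably inside the stated bound $(4 + 2/\sqrt{1-\epsilon})\sigma$.

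For the probability bound, I would apply Lemmas~\ref{lem:semirandom-mean} and \ref{lem:semirandom-variance} simultaneously with failure probability $\delta/2$ each and union bound; the hypothesis $N_i \ge 16(d + \log(k/\delta))$ is strong enough to subsume the $4(d + \log(2k/\delta))$ requirements of both lemmas.

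The main conceptual obstacle is that $G_i$ is an adversarially chosen subset of $C_i$, so one cannot simply reuse sample-concentration statements for $G_i$ itself. The key point is that the bound we need is a bound on the \emph{mean} of a large subset of $C_i$; instead of invoking fresh concentration, we exploit the second-moment bound on the full cluster (Lemma~\ref{lem:semirandom-variance}) and convert it into a bound on the subset mean via Cauchy--Schwarz, losing only the worst-case factor $1/\sqrt{1-\epsilon}$ for dropping an $\epsilon$-fraction of points. No new properties of the semi-random perturbation are needed beyond what is already packaged into Lemmas~\ref{lem:semirandom-mean} and \ref{lem:semirandom-variance}.
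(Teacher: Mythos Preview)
Your proposal is correct and follows essentially the same argument as the paper: decompose via the empirical cluster mean $\nu_i = \mu(C_i)$, bound $\|\nu_i - \mu_i\|$ by Lemma~\ref{lem:semirandom-mean}, and bound $\|\mu(G_i) - \nu_i\|$ by Cauchy--Schwarz together with the directional variance bound of Lemma~\ref{lem:semirandom-variance}. Your write-up is in fact slightly cleaner, since you make explicit the variance-minimization step $\sum_{x\in C_i}\iprod{x-\mu(C_i),v}^2 \le \sum_{x\in C_i}\iprod{x-\mu_i,v}^2$ and obtain the sharper constant $2 + 2/\sqrt{1-\epsilon}$ (the paper's ``$4$'' in place of ``$2$'' for the first term appears to be a harmless overcount).
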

\begin{proof}
Let $C_i$ be the set of points in component $i$ and let $\nu_i$ be the mean of the points in $C_i$. Notice that from Lemma~\ref{lem:semirandom-mean} and the fact that the perturbation is semi-random, we have that with probability at least $1-\frac \delta 2$, $\|\nu_i - \mu_i\| \leq 2\sigma$. Also, because the component is a semi-random perturbation of a Gaussian, we have from Lemma~\ref{lem:semirandom-variance} that $\frac 1 {|C_i|} \max_{v: \|v\|=1} \sum_{x \in C_i} [\iprod{x-\nu_i, v}^2] \leq 4\sigma^2$ with probability at least $1-\frac \delta 2$. 

Hence, with probability at least $1-\delta$ we have that $\|\mu(G_i) - \mu_i\| \leq \|\nu_i - \mu_i\| + \|\mu(G_i)-\nu_i\| \leq 4\sigma + \|\mu(G_i)-\nu_i\|$. To bound the second term notice that $\|\mu(G_i) - \nu_i\| = |(\frac 1 {|G_i|} \sum_{x \in G_i} \iprod{x-\nu_i,\hat{u}}|$, where $\hat{u}$ is a unit vector in the direction of $(\mu(G_i)-\nu_i)$. Using Cauchy-Schwarz inequality, this is at most $\frac 1 {\sqrt{|G_i|}} \sqrt{\sum_{x \in C_i}\iprod{x-\nu_i,\hat{u}}^2} \leq \frac{2\sigma}{\sqrt{1-\epsilon}}$. Combining the two bounds gives us the result.
\end{proof}
Finally, we argue about the variance of the entire data matrix of a semi-random GMM.
\begin{lemma}
\label{lem:semirandom-all-variance}
Consider any semi-random instance $\mathcal{X}$ with $N$ points generated with parameters $\mu_1, \ldots, \mu_k, C_1, \ldots , C_k$ such that $N_i  \geq 4(d+\log(\frac k \delta))$ for all $i \in [k]$. Let $A \in \R^{d \times N}$ be the matrix of data points and let $M \in \mathbb{R}^{d \times N}$ be the matrix composed of the means of the corresponding clusters. Then, with probability at least $1-\delta$, we have that 
\begin{equation}\label{eq:sr:variance:all}
\|A-M\| \leq 4\sigma \sqrt{N}.
\end{equation}
\end{lemma}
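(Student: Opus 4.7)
The plan is to reduce the bound on $\|A - M\|$ to a standard spectral norm bound for Gaussian matrices, leveraging the fact that the semi-random perturbation acts as a multiplication by a diagonal matrix with entries in $[0,1]$. Let $Y \in \R^{d \times N}$ denote the matrix whose columns are the pre-perturbation Gaussian samples $y^{(t)}$ from step~2 of Definition~\ref{def:srmodel}. For each $t \in C_i$, the semi-random model guarantees $x^{(t)} - \mu_i = \lambda_t (y^{(t)} - \mu_i)$ for some scalar $\lambda_t \in [0,1]$ chosen (possibly adversarially) by the adversary. Setting $D = \mathrm{diag}(\lambda_1, \dots, \lambda_N)$, we get, column-by-column, the identity
\[
A - M \;=\; (Y - M)\, D.
\]

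Since $D$ is diagonal with entries in $[0,1]$, we have $\|D\| \leq 1$, and by submultiplicativity of the spectral norm,
\[
\|A - M\| \;\leq\; \|Y - M\| \cdot \|D\| \;\leq\; \|Y - M\|.
\]
Now $Y - M$ has independent columns: column $t$ with $t \in C_i$ is distributed as $\mathcal{N}(0, \Sigma_i)$ with $\|\Sigma_i\| \leq \sigma^2$. This is exactly the setting covered by Lemma~\ref{lem:Gaussian-variance-concentration} (which was already invoked in the same form in the proof of Lemma~\ref{lem:semirandom-mean}), and it yields a high-probability bound of the form $\|Y - M\| \leq c \sigma \bigl(\sqrt{N} + \sqrt{d} + \sqrt{\log(k/\delta)}\bigr)$ for a small absolute constant $c$, with failure probability at most $\delta$.

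Finally, I will use the hypothesis $N_i \geq 4(d + \log(k/\delta))$, which implies $N \geq 4(d + \log(k/\delta))$, and hence $\sqrt{d} + \sqrt{\log(k/\delta)} \leq \sqrt{N}$. Plugging this in and tracking the constant yields $\|A - M\| \leq 4\sigma\sqrt{N}$ as desired. The only real subtlety is step~(3), namely invoking a spectral norm bound for a Gaussian matrix whose columns have \emph{different} covariances $\Sigma_i$; but since each $\Sigma_i$ has spectral norm at most $\sigma^2$, the standard proof (via an $\epsilon$-net over $S^{d-1} \times S^{N-1}$ combined with a one-dimensional Gaussian tail bound) goes through unchanged, and this is precisely the form in which the paper's ambient lemma is already stated and used. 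Thus the only genuine step beyond routine application is the decomposition $A - M = (Y-M)D$, which cleanly isolates the effect of the semi-random adversary into a contraction.
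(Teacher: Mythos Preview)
Your argument is correct if $M$ denotes the matrix of the \emph{true} parameters $\mu_i$, which is how you implicitly use it in the identity $A-M=(Y-M)D$. The paper, however, takes $M$ to be the matrix of \emph{empirical} cluster means: its proof introduces a separate matrix $M^*$ for the true means and splits $\|A-M\|\le\|A-M^*\|+\|M^*-M\|$, bounding the first term by $2\sigma\sqrt{N}$ via the per-cluster variance bound of Lemma~\ref{lem:semirandom-variance} (summed over $i$) and the second by $2\sigma\sqrt{N}$ via Lemma~\ref{lem:semirandom-mean}.

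Your global factorization $A-M^*=(Y-M^*)D$ followed by a single spectral-norm bound on the full Gaussian matrix $Y-M^*$ is a cleaner route to the $\|A-M^*\|\le 2\sigma\sqrt N$ half, avoiding the per-cluster detour through Lemma~\ref{lem:semirandom-variance}. But if the intended $M$ is empirical, your key identity $A-M=(Y-M)D$ is simply false (the adversary contracts toward $\mu_i$, not toward the sample average), and you would still need the extra triangle-inequality step plus Lemma~\ref{lem:semirandom-mean} to finish. So the core idea is sound and arguably tidier than the paper's, but be aware that under the paper's reading of the statement your proof is missing that last piece.
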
 
\begin{proof}
Let $M^*$ be the matrix of true means corresponding to the cluster memberships. We can write $\|A-M\| \leq \|A-M^*\| + \|M^*-M\|$. Using Lemma~\ref{lem:semirandom-mean}, we know that with probability at least $1-\frac \delta 2$, $\max_i \|M^*_i-M_i\| \leq 2\sigma$. Hence, $\|M^*-M\| \leq 2\sigma \sqrt{N}$. Furthermore, $\|A-M^*\|^2 = \max_{v:\|v\|=1} \sum_{i} \sum_{x \in C_i} |(x-\mu_i)\cdot v|^2$. From Lemma~\ref{lem:semirandom-variance}, with probability at least $1 - \frac \delta 2$, we can bound the sum by at most $4\sigma^2 N$. Hence, $\| A-M^* \| \leq 2\sigma \sqrt{N}$. Combining the two bounds we get the claim. 
\end{proof}



The following lemma is crucial in analyzing the performance of the Lloyd's algorithm. We would like to upper bound the inner product $\abs{\iprod{x^{(\ell)}-\mu_i,\ehat}} < \lambda \sigma$ for every direction $\ehat$ and sample $\ell \in [N]$, but this is impossible since $\ehat$ can be aligned along $x^{(\ell)}-\mu_i$. The following lemma however upper bounds the total number of points in the dataset that can have a large projection of $\lambda$ (or above) onto any direction $\ehat$ by at most $\tilde{O}(d/\lambda^2)$. This involves a union bound over a net of all possible directions $\ehat$. 

\anote{This lemma is the main place where we lose the extra log factor in the number of points. Could this be improved using some Talagrand-style chaining?}
\begin{lemma}[Points in Bad Directions]\label{lem:baddirections}
Consider any semi-random instance $\calX=\set{x^{(1)}, \dots, x^{(N)}}$ with $N$ points having parameters $\mu_1, \dots, \mu_k, \sigma^2$ and planted clustering $C_1, \dots, C_k$, and suppose $\forall i \in [k], \ell \in C_i, ~ \xbar^{(\ell)}=x^\ell-\mu_i$. Then there exists a universal constant $c>0$ s.t. for any $\lambda >100\sqrt{\log N}$, with probability at least $1- 2^{-d}$, we have that 
\begin{equation} \label{eq:baddirections}
\forall \ehat \in \R^d \text{ s.t. } \norm{\ehat}_2=1, ~~ \Abs{\set{\ell \in [N]: \abs{\iprod{\xbar^{(\ell)}, \ehat}}> \lambda \sigma}} \le  \frac{c d }{\lambda^2}\cdot \max\Bigset{1,\log\parens[\big]{\tfrac{3(\sqrt{d}+2\sqrt{\log N})}{\lambda}}}.
\end{equation}
\end{lemma}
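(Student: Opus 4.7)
The plan is to first reduce to the unperturbed Gaussian samples, then combine a tail bound for a fixed direction with an $\epsilon$-net argument over the unit sphere. Since the semi-random perturbation satisfies $\bar{x}^{(\ell)} = \lambda_\ell \bar{y}^{(\ell)}$ with $\lambda_\ell \in [0,1]$ where $\bar{y}^{(\ell)} = y^{(\ell)} - \mu_i$ is the original Gaussian deviation, we have $|\iprod{\bar{x}^{(\ell)},\ehat}| \le |\iprod{\bar{y}^{(\ell)},\ehat}|$ for every unit $\ehat$. This lets us upper bound the count by the analogous count for the $\bar{y}^{(\ell)}$'s, which are independent and nice. Also, by Lemma~\ref{lem:sr:length}, with high probability $\|\bar{x}^{(\ell)}\|_2 \le L := \sigma(\sqrt{d} + 2\sqrt{\log N})$ for all $\ell$; this will power the net discretization.

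The key technical step is the fixed-direction bound. Fix $\ehat' \in S^{d-1}$. The projection $\iprod{\bar{y}^{(\ell)},\ehat'}$ is centered Gaussian with variance at most $\sigma^2$, so $p := \Pr[|\iprod{\bar{y}^{(\ell)},\ehat'}| > \lambda\sigma/2] \le 2\exp(-\lambda^2/8)$. Hence by a union bound over all $\binom{N}{m}$ $m$-subsets of $[N]$,
\[
\Pr\Bigset{\#\set{\ell : |\iprod{\bar{y}^{(\ell)},\ehat'}| > \lambda\sigma/2} \ge m} \le \binom{N}{m} p^m \le \paren{Nep/m}^m.
\]
Here the hypothesis $\lambda > 100\sqrt{\log N}$ is doing the heavy lifting: it forces $p \le N^{-1000}$, giving enormous slack in the binomial tail.

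Next I would discretize using a $\delta$-net $\Netdel \subset S^{d-1}$ with $\delta = \lambda/(2(\sqrt{d}+2\sqrt{\log N}))$, of size at most $(3/\delta)^d$. For any unit $\ehat$, pick $\ehat' \in \Netdel$ with $\|\ehat-\ehat'\|_2 \le \delta$; then for every $\ell$ with $|\iprod{\bar{x}^{(\ell)},\ehat}| > \lambda\sigma$, the triangle inequality combined with $\|\bar{x}^{(\ell)}\|_2 \le L$ yields
\[
|\iprod{\bar{x}^{(\ell)}, \ehat'}| \ge \lambda\sigma - \delta L = \lambda\sigma/2,
\]
and consequently $|\iprod{\bar{y}^{(\ell)},\ehat'}| \ge \lambda\sigma/2$ as well. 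Therefore, after taking a union bound of the binomial-tail estimate over $\Netdel$, it suffices to pick $m$ so that $(Nep/m)^m \cdot |\Netdel| \le 2^{-d}$, i.e., $m \log(m/(Nep)) \ge d\log(6(\sqrt{d}+2\sqrt{\log N})/\lambda) + d\log 2$. A short calculation shows $m = c d \lambda^{-2} \cdot \max\{1,\log(3(\sqrt{d}+2\sqrt{\log N})/\lambda)\}$ works for a sufficiently large absolute constant $c$: in the regime $\sqrt{d}+2\sqrt{\log N} \gg \lambda$ the $\log$-factor matches the net entropy, while in the regime $\lambda$ is comparable to $\sqrt{d}+2\sqrt{\log N}$ the net has size only $2^{O(d)}$ and the $d\log 2$ term controls.

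The main obstacle will be book-keeping the constants in the Chernoff calculation to make sure both regimes of the $\max\{1,\log(\cdot)\}$ term are covered simultaneously; this is where one must use the $\lambda > 100\sqrt{\log N}$ assumption to absorb the $\log N$ factors coming from $N$ into the Gaussian tail $\exp(-\lambda^2/8)$ and obtain a bound independent of $N$. Everything else (the reduction to $\bar{y}^{(\ell)}$, the net argument, the high-probability length bound) is routine and borrows from Lemmas~\ref{lem:sr:length} and \ref{lem:sr:innerprod}.
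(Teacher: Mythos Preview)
Your proposal is correct and follows essentially the same approach as the paper: reduce to the unperturbed $\bar{y}^{(\ell)}$ via $|\iprod{\bar{x}^{(\ell)},\ehat}|\le|\iprod{\bar{y}^{(\ell)},\ehat}|$, use an $\eta$-net on $S^{d-1}$ with $\eta\asymp\lambda/(\sqrt{d}+2\sqrt{\log N})$ together with the length bound from Lemma~\ref{lem:sr:length} to pass from arbitrary directions to net directions at the cost of halving $\lambda$, and then union-bound the binomial tail $\binom{N}{m}\exp(-m\lambda^2/8)$ over the net. The only cosmetic point to add is that the paper caps the net resolution at $\eta\le 1/2$ (so the $(3/\eta)^d$ covering bound remains valid when $\lambda\gtrsim\sqrt{d}$), which corresponds exactly to the $\max\{1,\cdot\}$ branch you mention.
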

\anote{Explain some nuances about the statement here. We can't hope to say this for every point. However, this lemma shows that it is true except for a few bad points. }

\begin{proof}
Set $\eta:=\min\set{\lambda/(2\sqrt{d}+2\sqrt{\log N}),\tfrac{1}{2}}$ and $m:= 512 d \log(3/\eta)/\lambda^2$.
Consider an $\eta$-net $\calN \subset \set{u: \norm{u}_2=1}$ over unit vectors in $\R^d$. Hence 
$$\forall u \in \R^{d}: \norm{u}_2=1,~\exists v \in \calN \text{ s.t. } \norm{u-v}_2 \le \eta \text{ and } ~|\calN| \le \parens[\Big]{\frac{2+\eta}{\eta}}^{d} \le \exp\parens[\big]{d \log(3/\eta)}.$$ 

Further, since $|\iprod{\xbar, \ehat}| > \lambda$ and $\calN$ is an $\eta$-net, there exists some unit vector $u=u(\ehat) \in \calN$
\begin{align} 
\abs{\iprod{\xbar,u}}&>\abs{\iprod{\xbar, \ehat}+\iprod{\xbar, \ehat-u}} \geq \sigma \lambda - \norm{\xbar}_2 \norm{\ehat - u}_2 \ge \sigma \parens[\big]{\lambda - \eta(\sqrt{d}+2\sqrt{\log N}) } \ge \frac{\lambda}{2}, \label{eq:badprojection}
\end{align}
for our choice of $\eta$. 
Consider a fixed $x \in \set{x^{(1)}, \dots, x^{(N)}}$ and a fixed direction $u \in \calN$. Since the variance of $y$ is at most $\sigma^2$ we have
$$\Pr\Big[ \abs{\iprod{\xbar, u}} > \lambda \sigma/2 \Big] \le \Pr\Big[ \abs{\iprod{\ybar, u}} > \lambda \sigma/2 \Big] \le \exp\parens[\big]{- \lambda^2/8}. $$
The probability that $m$ points in $\set{x^{(1)}, \dots, x^{(N)}}$ satisfy \eqref{eq:badprojection} for a fixed direction $u$ is at most ${N \choose m} \cdot \exp(-m\gamma^2 /2)$. Let $E$ represent the bad event that there exists a direction in $\calN$ such that more that $m$ points satisfy the bad event given by \eqref{eq:badprojection}. The probability of $E$ is at most
\begin{align*}
\Pr[E] &\le |\calN|\cdot {N \choose m} \exp(-m\lambda^2/8) \le \exp\parens[\Big]{d \log(3/\eta) + m \log N - \frac{m\lambda^2}{8}}\\
&\le \exp\parens[\Big]{-d\log(1/\eta)} \le \eta^d,
\end{align*}
since for our choice of parameters $\lambda^2> 32 \log N$, and $m\lambda^2 \ge 32 d \log(3/\eta)$.

\end{proof}

\section{Upper Bounds for Semi-random GMMs}

\anote{Wherever you see a $(\sqrt{d}+\sqrt{\log N})$ term, change it to $\sqrt{d}+2\sqrt{\log N}$ --- this arises from the concentration bound for the length of a Gaussian vector. }
\anote{Change $\tau<\Delta/4$ to $\tau<\Delta/(24)$; changed in theorem statement, but need to change it in proofs.}
In this section we prove the following theorem that provides algorithmic guarantees for the Lloyd's algorithm with appropriate initialization, under the semi-random model for mixtures of Gaussians in Definition~\ref{def:srmodel}.

\anote{Check if $\min\set{k,d}$ is enough in separation.}
\begin{theorem}\label{thm:upperbound}
There exists a universal constant $c_0, c_1>0$ such that the following holds. There exists a polynomial time algorithm that for any semi-random instance $\calX$ on $N$ points with planted clustering $C_1, \dots, C_k$ generated by the semi-random GMM model (Definition~\ref{def:srmodel}) with parameters $\mu_1, \dots, \mu_k, \sigma^2$ s.t. 
\begin{equation} \label{eq:separation}
\forall i \ne j \in [k],~\norm{\mu_i - \mu_j}_2 > \Delta \sigma, ~\text{ where } \Delta>c_0 \sqrt{\min\set{k,d} \log N},
\end{equation}
and $N \ge k^2 d^2/\wmin^2$
finds w.h.p. a clustering $C'_1, C'_2, \dots, C'_k$ such that 
$$ \min_{\pi \in \text{Perm}_k} \sum_{i=1}^k \Abs{C_{\pi(i)} \triangle C'_i}\le \frac{c_1 k d}{\Delta^4}\cdot \max\Bigset{1,\log\parens[\big]{\tfrac{3(\sqrt{d}+2\sqrt{\log N})}{\Delta^2}}}.$$ 
\end{theorem}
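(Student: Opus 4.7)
The plan is to follow the two-stage template of Kumar--Kannan: first produce initial centers $\mu'_1,\dots,\mu'_k$ that are within some $\tau_0 \sigma$ of the true means $\mu_1,\dots,\mu_k$ (with $\tau_0 = O(1)$), and then show that a constant number of Lloyd iterations cleans up the clustering down to the claimed $\tilde{O}(kd/\Delta^4)$ misclassification bound. Throughout, the central technical handle on the semi-random perturbations will be Lemma~\ref{lem:baddirections}: for any unit vector $\widehat e$, at most $\tilde{O}(d/\lambda^2)$ points have $|\langle x - \mu_i, \widehat e\rangle| > \lambda \sigma$. This is what replaces the per-point concentration bounds that are unavailable in the semi-random setting.

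For the initialization stage, I would lift each point to an expanded feature space as in~\cite{KK10}, project onto the top-$k$ SVD subspace of the data matrix, and then run a constant-factor approximation algorithm for $k$-means. The key inputs are Lemma~\ref{lem:boost-mean-separation} (which shows the means remain well separated after boosting) and Lemma~\ref{lem:boosting-spectral-norm} (which bounds the spectral norm of the centered data matrix in the lifted space, using Lemma~\ref{lem:semirandom-all-variance} to control the contribution from semi-random perturbations). Combined with a Davis--Kahan style argument, this yields Proposition~\ref{prop:initialization-strong}: initial centers with $\|\mu'_i - \mu_i\|_2 \leq \tau_0 \sigma$ for an absolute constant $\tau_0$, under the stated separation $\Delta \geq c_0 \sqrt{\min\{k,d\}\log N}$ and the sample size $N \geq k^2 d^2/\wmin^2$.

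For the Lloyd iteration, the main statement is Lemma~\ref{lem:badpoints}: if the current centers satisfy $\|\mu'_i - \mu_i\|_2 \leq \tau \sigma$, the number of points misclassified by the Voronoi partition of $\{\mu'_i\}$ is at most $\tilde{O}(kd\tau^2/\Delta^4)$. To prove this I would analyze when a point $x \in C_i$ gets assigned to $C'_j$: this requires $\langle x - \tfrac{\mu'_i+\mu'_j}{2}, \mu'_j - \mu'_i\rangle > 0$, which after substituting $\mu'_\ell = \mu_\ell + \tau\sigma \widehat e_\ell$ and dividing by $\|\mu_j - \mu_i\|_2 \geq \Delta\sigma$ forces either $\langle \xbar^{(t)}, (\mu_j-\mu_i)/\|\mu_j-\mu_i\|\rangle \gtrsim \Delta \sigma/2$ (which by Lemma~\ref{lem:sr:innerprod:means} happens for no points w.h.p.\ since $\Delta \gg \sqrt{\log N}$) or $|\langle \xbar^{(t)}, \widehat e_\ell \rangle| \gtrsim \Delta^2 \sigma / \tau$ for some $\ell \in \{i,j\}$. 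Applying Lemma~\ref{lem:baddirections} with $\lambda = \Omega(\Delta^2/\tau)$ bounds the number of such bad points by $\tilde{O}(d\tau^2/\Delta^4)$ per pair, and a union bound over pairs $(i,j)$ yields the claim. Then using Lemma~\ref{lem:semi-random-mean-of-subset}, the new centers $\mu(C'_i)$ satisfy $\|\mu(C'_i) - \mu_i\|_2 \leq O(\sigma)$, i.e.\ the new $\tau$ remains an absolute constant, so after one Lloyd step (which must be iterated $O(1)$ times to stabilize $\tau$ down to its floor) the misclassification bound becomes $\tilde{O}(kd/\Delta^4)$.

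The main obstacle, highlighted in the excerpt, is that the semi-random adversary can shift the empirical mean of $C_i$ by $\Omega(\sigma)$ (Lemma~\ref{lem:movingcenter}), so we cannot hope to drive $\tau$ to zero; the iterative analysis of~\cite{AS12} that reduces $\tau$ geometrically to $0$ breaks down. The finer analysis needed is to decouple the ``bad direction'' error from the statistical concentration error: the shift of each empirical mean happens only in directions captured by Lemma~\ref{lem:baddirections}, and we must use that lemma, not per-point concentration, at every step. A secondary subtlety is that the initialization and iteration must both use the same quantitative control so that the $O(1)$ constant in $\tau$ at the end of initialization can be absorbed into the Lloyd step's contraction bound; this forces the separation requirement to carry the factor $\sqrt{\min\{k,d\}\log N}$ rather than $\sqrt{k \log N}$ alone, matching the statement of the theorem.
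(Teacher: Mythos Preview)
Your outline follows the paper's two-stage template (Proposition~\ref{prop:initialization-strong} for initialization, then Lloyd iterations controlled by Lemma~\ref{lem:badpoints} together with Lemma~\ref{lem:lloyds-iterate}), and your sketch of the projection-condition argument---reducing misclassification of $x\in C_i$ to $|\langle \xbar,\ehat_\ell\rangle|\gtrsim \Delta^2\sigma/\tau$ for some $\ell\in\{i,j\}$ and then invoking Lemma~\ref{lem:baddirections} with $\lambda=\Omega(\Delta^2/\tau)$---is exactly what the paper does in the proof of Lemma~\ref{lem:badpoints}.

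There is one quantitative error that propagates through your plan. Proposition~\ref{prop:initialization-strong} does \emph{not} deliver $\tau_0=O(1)$; it only guarantees $\tau_0<\Delta/24$. Starting from $\tau_0=\Theta(\Delta)$, a single Lloyd step does not land you at an absolute constant: Lemma~\ref{lem:lloyds-iterate} shows each step sends $\tau\mapsto \max\bigl(6+\tfrac{\tau}{4},\tfrac{\tau}{2}\bigr)$, i.e.\ $\tau$ contracts geometrically until it hits the floor $\approx 8$. Hence you need $T=O(\log\Delta)$ iterations, not $O(1)$, before applying Lemma~\ref{lem:badpoints} with $\tau=O(1)$ to read off the final $\tilde{O}(kd/\Delta^4)$ bound. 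Relatedly, your appeal to Lemma~\ref{lem:semi-random-mean-of-subset} alone for the Lloyd update is incomplete: that lemma handles only the ``good'' part $G_i=S_i\cap C_i$, and you also need to bound the contribution of the misclassified points $B_i=S_i\setminus C_i$ (the paper does this via $\norm{x-\mu_i}\le(\sqrt d+2\sqrt{\log N}+2\tau)\sigma$ for $x\in B_i$ and $|B_i|\le k\gamma$), which is where the $\tfrac{\tau}{4}$ term in the contraction comes from.
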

\anote{Need to figure out exact dependence on $N$, this is based on initialization and Lloyd's iterate.}
 
\anote{Make some remark about the $\log$ factor. And compare to lower bound which is off by maybe this log factor}

In Section~\ref{sec:lower-bound} we show that the above error bound is close to the information theoretically optimal bound~(up to the logarithmic factor). The Lloyd's algorithm as described in Figure~\ref{ALG:Lloyds} consists of two stages, the initialization stage and an iterative improvement stage. 
\begin{figure}[hb]
\begin{center}
\fbox{\parbox{0.98\textwidth}{
\begin{enumerate}
\item Let $A$ be the $N \times d$ data matrix with rows $A_i$ for $i \in [N]$. Use $A$ to compute initial centers $\mu^{(1)}_{0}, \mu^{(2)}_{0}, \ldots \mu^{(k)}_{0}$ as detailed in Proposition~\ref{prop:initialization-strong}.
\item Use these $k$-centers to seed a series of Lloyd-type iterations. That is, for $r = 1,2, \ldots$ do:
\begin{itemize}
\item Set $Z_i$ be the set of points for which the closest center among $\mu^{(1)}_{r-1}, \mu^{(2)}_{r-1}, \ldots, \mu^{(k)}_{r-1}$ is $\mu^{(i)}_{r-1}$.
\item Set $\mu^{(i)}_{r} \leftarrow \frac 1 {|Z_i|} \sum_{A_j \in Z_i} A_j$.
\end{itemize}
\end{enumerate}
}}
\end{center}
\caption{\label{ALG:Lloyds} Lloyd's Algorithm}
\end{figure}

\anote{Give a description of the algorithm here.}
The initialization follows the same scheme as proposed by Kumar and Kannan in \cite{KK10}. The initialization algorithm first performs a $k$-SVD of the data matrix followed by running the $k$-means++ algorithm~\cite{AV07} that uses $D^2$-sampling to compute seed centers. One can also use any constant factor approximation algorithm for $k$-means clustering in the projected space to obtain the initial centers~\cite{kanungo2002local,AhmadianNSW16}. This approach works for clusters that are nearly balanced in size. However, when the cluster sizes are arbitrary, an appropriate transformation of the data is performed first that amplifies the separation between the centers. Following this transformation, the ($k$-SVD $+$ $k$-means++) is used to get the initial centers. The formal guarantee of the initialization procedure is encapsulated in the following proposition, whose proof is given in Section~\ref{sec:initialization}.    

The main algorithmic contribution of this paper is an analysis of the Lloyd's algorithm when the points come from the semi-random GMM model. For the rest of the analysis we will assume that the instance $\calX$ generated from the semi-random GMM model satisfies \eqref{eq:sr:length} to \eqref{eq:baddirections}. These eight equations are shown to hold w.h.p. in Section~\ref{sec:prelim-sr} for instances generated from the model. Our analysis will in fact hold for any deterministic data set satisfying these equations. This helps to gracefully argue about performing many iterations of Lloyd's on the same data set without the need to draw fresh samples at each step. 

\begin{proposition}
\label{prop:initialization-strong}
In the above notation for any $\delta>0$, suppose we are given an instance $\calX$ on $N$ points satisfying \eqref{eq:sr:length}-\eqref{eq:baddirections} such that $|C_i| \geq \Omega(d+\log(\frac k \delta))$ and assume that $\Delta \geq {125}\sqrt{{\min{\set{k,d}}\log N}}$. Then after the initialization step, for every $\mu_i$ there exists $\mu'_i$ such that $\|\mu_i - \mu'_i\| \leq \tau \sigma$, where $\tau < \Delta/24$.
\end{proposition}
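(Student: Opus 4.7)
The plan is to analyze the initialization procedure of Kumar and Kannan adapted to the semi-random setting. The procedure consists of three stages: (i) a boosting transformation that maps each data point into a higher-dimensional feature space designed to amplify the separation between component means when clusters are unbalanced in weight; (ii) projection of the (boosted) data matrix $A'$ onto its top-$k$ singular subspace via $k$-SVD, giving a projector $P_k$; and (iii) running a constant-factor approximation for $k$-means (e.g.\ $k$-means++) on the projected data to produce a clustering $\tilde{C}_1,\dots,\tilde{C}_k$. The initial centers $\mu^{(1)}_0,\dots,\mu^{(k)}_0$ are then defined as the empirical means of the $\tilde{C}_j$'s in the \emph{original} data space, so that Lemma~\ref{lem:semi-random-mean-of-subset} can eventually be invoked.

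First, I would combine the spectral bound $\|A' - M'\| = O(\sigma\sqrt{N})$ in the boosted space (Lemma~\ref{lem:boosting-spectral-norm}, the boosted analogue of Lemma~\ref{lem:semirandom-all-variance}) with the fact that the boosted mean matrix $M'$ has rank at most $k$. Standard SVD arguments then give $\|P_kA' - A'\| \le \|A' - M'\|$ and hence $\|P_kA' - M'\| \le 2\|A' - M'\|$; since $P_kA' - M'$ has rank at most $2k$, we get $\|P_kA' - M'\|_F = O(\sigma\sqrt{kN})$. Consequently, the $k$-means cost of the planted clustering on the projected data is at most $O(\sigma^2 kN)$, so the constant-factor approximation returns a clustering $\tilde{C}_1,\dots,\tilde{C}_k$ of cost at most $O(\sigma^2 kN)$ as well.

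Next, I would argue that each $\tilde{C}_j$ must essentially agree with some planted $C_{\pi(j)}$. By Lemma~\ref{lem:boost-mean-separation}, the boosting amplifies inter-mean separation so that $\|M'_a - M'_b\|$ comfortably dominates $\sigma\sqrt{k/w_{\min}}$ per point; if some $\tilde{C}_j$ contained an $\Omega(1)$ fraction of points from two distinct planted clusters, these points alone would contribute a cost exceeding the $O(\sigma^2 kN)$ budget, a contradiction. Lemma~\ref{lem:baddirections} is invoked to rule out the alternative that the excess cost is concentrated on a few atypical points having large projection along a single bad direction. A counting argument then produces a permutation $\pi$ such that $|\tilde{C}_j \triangle C_{\pi(j)}| \le \epsilon |C_{\pi(j)}|$ for some $\epsilon < 1/2$ and every $j$.

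Finally, applying Lemma~\ref{lem:semi-random-mean-of-subset} in the original space to each $\tilde{C}_j$ gives $\|\mu^{(j)}_0 - \mu_{\pi(j)}\| = O(\sigma)$; since $\Delta \ge 125\sqrt{\min\{k,d\}\log N}$, the bound $\Delta/24$ dominates any absolute constant, yielding $\|\mu_i - \mu'_i\| \le \tau\sigma$ for some $\tau < \Delta/24$, as required. The main obstacle is the third paragraph: carefully balancing the cost contributed by the few points guaranteed by Lemma~\ref{lem:baddirections} (which may have atypically large projections along the SVD directions) against the budget from the constant-factor approximation, so that the resulting alignment $|\tilde{C}_j \triangle C_{\pi(j)}|$ remains a strict fraction of $|C_{\pi(j)}|$ for every $j$ rather than just in total. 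The boosting is essential here, because without it the SVD-projected separation between small and large clusters would be too weak to force such an alignment.
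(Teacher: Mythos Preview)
Your outline matches the paper's approach: boost via the Kumar--Kannan feature map, apply $k$-SVD plus a constant-factor $k$-means approximation in the boosted space (this is Lemma~\ref{lem:initialization-weak}), then pull the resulting clustering back to the original coordinates. Two points deserve correction, though.

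First, Lemma~\ref{lem:baddirections} plays no role in the initialization argument. The alignment $|\tilde C_j \triangle C_{\pi(j)}| \le \epsilon |C_{\pi(j)}|$ follows from a pure cost comparison in the boosted space: if more than a tiny fraction of points were misassigned, their $k$-means cost would already exceed the $O(k\|A'-C'\|^2)$ budget, because the boosted means are separated by $\Omega(\sqrt{Nw_{\min}}\,k\log N)\sigma^2$ (Lemma~\ref{lem:boost-mean-separation}). There is no issue of atypical points absorbing the excess; you are chasing a non-obstacle.

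Second, Lemma~\ref{lem:semi-random-mean-of-subset} does not apply to $\tilde C_j$ as stated, since $\tilde C_j$ is not a subset of any single planted cluster. The paper instead decomposes $\mu(\tilde C_j)-\mu_{\pi(j)}$ into the contribution from $\tilde C_j\cap C_{\pi(j)}$ (where the lemma does apply and gives $O(\sigma)$) and the contribution from the misclassified points $\tilde C_j\setminus C_{\pi(j)}$. For the latter one uses that any two clusters sharing a connected component in the auxiliary graph $G$ have means within $O(k(\sqrt d+\sqrt{\log N}))\sigma$, and that the misclassified fraction is at most $O(1/\sqrt d)$ per cluster. Balancing these two terms against $\Delta/24$ is where the actual arithmetic lives.
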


\anote{Is separation dependence $\min\set{k,d}$?}

The analysis of the Lloyd's iterations crucially relies on the following lemma that upper bounds the number of misclassified points when the current Lloyd's iterative is relatively close to the true means. 

\begin{lemma}[Projection condition] \label{lem:badpoints}
In the above notation, consider an instance $\calX$ satisfying \eqref{eq:sr:length}-\eqref{eq:baddirections} and \eqref{eq:separation}
and suppose we are given $\mu'_1,\dots,\mu'_k$ satisfying $\forall j \in [k],~\norm{\mu'_j - \mu_j}_2 \le \tau \sigma$ and $\tau< \Delta/24$. 
Then there exists a set $Z \subset \calX$ such that for any $i \in [k]$ we have  
$$\forall x \in C_i \cap (\calX \setminus Z),~~\norm{x-\mu'_i}_2^2 \le \min_{j \ne i}\norm{x-\mu'_j}_2^2 ~~\text{ where } |Z|=O\parens[\Big]{\frac{d \tau^2 }{\Delta^4}\cdot \max\Bigset{1,\log\parens[\big]{\tfrac{3\tau(\sqrt{d}+2\sqrt{\log N})}{\Delta^2}}}}.$$ 
\end{lemma}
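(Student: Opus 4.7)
My plan is to convert the event that $x \in C_i$ is misclassified by $\mu'_1,\dots,\mu'_k$ into the event that $\bar{x} := x - \mu_i$ has an abnormally large projection onto a direction of norm at most $2\tau\sigma$, after which Lemma~\ref{lem:baddirections} directly yields the claimed bound on $|Z|$.

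First, write $\mu'_\ell = \mu_\ell + e_\ell$ with $\|e_\ell\|_2 \le \tau\sigma$. The standard bisector expansion gives
\[
\tfrac{1}{2}\bigparen{\|x - \mu'_j\|^2 - \|x - \mu'_i\|^2} = \iprod{\bar{x},\,\mu'_i - \mu'_j} + \tfrac{1}{2}\bigparen{\|\mu_i - \mu'_j\|^2 - \|\mu_i - \mu'_i\|^2}.
\]
The constant piece satisfies $\|\mu_i - \mu'_j\|^2 - \|\mu_i - \mu'_i\|^2 \ge (\|\mu_i - \mu_j\| - \tau\sigma)^2 - \tau^2\sigma^2 \ge \tfrac{5}{6}\|\mu_i - \mu_j\|^2$ since $\tau < \Delta/24$. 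I then split the remaining inner product as $\iprod{\bar{x},\mu_i-\mu_j} + \iprod{\bar{x}, e_i - e_j}$. Lemma~\ref{lem:sr:innerprod:means} controls the first summand by $3\sigma\sqrt{\log N}\,\|\mu_i - \mu_j\|$, which is negligible compared to the constant piece provided the constant $c_0$ in $\Delta \ge c_0\sqrt{\log N}$ is taken large enough.

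Combining these estimates, a misclassification $\|x - \mu'_j\| < \|x - \mu'_i\|$ forces $\iprod{\bar{x},\, e_i - e_j} \le -c\,\Delta^2\sigma^2$ for some absolute constant $c>0$. Since $\|e_i - e_j\|_2 \le 2\tau\sigma$, this is equivalent to
\[
\abs{\iprod{\bar{x},\,\hat{v}_{ij}}} \ge \frac{c\,\Delta^2\sigma}{2\tau} \;=:\; \lambda\sigma, \qquad \hat{v}_{ij} := \frac{e_i - e_j}{\|e_i - e_j\|_2},
\]
with $\lambda = \Theta(\Delta^2/\tau)$. The hypothesis $\lambda > 100\sqrt{\log N}$ required by Lemma~\ref{lem:baddirections} follows from $\tau \le \Delta/24$ together with $\Delta \ge c_0\sqrt{\log N}$, and applying that lemma at this $\lambda$ bounds the number of points with large projection onto any fixed direction by $O\Paren{\tfrac{d\tau^2}{\Delta^4}\,\log\Paren{\tfrac{3\tau(\sqrt{d}+2\sqrt{\log N})}{\Delta^2}}}$, matching the desired size of $Z$.

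The delicate step I expect to handle carefully is the aggregation across $(i,j)$ pairs: each misclassified $x \in C_i$ comes with its own witness $j$, yielding up to $\binom{k}{2}$ candidate ``bad'' directions $\hat{v}_{ij}$, and a naive union bound would cost an extra factor of $k^2$. To respect the stated bound I would exploit that all these directions lie in the $k$-dimensional subspace $\mathrm{span}(e_1,\dots,e_k)$ and observe that the $d$-dimensional $\eta$-net driving the proof of Lemma~\ref{lem:baddirections}, whose size $\exp(d\log(3/\eta))$ already dominates any $\mathrm{poly}(k)$ overhead, absorbs this collection of directions within a single invocation of the lemma.
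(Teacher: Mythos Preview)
Your argument follows essentially the same route as the paper's proof: expand the misclassification condition, strip off the contribution $\iprod{\bar{x},\mu_i-\mu_j}$ using \eqref{eq:sr:innerprod:means}, deduce that a misclassified $x\in C_i$ must have $|\iprod{\bar{x},\hat v}|\ge \Theta(\Delta^2/\tau)\,\sigma$ for a specific unit vector $\hat v$, and then invoke Lemma~\ref{lem:baddirections} at $\lambda=\Theta(\Delta^2/\tau)$. The only cosmetic difference is that the paper, after reaching
\[
\Bigl|\Bigl\langle \bar{x},\tfrac{\mu_i-\mu'_i}{\|\mu_i-\mu'_i\|}\Bigr\rangle\Bigr|+\Bigl|\Bigl\langle \bar{x},\tfrac{\mu_j-\mu'_j}{\|\mu_j-\mu'_j\|}\Bigr\rangle\Bigr|\ \ge\ \tfrac{\Delta^2}{16\tau}\,\sigma,
\]
splits into the two directions $\hat e_i=(\mu_i-\mu'_i)/\|\mu_i-\mu'_i\|$ and $\hat e_j$, so the entire collection of ``bad'' directions has size only $k$ (one per cluster), whereas your single direction $\hat v_{ij}=(e_i-e_j)/\|e_i-e_j\|$ gives $\binom{k}{2}$ of them.

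Where your write-up goes astray is the last paragraph. You correctly notice that Lemma~\ref{lem:baddirections} bounds, for \emph{each} fixed direction, the number of points with large projection; taking the union over $k$ (or $\binom{k}{2}$) witness directions naively multiplies the count by that factor. Your proposed fix, however, does not work: the $\eta$-net of size $\exp(d\log(3/\eta))$ in the proof of Lemma~\ref{lem:baddirections} is used only to control the \emph{probability} of the bad event, not the count $m=O(d/\lambda^2)$ itself, so ``absorbing'' the $\mathrm{poly}(k)$ directions into the net does nothing to prevent the per-direction counts from adding up. The paper does not explicitly address this point either (it simply asserts ``at most $m$ points in $C_i$ can satisfy~\eqref{eq:temp1}''), and indeed a factor of $k$ is quietly reinstated downstream (see the bound $|B_i|\le k\gamma$ in the proof of Lemma~\ref{lem:lloyds-iterate} and the factor of $k$ in Theorem~\ref{thm:upperbound}). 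If you want your argument to match the paper's loss rather than your own, use the paper's split into $\hat e_i$ and $\hat e_j$ so that only $k$ directions arise; otherwise acknowledge the honest $k$ (or $k^2$) overhead rather than appealing to the net.
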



The following lemma quantifies the improvement in each step of the Lloyd's algorithm. The proof uses Lemma~\ref{lem:badpoints} along with properties of semi-random Gaussians. 

\begin{lemma}
\label{lem:lloyds-iterate}
In the above notation, suppose we are given an instance $\calX$ on $N$ points with $w_i N \geq \frac{d\sqrt{d}}{4 \log(d)}$ for all $i$ satisfying \eqref{eq:sr:length}-\eqref{eq:baddirections}. Furthermore, suppose we are given centers $\mu'_1, \dots, \mu'_k$ such that $\|\mu'_i - \mu_i\| \leq \tau \sigma,~ \forall i \in [k]$ where $\tau < \Delta/24$. Then the centers $\mu''_1,\dots, \mu''_k$ obtained after one Lloyd's update satisfy $\|\mu''_i - \mu_i\| \leq \max((6 + \frac \tau 4)\sigma, \frac \tau 2 \sigma))$ for all $i \in [k]$.
\end{lemma}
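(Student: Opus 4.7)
\textbf{Proof plan for Lemma~\ref{lem:lloyds-iterate}.}

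The plan is to compare, for each $i$, the new Lloyd's center $\mu''_i = \frac{1}{|Z_i|}\sum_{x\in Z_i}x$ against $\mu_i$, by splitting $Z_i$ into its intersection with the true planted cluster $C_i$ and the ``misclassified tail''. Concretely, first I invoke Lemma~\ref{lem:badpoints} with the given centers $\mu'_1,\dots,\mu'_k$ and parameter $\tau$, obtaining a set $Z^*\subseteq\calX$ of size
\[m = O\!\left(\tfrac{d\tau^2}{\Delta^4}\cdot\max\{1,\log(3\tau(\sqrt d+2\sqrt{\log N})/\Delta^2)\}\right)\]
such that every point in $C_i\setminus Z^*$ is closer (in Euclidean distance) to $\mu'_i$ than to any other $\mu'_j$. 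Setting $G_i := Z_i\cap C_i$, this immediately yields $C_i\setminus Z^*\subseteq G_i$ and $Z_i\setminus G_i\subseteq Z^*$, so both $|C_i\setminus G_i|$ and $|Z_i\setminus G_i|$ are bounded by $m$. The hypothesis $w_i N\ge d\sqrt d/(4\log d)$, combined with the separation $\Delta\ge c_0\sqrt{\min\{k,d\}\log N}$ and $\tau<\Delta/24$, will then guarantee $m \le \tfrac12 w_i N$, so in particular $|Z_i|\ge |G_i|\ge w_iN/2$ and the fraction $\epsilon := m/|C_i|$ is at most $1/2$.

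With the decomposition
\[\mu''_i-\mu_i \;=\; \tfrac{|G_i|}{|Z_i|}\bigl(\mu(G_i)-\mu_i\bigr) \;+\; \tfrac{1}{|Z_i|}\sum_{x\in Z_i\setminus G_i}(x-\mu_i),\]
I handle the two terms separately. For the first term I apply Lemma~\ref{lem:semi-random-mean-of-subset} to $G_i$: since $|G_i|\ge(1-\epsilon)|C_i|$ with $\epsilon\le 1/2$, it gives $\|\mu(G_i)-\mu_i\|\le(4+2/\sqrt{1-\epsilon})\sigma$, which is at most $6\sigma$ up to a vanishing correction as $\epsilon\to 0$ (and can be made exactly $6\sigma$ by absorbing the slack into constants or into the $\tau/4$ term). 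For the second term, for any $x\in Z_i\setminus G_i$ there is some $j\ne i$ with $x\in C_j$, and by the Voronoi property $\|x-\mu'_i\|\le\|x-\mu'_j\|$. Combining with $\|x-\mu_j\|\le\sigma(\sqrt d+2\sqrt{\log N})$ from Lemma~\ref{lem:sr:length} and $\|\mu'_i-\mu_i\|,\|\mu'_j-\mu_j\|\le\tau\sigma$ via the triangle inequality gives $\|x-\mu_i\|\le\sigma(\sqrt d+2\sqrt{\log N}+2\tau)$, so the second term has norm at most
\[\tfrac{m\,\sigma(\sqrt d+2\sqrt{\log N}+2\tau)}{|Z_i|}.\]

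Substituting the bound on $m$, the hypothesis $w_iN\ge d\sqrt d/(4\log d)$, and $\tau<\Delta/24$ with $\Delta^2\ge c_0^2\min\{k,d\}\log N$ gives that this ratio is $\le \tau\sigma/4$. Adding the two contributions yields $\|\mu''_i-\mu_i\|\le(6+\tau/4)\sigma$, which is bounded by $\max\{(6+\tau/4)\sigma,\tau\sigma/2\}$ (the second argument of the max takes over precisely once $\tau\ge 24$, where $6+\tau/4\le\tau/2$ automatically).

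The main obstacle is Step~5: controlling the misclassified tail. Each such point can legitimately sit at distance $\Theta(\sigma\sqrt d)$ from $\mu_i$, so one cannot bound its individual contribution by $O(\sigma)$; the argument must instead use that there are very few of them (the $\tau^2$ gain from Lemma~\ref{lem:badpoints}) and that the cluster $|Z_i|$ is large enough (the $d^{3/2}$ cluster-size hypothesis) to dilute their effect down to $\tau\sigma/4$. This delicate balance — quadratic-in-$\tau$ bad-set size against the $\sqrt d$ per-point crude bound — is what forces the particular $w_iN\gtrsim d^{3/2}/\log d$ assumption and is the crux of the semi-random analysis, since in the classical GMM case one could have hoped to use sharper concentration like Lemma~\ref{lem:semirandom-all-variance} which is not tight enough here to drop the $\sqrt d$ factor directly.
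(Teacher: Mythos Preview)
Your proposal is correct and follows essentially the same route as the paper's proof: split the Voronoi cell $Z_i$ into $G_i=Z_i\cap C_i$ and the misclassified tail $Z_i\setminus G_i$, control the size of both pieces via Lemma~\ref{lem:badpoints}, bound the $G_i$ contribution using Lemma~\ref{lem:semi-random-mean-of-subset}, and bound each tail point by $\sigma(\sqrt d+2\sqrt{\log N}+2\tau)$ via the triangle inequality and \eqref{eq:sr:length}. The only cosmetic difference is that the paper writes $|B_i|\le k\gamma$ for the tail (allowing a separate bad set per cluster), whereas you use the sharper $|Z_i\setminus G_i|\le m=|Z^*|$ directly; both suffice for the final $\tau\sigma/4$ estimate.
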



\anote{Write down all the deterministic conditions from the preliminaries that we use. }

We now present the proof of Theorem~\ref{thm:upperbound}. 
\begin{proof}[Proof of Theorem~\ref{thm:upperbound}]
Firstly, the eight deterministic conditions \eqref{eq:sr:length}-\eqref{eq:baddirections} are shown to hold for instance $\calX$ w.h.p. in Section~\ref{sec:prelim-sr}.
The proof follows in a straightforward manner by combining Proposition~\ref{prop:initialization-strong}, Lemma~\ref{lem:lloyds-iterate} and Lemma~\ref{lem:badpoints}. Proposition~\ref{prop:initialization-strong} shows that $\norm{\mu^{(0)}_i -\mu_i}_2 \le \Delta/(24)$ for all $i \in [k]$. Applying Lemma~\ref{lem:lloyds-iterate}, we have that after $T=O(\log \Delta)$ iterations we get $\norm{\mu^{(T)}_i - \mu_i}_2 \le 8 \sigma$ for all $i \in [k]$ w.h.p. Finally using Lemma~\ref{lem:badpoints} with $\tau=1$, the theorem follows.    
\end{proof}

\subsection{Analyzing Lloyd's Algorithm}

We now analyze each iteration of the Lloyd's algorithm and show that we make progress in each step by misclassifying fewer points with successive iterations. As a first step we begin with the proof of Lemma~\ref{lem:badpoints}. 
\begin{proof}[Proof of Lemma~\ref{lem:badpoints}]
Set $m:= 512 d \log(3/\eta)\tau^2 /\Delta^4$ where $\eta=\min\set{\Delta^2/(\tau(2\sqrt{d}+2\sqrt{\log N})),\tfrac{1}{2}}$. 

Fix a sample $x \in \set{x^{(1)}, \dots, x^{(N)}}$ and suppose $x \in C_i$ and let $y:=y(x)$ be the corresponding point before the semi-random perturbation, and let $\xbar=x-\mu_i$, $\ybar=y-\mu_i$. For each $i \in [k]$, let $\eii$ be the unit vector along $(\mu_i-\mu'_i)$.  

We first observe that by projecting the Gaussians around $\mu_i, \mu_j$ onto the direction along $\ehat_{ij}=(\mu_i - \mu_j)/\norm{\mu_i - \mu_j}_2$, we have that
\begin{align}
\norm{x-\mu_j}_2^2 - \norm{x-\mu_i}_2^2 &= \iprod{x-\mu_j, \ehat_{ij}}^2 - \iprod{x-\mu_i, \ehat_{ij}}^2 \ge (\abs{\iprod{x-\mu_j, \ehat_{ij}}} - \abs{\iprod{x-\mu_i,\ehat_{ij}}})^2 \nonumber\\
&\ge  (\abs{\iprod{\mu_i-\mu_j, \ehat_{ij}}} - 2\abs{\iprod{x-\mu_i,\ehat_{ij}}})^2 \ge (\Delta \sigma- 2 \abs{\iprod{x-\mu_i, \ehat_{ij}}})^2 \nonumber \\
&\ge (\Delta \sigma- 6 \sigma \sqrt{\log N} )^2 \ge \tfrac{1}{4} \Delta^2\sigma^2 \label{eq:badpoints:1},
\end{align}
where the first inequality follows from \eqref{eq:sr:innerprod:means}, and the second inequality uses $\Delta>12 \sqrt{\log N}$. 

Suppose $x \in C_i$ is misclassified i.e., $\norm{x-\mu'_i}_2 \ge \norm{x-\mu'_j}_2$ for some $j \in [k]\setminus \set{i}$. Then,
\begin{align*}
 \norm{(x-\mu_i)+\mu_i-\mu'_i}_2^2 &\ge  \norm{(x - \mu_j)+(\mu_j - \mu'_j)}_2^2  \\
 2\iprod{x-\mu_i, \mu_i - \mu'_i} - 2\iprod{x-\mu_j, \mu_j - \mu'_j}&\ge \norm{x-\mu_j}_2^2-\norm{x-\mu_i}_2^2+\norm{\mu_j - \mu'_j}_2^2 -\norm{\mu_i - \mu'_i}_2^2 \\
 2\iprod{\xbar, \mu_i - \mu'_i} - 2\iprod{\xbar, \mu_j - \mu'_j} - 2\iprod{\mu_i-\mu_j, \mu_j - \mu'_j} &\ge \tfrac{1}{4}\Delta^2 \sigma^2-\tau^2 \sigma^2 \qquad ~\text{ (from \eqref{eq:badpoints:1})} \\
 2\abs{\iprod{\xbar, \mu_i - \mu'_i}} + 2\abs{\iprod{\xbar, \mu_j - \mu'_j}} &\ge (\tfrac{1}{4}\Delta^2-\tau^2) \sigma^2 - 2\abs{\iprod{\mu_i-\mu_j, \mu_j - \mu'_j}}\\
 \abs[\Big]{\inner[\Big]{\xbar, \frac{\mu_i - \mu'_i}{\norm{\mu_i - \mu'_i}_2}}} + \abs[\Big]{\inner[\Big]{\xbar, \frac{\mu_j - \mu'_j}{\norm{\mu_j - \mu'_j}_2}}} &\ge \frac{\parens[\big]{\frac{\Delta^2}{8} - \frac{\tau^2}{2} - \tau \Delta }\sigma^2}{\tau\sigma} \ge \frac{\Delta^2}{16\tau} \sigma,
\end{align*}
since $\tau < \Delta/(24)$. 
Hence, we have that if $x \in C_i$ is misclassified by $\mu'_1, \dots, \mu'_k$ then 
\begin{equation}\label{eq:temp1}
\abs{\iprod{\xbar,\ehat}} > \sigma \Delta^2/(32\tau) \text{ for some unit vector } \ehat \in \R^d. 
\end{equation}
From \eqref{eq:baddirections} with $\lambda=\Delta^2/(32\tau)$, we get from \eqref{eq:baddirections} that  at most $m$ points in $C_i$ can satisfy \eqref{eq:temp1}. Hence the lemma follows.  
\end{proof}

Next we prove Lemma~\ref{lem:lloyds-iterate}, which quantifies the improvement in every iteration of the Lloyd's algorithm. 
\begin{proof}[Proof of Lemma~\ref{lem:lloyds-iterate}]
Let $C_1, C_2, \ldots ,C_k$ be the partitioning of the indices according to the ground truth clustering of the semi-random instance $\calX$ and $S_1, S_2, \ldots ,S_k$ be the indices of the clustering obtained by using the centers $\mu'_i$. Then $\mu''_i = \frac 1 {|S_i|} \sum_{t \in S_i} x^{(t)}$. Partition $S_i$ into two sets $G_i$ and $B_i$ where $G_i = S_i \cap C_i$ and $B_i = S_i \setminus G_i$. Let $\mu(G_i)$ and $\mu(B_i)$ be the means of the two partitions respectively. Let $\gamma = O(\frac{d\tau^2}{\Delta^4} \max\set{1,\log\parens{\tfrac{3\tau(\sqrt{d}+2\sqrt{\log N})}{\Delta^2}}})$. From Lemma~\ref{lem:badpoints} we know that $|G_i| \geq |C_i| - \gamma$ and $|B_i| \leq  k\gamma$.
Then we have that $\mu''_i = \frac{|G_i|}{|S_i|}\mu(G_i) + \frac{|B_i|}{|S_i|}\mu(B_i)$. Hence, $\|\mu''_i - \mu_i\| \leq \frac{|G_i|}{|S_i|}\|\mu(G_i)-\mu_i\| + \frac{|B_i|}{|S_i|}\|\mu(B_i)-\mu_i\|$. 

We have $\frac{|G_i|}{|C_i|} \geq 1-\frac{\gamma}{|C_i|} \geq 1-\frac{\tau}{64\sqrt{k}\sqrt{d}}$ using the bound on $\Delta$ and $|C_i|= w_i N \geq \frac{d\sqrt{d}}{4 \log(d)}$. Using \eqref{eq:sr:mean:subset} we get that 
$$\frac{|G_i|}{|S_i|}\|\mu(G_i)-\mu_i\| \leq \Big(4 + \frac{2}{\sqrt{1-\frac{\tau}{64\sqrt{k}\sqrt{d}}}} \Big)\sigma \leq \Big(6 + \frac{\tau}{128\sqrt{k}\sqrt{d
}}\Big)\sigma \leq 6\sigma + \frac \tau 8 \sigma.$$

To bound the second term we first show that for each point $x^{(t)} \in B_i$, $\|x^{(t)} - \mu_i\| \leq (\sqrt{d} + 2\sqrt{\log N} + 2\tau) \sigma$. Let $C_j$ be the cluster that point $x^{(t)}$ belongs to. Then 
$$\|x^{(t)}-\mu_i\| \leq \|x^{(t)}-\mu'_i\| + \tau \sigma \leq \|x^{(t)}-\mu'_j\| + \tau \sigma \leq \|x^{(t)}-\mu_j\| + 2\tau \sigma \leq (\sqrt{d} + 2\sqrt{\log N} + 2\tau) \sigma,$$ 
using ~\eqref{eq:sr:length}. Hence, $$\frac{|B_i|}{|S_i|}\|\mu(B_i)-\mu_i\| \leq \frac{|B_i|}{|S_i|} (\sigma \sqrt{d} + \sigma\sqrt{\log N} + 2\tau \sigma) \leq \frac{2k\gamma}{|C_i|}(\sigma \sqrt{d} + \sigma\sqrt{\log N} + 2\tau \sigma) < \frac \tau 8 \sigma.$$
Combining, we get that $\|\mu''_i - \mu_i\| \leq (6 + \frac \tau 4)\sigma \leq \max(6\sigma + \frac \tau 4, \frac \tau 2 \sigma)$.
\end{proof}

\subsection{Initialization}
\label{sec:initialization}
In this section we describe how to obtain the initial centers satisfying the condition in Lemma~\ref{lem:lloyds-iterate}. The final initialization procedure relies on the following subroutine that provides a good initializer if the mean separation is much larger than that in Theorem~\ref{thm:upperbound}. Let $A$ denote the $N \times d$ matrix of data points and $M^*$ be the $N \times d$ matrix where each row of $C$ is equal to one of the means $\mu_i$s of the component to which the corresponding row of $A$ belongs to.

\begin{lemma}
\label{lem:initialization-weak}
In the above notation, for any $\delta>0$ suppose we are given an instance $\calX$ on $N$ points satisfying 
satisfying \eqref{eq:sr:length}-\eqref{eq:baddirections}, with components $C_1, \ldots C_k$ such that $|C_i| \geq \Omega(d+\log(\frac k \delta))$. Let $A$ be the $N \times d$ matrix of data points and $\hat{A}$ be the matrix obtained by projecting points onto the best $k$-dimensional subspace obtained by SVD of $A$. Let $\mu'_i$ be the centers obtained by running an $\alpha$ factor $k$-means approximation algorithm on $\hat{A}$.
Then for every $\mu_i$ there exists $\mu'_i$ such that $\|\mu_i - \mu'_i\| \leq 20\sqrt{k\alpha} \frac{\|A-M^*\|}{\sqrt{N\wmin}}$.
\end{lemma}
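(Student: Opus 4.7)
The plan is to bound the Frobenius distance between $M^*$ and the matrix $C^*$ whose $t$-th row is the approximate center assigned to point $t$, and then convert this global bound into a per-cluster bound on $\min_j \|\mu_i - \mu'_j\|$ by a clean averaging argument that carefully avoids a lossy pigeonhole.

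First, I would control $\|\hat{A}-M^*\|_F$. Since $M^*$ is a matrix whose rows take at most $k$ distinct values, $\mathrm{rank}(M^*)\le k$, so by Eckart--Young applied to $A$ we have $\|A-\hat{A}\|\le \|A-M^*\|$ in spectral norm. Triangle inequality then yields $\|\hat{A}-M^*\|\le 2\|A-M^*\|$, and since $\hat{A}-M^*$ has rank at most $2k$,
\[
\|\hat{A}-M^*\|_F \;\le\; \sqrt{2k}\,\|\hat{A}-M^*\| \;\le\; 2\sqrt{2k}\,\|A-M^*\|.
\]

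Next, I would translate the approximation guarantee into a Frobenius bound on $C^*$. Using the true means $\mu_1,\dots,\mu_k$ with the planted assignment as a candidate clustering of $\hat{A}$ gives $k$-means cost exactly $\|\hat{A}-M^*\|_F^2$, so $\mathrm{OPT}_k(\hat{A})\le \|\hat{A}-M^*\|_F^2$. Let $\pi:[N]\to[k]$ be the Voronoi assignment of $\hat{A}$ induced by $\mu'_1,\dots,\mu'_k$, and let $C^*\in\R^{N\times d}$ be the matrix with $t$-th row equal to $\mu'_{\pi(t)}$. Then the $\alpha$-approximation guarantee reads $\|\hat{A}-C^*\|_F^2 \le \alpha\,\|\hat{A}-M^*\|_F^2 \le 8\alpha k\,\|A-M^*\|^2$. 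By the triangle inequality in Frobenius norm,
\[
\|M^*-C^*\|_F \;\le\; \|M^*-\hat{A}\|_F + \|\hat{A}-C^*\|_F \;\le\; 2\sqrt{2k}\,(1+\sqrt{\alpha})\,\|A-M^*\|,
\]
so $\|M^*-C^*\|_F^2 \le O(k\alpha)\|A-M^*\|^2$ (for $\alpha\ge 1$).

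The main step — and the subtle one — is converting this into a per-cluster guarantee. For each $i\in[k]$, every point $t\in C_i$ contributes $\|\mu_i-\mu'_{\pi(t)}\|^2 \ge \min_{j\in[k]}\|\mu_i-\mu'_j\|^2$ to $\|M^*-C^*\|_F^2$, hence
\[
|C_i|\cdot\min_{j\in[k]}\|\mu_i-\mu'_j\|^2 \;\le\; \sum_{t\in C_i}\|\mu_i-\mu'_{\pi(t)}\|^2 \;\le\; \|M^*-C^*\|_F^2 \;\le\; O(k\alpha)\,\|A-M^*\|^2.
\]
Dividing by $|C_i|\ge N\wmin$ and taking square roots yields $\min_j\|\mu_i-\mu'_j\|\le 20\sqrt{k\alpha}\,\|A-M^*\|/\sqrt{N\wmin}$, as claimed. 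The subtlety to emphasize: one is tempted to first use pigeonhole to pick a center $\mu'_{j(i)}$ to which $\ge |C_i|/k$ points of $C_i$ are assigned and then apply Jensen; that route loses an extra factor of $\sqrt{k}$ and gives only $O(k\sqrt{\alpha})/\sqrt{N\wmin}$. The direct pointwise lower bound above already averages over \emph{all} of $C_i$, so no pigeonhole is needed and the tight $\sqrt{k\alpha}$ dependence drops out immediately.
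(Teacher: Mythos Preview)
Your proof is correct and follows essentially the same approach as the paper: bound $\|\hat{A}-M^*\|_F$ via Eckart--Young and the rank-$2k$ observation, invoke the $\alpha$-approximation guarantee, and extract a per-cluster bound by averaging over the points of $C_i$. The paper phrases the last step as a contradiction (assume some $\mu_i$ is far from every center and show the cost contributed by $C_i$ alone already exceeds the $\alpha$-approximation bound, using $\|x-\nu_x\|^2 \ge \tfrac12\|\mu_i-\nu_x\|^2 - \|x-\mu_i\|^2$) rather than via your direct $\|M^*-C^*\|_F$ route, but the underlying argument is the same.
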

\begin{proof}
Let $\hat{A}$ denote the matrix obtained by projecting $A$ onto the span of its top $k$ right singular vectors. Furthermore, let $\nu_1, \ldots \nu_k$ be the centers obtained by running a $9$-approximation algorithm for $k$-means on the instance $\hat{A}$. We know that the optimal $k$-means solution for $\hat{A}$ is at most $\|\hat{A}-M^*\|^2_F$. Since both $\hat{A}$ and $M^*$ are rank $k$ matrices, we get that $\|\hat{A}-M^*\|^2_F \leq 2k\|\hat{A}-M^*\|^2_2 \leq 2k(\|\hat{A}-A\|^2_2 + \|A-M^*\|^2_2)$. Since $\hat{A}$ is the best rank $k$ approximation to $A$ we also have that $\|\hat{A}-A\|^2_2 \leq \|A-M^*\|^2_2$. Hence, $\|\hat{A}-M^*\|^2_F \leq 4k\|A-M^*\|^2_2$. Hence, the cost of the solution using centers $\nu_i$s must be at most $36 k\sigma^2N$~(using \ref{eq:sr:variance:all}).

Next, suppose that there exists $\mu_i$ such that for all $j$, $\|\mu_i - \nu_j\| > 20\sqrt{k\alpha} \frac{\|A-M^*\|}{\sqrt{N\wmin}}$. let's compute the cost paid by the points in component $C_i$ in the clustering obtained via the approximation algorithm. For any $x \in C_i$ let $\nu_x$ be the center that it is closest to. Then the cost is at least $\sum_{x \in C_i} \|x-\nu_x\|^2 \geq \sum_{x \in C_i} \frac 1 2 \|\mu_i - \nu_x\|^2 - \|x-\mu_i\|^2$. The first summation is at least $\frac 1 2 |N\wmin| (400 \alpha k \frac{\|A-M^*\|^2}{N\wmin}) > 200k\alpha \|A-M^*\|^2$.
The second summation is at most $\sum_{x \in C_i} \|x-\mu_i\|^2 \leq \sum_{i} \sum_{x \in C_i} \|x-\mu_i\|^2 = \|\hat{A}-M^*\|^2_F \leq 4k\|A-M^*\|^2$. Hence, we reach a contradiction to the fact that the solution obtained via $\nu_i$s is an $\alpha$-approximation to the optimal cost.
\end{proof}

The proof of the above theorem already provides a good initializer provided $\Delta$ is larger than $\sqrt{k \frac{\log N}{\wmin}}$ and one uses a constant factor approximation algorithm for $k$-means~\cite{AhmadianNSW16}. Furthermore, if $\Delta$ is larger than $\sqrt{k \log k \frac{\log N}{\wmin}}$, then one can instead use the simpler and faster $k$-means++ approximation algorithm~\cite{AV07}. The above lemma has a bad dependence on $\wmin$. 
However, using the Boosting technique of~\cite{KK10} we can reduce the dependence to $\Delta > 25 \sqrt{k \log N}$ and hence prove Proposition~\ref{prop:initialization-strong}. We provide a proof of this in the Appendix.

\section{Lower Bounds for Semi-random GMMs}
\label{sec:lower-bound}
We prove the following theorem. 
\begin{theorem}\label{thm:lowerbound}
For any $d,k \in Z_+$, there exists $N_0=\poly(d,k)$ and a universal constant $c_1>0$ such that the following holds for all $N\ge N_0$ and $\Delta$ such that  $\sqrt{\log N} \le \Delta \le d/(64 \log d)$.  There exists an instance $\calX$ on $N$ points in $d$ dimensions with planted clustering $C_1, \dots, C_k$ generated by applying semi-random perturbations to points generated from a mixture of spherical Gaussians with means $\mu_1, \mu_2, \dots, \mu_k$, covariance $\sigma^2 I$ and weights being $1/k$ each, with separation $\forall i \ne j \in [k], \norm{\mu_i - \mu_j}_2 \ge \Delta \sigma$, such that any locally optimal $k$-means clustering solution $C'_1, C'_2, \dots, C'_k$ of $\calX$ satisfies w.h.p. 
$$ \min_{\pi \in \text{Perm}_k}\sum_{i=1}^k |C'_{\pi(i)} \triangle C_i| \ge \frac{c_1 k d}{\Delta^4}.$$
It suffices to set $N_0(d,k):=c_0 k^2 d^{3/2}\log^2(kd)$, where $c_0>0$ is a sufficiently large universal constant.  
\end{theorem}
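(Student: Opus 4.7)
The plan is to exhibit an explicit semi-random GMM instance in which every cluster $C_i$ contains $m := \lceil c_1 d/\Delta^4\rceil$ ``bad'' points that, with respect to the perturbed empirical means, lie strictly closer to a neighbouring mean than to their own; a stability argument then forces any locally optimal $k$-means solution to misclassify every bad point.

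\textbf{Construction.} First place $\mu_1,\dots,\mu_k$ with $\norm{\mu_i-\mu_j}_2 = \Delta\sigma$ for all $i\ne j$ (e.g., $\mu_i = (\Delta\sigma/\sqrt{2})\,f_i$ for orthonormal $f_i$'s, which fits since the parameter range forces $d \ge k$), and sample $|C_i|=N/k$ points from $\calN(\mu_i,\sigma^2 I)$. For each $i$ pick an arbitrary $B_i\subseteq C_i$ of size $m$ and let $\hat e_i$ be the unit vector along the centroid $\frac{1}{m}\sum_{t\in B_i}(y^{(t)}-\mu_i)$ after projecting onto $\mathrm{span}(\mu_1,\dots,\mu_k)^\perp$. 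A standard $\chi^2$-concentration calculation (using $m\le d$) gives $\norm{\sum_{t\in B_i}(y^{(t)}-\mu_i)}_2 = (1\pm o(1))\sigma\sqrt{md}$, and hence $\iprod{y^{(t)}-\mu_i,\hat e_i} \ge c_2\sigma\sqrt{d/m} = c_2\sigma\Delta^2$ uniformly for every $t\in B_i$ and every $i\in[k]$, with high probability. Finally, apply Lemma~\ref{lem:movingcenter} cyclically to every cluster, perturbing the points of $C_j\setminus B_{j-1}$ so that the empirical mean of $C_j$ becomes $\tilde\mu_j := \mu_j + c\sigma\,\hat e_{j-1}$ (indices mod $k$) for a sufficiently large constant $c$.

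\textbf{Margin at bad points.} For each $x\in B_i$, a direct expansion (using $\hat e_j\perp\mu_\ell-\mu_{\ell'}$ for all $j,\ell,\ell'$) yields
\[
\norm{x-\tilde\mu_{i+1}}_2^2 - \norm{x-\tilde\mu_i}_2^2 \;=\; \Delta^2\sigma^2 + 2\iprod{x-\mu_i,\mu_i-\mu_{i+1}} - 2c\sigma\iprod{x-\mu_i,\hat e_i} + 2c\sigma\iprod{x-\mu_i,\hat e_{i-1}}.
\]
The dominant shift term $-2c\sigma\iprod{x-\mu_i,\hat e_i}$ is at most $-2cc_2\Delta^2\sigma^2$ by the construction of $\hat e_i$; the remaining two terms have magnitude $O(\Delta\sigma^2\sqrt{\log N})$ by \eqref{eq:sr:innerprod:means} and \eqref{eq:sr:innerprod} respectively, which is $O(\Delta^2\sigma^2)$ throughout $\Delta\ge\sqrt{\log N}$. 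Choosing $c$ a sufficiently large absolute constant then yields a uniform margin $\norm{x-\tilde\mu_{i+1}}_2^2 \le \norm{x-\tilde\mu_i}_2^2 - c_3\Delta^2\sigma^2$ for a constant $c_3>0$ and every $x\in B_i$.

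\textbf{Stability.} Let $(C'_1,\dots,C'_k)$ be any locally optimal clustering with $\mu'_i:=\mu(C'_i)$; the perturbation magnitude $O(\sigma)\ll\Delta\sigma$ together with Lemma~\ref{lem:semirandom-mean} forces the optimal permutation $\pi$ in the theorem statement to be the identity after relabelling. Suppose for contradiction that $\sum_i|C_i\triangle C'_i| < mk/2$. Writing $\mu'_i-\tilde\mu_i = |C_i|^{-1}\bigl(\sum_{y\in C'_i\setminus C_i}y-\sum_{y\in C_i\setminus C'_i}y\bigr)$ up to a negligible size-difference correction, and observing that each swapped $y$ is essentially independent of a fixed $x\in B_i$, a Gaussian-chaos tail bound combined with a union bound over the $N^{O(mk)}$ candidate swap-sets gives
\[
|\iprod{x-\tilde\mu_i,\mu'_i-\tilde\mu_i}| + |\iprod{x-\tilde\mu_{i+1},\mu'_{i+1}-\tilde\mu_{i+1}}| \;\le\; O\!\left(\frac{k m \sigma^2\sqrt{d\log N}}{N}\right) \;=\; o(\Delta^2\sigma^2),
\]
using $m = \Theta(d/\Delta^4)$ and the assumption $N\ge N_0 = c_0k^2d^{3/2}\log^2(kd)$. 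Combining with the margin step, $\norm{x-\mu'_{i+1}}_2 < \norm{x-\mu'_i}_2$ for every $x\in B_i$, so local optimality forces $x\notin C'_i$; hence $B_i\subseteq C_i\setminus C'_i$ and $\sum_i|C_i\triangle C'_i|\ge km \ge c_1 kd/\Delta^4$, a contradiction. The main difficulty is precisely this uniform estimate: because $\mu'_i-\tilde\mu_i$ is adversarial in $(C'_i)$, the concentration must hold simultaneously over exponentially many swap-sets, and the precise choice $N_0(d,k)=c_0k^2d^{3/2}\log^2(kd)$ is calibrated so that this goes through for every $\Delta$ in the stated range $\sqrt{\log N}\le\Delta\le d/(64\log d)$.
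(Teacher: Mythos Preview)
Your overall architecture mirrors the paper's: pick $m=\Theta(d/\Delta^4)$ ``bad'' points per cluster, define $\hat e_i$ along their centroid, use Lemma~\ref{lem:movingcenter} to shift a neighbouring empirical mean along $\hat e_i$, show a margin at the bad points, and close with a stability argument. The paper uses a pairing $(C_1,C_2),(C_3,C_4),\dots$ with the pairs placed infinitely far apart and perturbs only the even clusters, whereas you run a cyclic scheme and perturb every cluster; your projection of $\hat e_i$ onto $\mathrm{span}(\mu_\ell)^\perp$ is a nice device that the paper does not need because its pairing isolates the interactions.

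There is, however, a genuine gap in the margin step. You write that the empirical mean becomes $\tilde\mu_j=\mu_j+c\sigma\hat e_{j-1}$ ``for a sufficiently large constant $c$'' and later ``Choosing $c$ a sufficiently large absolute constant then yields a uniform margin.'' But $c$ is \emph{not} a free parameter: Lemma~\ref{lem:movingcenter} gives exactly $c=1/\sqrt{2\pi}\approx 0.40$, and this is the maximum shift any monotone perturbation can achieve (moving every point with negative $\hat e$-projection to the mean already saturates $\E[\max(0,\langle y-\mu,\hat e\rangle)]=\sigma/\sqrt{2\pi}$). With $c$ pinned below $1$, the inequality $2cc_2>1+O(1)$ cannot be obtained by enlarging $c$. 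The paper's fix is to treat $c_1$ (the constant in $m=c_1 d/\Delta^4$) as the free parameter: Lemma~\ref{lem:lengthofu} gives $\langle x-\mu_i,\hat e_i\rangle\ge\tfrac12\sigma\sqrt{d/m}=\tfrac{1}{2\sqrt{c_1}}\sigma\Delta^2$, so the shift term is $-\tfrac{2}{\sqrt{2\pi}}\sigma\cdot\tfrac{1}{2\sqrt{c_1}}\sigma\Delta^2$, and choosing $c_1$ small (not $c$ large) makes this dominate $\Delta^2\sigma^2$. Your write-up needs to be rewired accordingly.

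Two smaller points. First, ``perturbing the points of $C_j\setminus B_{j-1}$'' is confused: $B_{j-1}\subseteq C_{j-1}$, so $C_j\setminus B_{j-1}=C_j$. In a cyclic scheme you must exclude $B_j$ (not $B_{j-1}$) from the perturbation of $C_j$, or else half the bad points are sent to $\mu_j$ and lose the crucial inner product $\langle x-\mu_j,\hat e_j\rangle\gtrsim\sigma\sqrt{d/m}$. The paper sidesteps this entirely by leaving the odd clusters unperturbed. Second, your stability step via a Gaussian-chaos union bound over $N^{O(mk)}$ swap-sets is much heavier than necessary and the stated bound $O(km\sigma^2\sqrt{d\log N}/N)$ does not obviously follow (the swapped-in points carry an offset $\mu_j-\mu_i$ contributing a term of order $\Delta\sigma$ per point, and the number of swap-sets inflates the $\log$ factor to $mk\log N$). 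The paper instead uses the elementary Lemma~\ref{lem:smallimpurity}, $\norm{\mu'_i-\tilde\mu_i}_2\le 4\sigma\,\tfrac{|C_i\triangle C'_i|}{|C_i|}(\sqrt d+2\sqrt{\log N}+\Delta)$, which is just a triangle-inequality calculation and feeds directly into Lemma~\ref{lem:misclassification}; this is both simpler and what calibrates the claimed $N_0=c_0 k^2 d^{3/2}\log^2(kd)$.
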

\begin{remark}
Note that the lower bound also applies in particular to the more general semi-random model in Definition~\ref{def:srmodel}; in this instance, the points are drawn i.i.d. from the mixture of spherical Gaussians, before applying semi-random perturbations. Further, this lower bound holds for any {\em locally optimal solution}, and not just the optimal solution.   
\end{remark}

The lower bound construction will pick an arbitrary $\Omega(d/\Delta^4)$ points from $k/2$ clusters, and carefully choose a semi-random perturbation to all the points so that these $\Omega(kd/\Delta^4)$ points are misclassified.   
We start with a simple lemma that shows that an appropriate semi-random perturbation can move the mean of a cluster by an amount $O(\sigma)$ along any fixed direction.
\begin{lemma}\label{lem:movingcenter}
Consider a spherical Gaussian in $d$ dimensions with mean $\mu$ and covariance $\sigma^2 I$, and let $\ehat$ be a fixed unit vector. Consider the semi-random perturbation given by 
$$\forall y \in \R^d, h(y)= \begin{cases} \mu & ~\text{ if } \iprod{y-\mu, \ehat}<0\\ y & ~\text{ otherwise }  \end{cases}.$$
Then we have $\E[h(y)]=\mu+\tfrac{1}{\sqrt{2\pi}} \sigma \ehat$. 
\end{lemma}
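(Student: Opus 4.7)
The plan is to decompose $y-\mu$ into its component along $\ehat$ and its component in the orthogonal complement, exploit the spherical symmetry of the Gaussian to get independence of these components, and then reduce the computation to a one-dimensional half-normal expectation.

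First I would rewrite $h(y) = \mu + (y-\mu)\mathbf{1}\{\langle y-\mu,\ehat\rangle \ge 0\}$, so that
\begin{equation*}
\E[h(y)] - \mu = \E\bigl[(y-\mu)\,\mathbf{1}\{\langle y-\mu,\ehat\rangle \ge 0\}\bigr].
\end{equation*}
Next I would let $Z := \langle y-\mu,\ehat\rangle$ and $W := (y-\mu) - Z\ehat$, so $y-\mu = Z\ehat + W$. Because the Gaussian is spherical with covariance $\sigma^2 I$, the scalar $Z$ and the vector $W$ are jointly Gaussian with zero cross-covariance, hence independent; moreover $Z \sim \mathcal{N}(0,\sigma^2)$ and $\E[W]=0$.

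Plugging the decomposition in gives
\begin{equation*}
\E[h(y)] - \mu = \E[Z\,\mathbf{1}\{Z\ge 0\}]\,\ehat + \E[W\,\mathbf{1}\{Z\ge 0\}].
\end{equation*}
The second term vanishes by independence: $\E[W\,\mathbf{1}\{Z\ge 0\}] = \E[W]\,\Pr[Z\ge 0] = 0$. The first term is the standard half-normal mean computation
\begin{equation*}
\E[Z\,\mathbf{1}\{Z\ge 0\}] = \int_0^\infty \frac{z}{\sqrt{2\pi}\,\sigma}\, e^{-z^2/(2\sigma^2)}\,dz = \frac{\sigma}{\sqrt{2\pi}},
\end{equation*}
which yields the claimed identity $\E[h(y)] = \mu + \tfrac{1}{\sqrt{2\pi}}\sigma\,\ehat$.

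There is no real obstacle here; the only subtle point is noting that sphericity (not just Gaussianity) is what gives the independence of $Z$ and $W$, which in turn eliminates the perpendicular contribution. The rest is a one-line integral.
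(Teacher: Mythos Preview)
Your proof is correct and follows essentially the same idea as the paper's: both decompose $y-\mu$ into the component along $\ehat$ and the orthogonal component, use sphericity to obtain independence so the orthogonal part contributes nothing, and reduce to the one-dimensional half-normal mean. The only cosmetic difference is that the paper first normalizes to $\mu=0$, $\sigma=1$, $\ehat=e_1$ and phrases the computation via conditioning on the sign of $y_1$, whereas you work coordinate-free with an indicator; the substance is identical.
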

\begin{proof}
We assume without loss of generality that  $\mu=0, \sigma=1$ (by shifting and scaling) and $\ehat=(1,0,0,\dots, 0) \in \R^d$ (by the rotational symmetry of a spherical Gaussian). Let $\gamma$ be the p.d.f. of the standard Gaussian in $d$ dimensions with mean $0$, and $\gamma'(y)$ be the distribution on $y$ conditioned on the event $[y(1)=\iprod{y, \ehat}>0]$. First, we observe that $\E[h(y)| y_1 <0 ]=0$ from construction, and $\E[h(y)|y_1>0]=\E_{y \sim \gamma'(y)}[y]$. Further, since the $(d-1)$ co-ordinates of $y$ orthogonal to $\ehat$ are independent of $y_1$,
\begin{align*}
\E[h(y)]&=\Pr[y_1<0] \E[h(y)|y_1<0] +\Pr[y_1>0] \E[h(y)|y_1>0] = \frac{1}{2} \E[y_1|y_1>0]  \ehat\\
\E[h(y)]-\mu & \E[h(y)]=\parens[\Big]{\frac{1}{2\sqrt{2\pi}}\int_{-\infty}^{\infty} |y_1| \exp(-y_1^2/2) ~dy_1} \ehat = \frac{\sigma}{\sqrt{2\pi}} \ehat.
\end{align*}
\end{proof}

\paragraph{Construction.}
Set $m:=c_1 d/\Delta^4$ for some appropriately small constant $c_1 \in (0,1)$. We assume without loss of generality that $k$ is even (the following construction also works for odd $k$ by leaving the last cluster unchanged). 
We pair up the clusters into $k/2$ pairs $\set{(C_1,C_2), (C_3, C_4), \dots, (C_{k-1},C_k)}$, and we will ensure that $m$ points are misclassified in each of the $k/2$ clusters $C_1, C_3, \dots, C_{k-1}$. The parameters of the mixture of spherical Gaussians $\calG$ are set up as follows. For each $i \in {1,3,5,\dots,k-1}$, $\norm{\mu_i-\mu_{i+1}}_2 = \Delta \sigma$, and all the other inter-mean distances (across different pairs) are at least $M\sigma$ which is arbitrarily large (think of $M \mapsto \infty$).	 
 
\begin{itemize}
\item Let for any $i \in \set{1,3,\dots, k-1}$, $Z_{i} \subset C_{i}$ be the first $m$ points in cluster $C_{i}$ respectively among the samples $y^{(1)},\dots, y^{(N)}$ drawn from $\calG$ (these $m$ points inside the clusters can be chosen arbitrarily). Set $Z_i =\emptyset$ for $i \in \set{2,4,\dots,k}$.   
\item For each $i \in \set{1,3,\dots,k-1}$, set $\ehat_i$ to be the unit vector along 
$u_i=\frac{1}{\sigma \sqrt{md}}\sum_{y \in Z_{i}} (y-\mu_{i})$.
\item For each $i \in \set{1,3,\dots, k-1}$ apply the following semi-random perturbation given by Lemma~\ref{lem:movingcenter} to points in cluster $C_{i+1}$ along $\ehat_i$, i.e., each point $y^{(t)} \in C_{i+1}$
$$x^{(t)} = h(y^{(t)})=\begin{cases} \mu_{i+1} & \text{if } \iprod{y^{(t)}-\mu_{i+1},\ehat_i}<0\\ y^{(t)} & \text{otherwise} \end{cases}.$$
\end{itemize}

Note that the semi-random perturbations are only made to points in the even clusters (based on a few points in its respective odd cluster). The lower bound proof proceeds in two parts. Lemma~\ref{lem:meanmovement} (using Lemma~\ref{lem:movingcenter}) and Lemma~\ref{lem:smallimpurity} shows that in any $k$-means optimal clustering the means of each even cluster $C_i$ moves by roughly $\Omega(\sigma) \cdot \ehat_{i-1}$. Lemma~\ref{lem:misclassification} then shows that these means will classify all the $m$ points in $Z_{i-1}$ {\em incorrectly} w.h.p. In this proof w.h.p. will refer to a probability of at least $1-o(1)$ unless specified otherwise (this can be made $1-1/\poly(m,k)$ by choosing suitable constants).

We start with two simple concentration statements about the points in $Z_i$ (from Lemma~\ref{lem:lengthconc} and Lemma~\ref{lem:sr:innerprod:means}). We have with probability at least $1-1/(mk)$, 
\begin{align}
\forall i \in \set{1,3,k-1}, ~\forall t \in Z_{i}~~ \norm{x^{(t)}-\mu_i}_2 \le \sigma(\sqrt{d}+ 2\sqrt{\log (mk)}) \label{eq:lowerbound:lengthconc}\\
\forall i \in \set{1,3,k-1}, ~\forall t \in Z_{i}~~ \abs{\iprod{x^{(t)}-\mu_i, \mu_i - \mu_{i+1}}} \le 2 \sqrt{\log (mk)} \Delta \sigma^2 \label{eq:lowerbound:innerprod}
\end{align}

We start with a couple of simple claims about the unit vectors $\ehat_1, \ehat_3,\dots, \ehat_{k-1}$. 
\begin{lemma}\label{lem:lengthofu}
In the above construction, for every $i \in \set{1,3,\dots, k-1}$ we have w.h.p. $\norm{\ehat_i-u_i}^2_2 \le 6 \sqrt{m\log (mk)/d}$. Further, 
for each $x \in Z_i$, we have $\iprod{x-\mu_i, \ehat_i} \ge \tfrac{1}{2} \sigma \sqrt{d/m}$.
\end{lemma}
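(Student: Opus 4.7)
Since the semi-random perturbation only acts on the even-indexed clusters, for every odd $i \in \{1,3,\dots,k-1\}$ and every point $x$ in $Z_i$, we have $x - \mu_i = y - \mu_i$ where $y - \mu_i \sim \mathcal{N}(0, \sigma^2 I)$ i.i.d. Consequently $v_i := \sum_{y \in Z_i}(y - \mu_i) \sim \mathcal{N}(0, m\sigma^2 I)$, so the rescaled vector $u_i = v_i/(\sigma\sqrt{md})$ is distributed as $\mathcal{N}(0, I/d)$. The plan is to handle the two assertions separately, both as consequences of standard Gaussian concentration together with a union bound over the $k/2$ odd indices (and over the $m$ points in $Z_i$ for the second part).

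For the first assertion I would apply a chi-squared concentration inequality (e.g., Lemma~\ref{lem:Gaussian-variance-concentration} or a direct Laurent-Massart bound) to $d\|u_i\|_2^2$ to deduce that w.h.p.\ $\bigl|\|u_i\|_2^2 - 1\bigr| \le c\sqrt{\log(mk)/d}$ for some absolute constant $c$, and hence $|1-\|u_i\|_2| \le c\sqrt{\log(mk)/d}$. Since $\ehat_i - u_i = u_i(1/\|u_i\|_2 - 1)$, this gives $\|\ehat_i - u_i\|_2^2 = (\|u_i\|_2 - 1)^2 \le c^2 \log(mk)/d$. Because $m \ge 1$ this is in particular at most $6\sqrt{m\log(mk)/d}$, as required.

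For the second assertion, fix $x = y \in Z_i$ and split
\[
\iprod{y - \mu_i, u_i} \;=\; \frac{\|y - \mu_i\|_2^2}{\sigma\sqrt{md}} \;+\; \frac{1}{\sigma\sqrt{md}}\sum_{y' \in Z_i,~ y' \ne y} \iprod{y - \mu_i,\, y' - \mu_i}.
\]
The ``diagonal'' term concentrates (again by chi-squared) around $\sigma\sqrt{d/m}$ with multiplicative deviation $O(\sqrt{\log(mk)/d})$. Conditioning on $y$, the ``cross'' term is a Gaussian with variance $(m-1)\sigma^2 \|y-\mu_i\|_2^2/(md\sigma^2)$, so together with the length bound $\|y-\mu_i\|_2 \le \sigma(\sqrt{d}+2\sqrt{\log(mk)})$ from \eqref{eq:lowerbound:lengthconc}, it is bounded in magnitude by $O(\sigma\sqrt{\log(mk)})$ w.h.p. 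Combining, $\iprod{y-\mu_i, u_i} \ge \sigma\sqrt{d/m} - O(\sigma\sqrt{\log(mk)})$. Since $d/m = \Delta^4/c_1$ and $\Delta^2 \ge \log N \ge \log(mk)$ (as $N \ge N_0(d,k)$), the $\sqrt{d/m}$ term dominates the error by a large constant factor, and the bound from the first part ($\|u_i\|_2 \le 1 + o(1)$) converts $u_i$ to $\ehat_i$ while losing at most a factor of $2$, yielding $\iprod{x-\mu_i, \ehat_i} \ge \tfrac{1}{2}\sigma\sqrt{d/m}$.

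The only mildly delicate step is verifying that the various error terms are genuinely dominated by the main signal under the stated parameter regime ($\Delta \le d/(64\log d)$ makes $m \ge 1$ meaningful, $\Delta \ge \sqrt{\log N}$ and $N \ge N_0$ ensure $\log(mk) \ll \Delta^4$); everything else is a clean application of Gaussian concentration plus a union bound over the $O(mk)$ events.
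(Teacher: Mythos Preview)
Your proposal is correct, and for the second assertion it is essentially the paper's argument: split $\iprod{y-\mu_i,u_i}$ into the diagonal term $\|y-\mu_i\|_2^2/(\sigma\sqrt{md})$ and the cross term $Q(t)=\sum_{t'\ne t}\iprod{\bar y^{(t)},\bar y^{(t')}}$, bound the latter as a Gaussian (conditioned on $\bar y^{(t)}$) of standard deviation $O(\sigma\sqrt{m}\,\|\bar y^{(t)}\|_2)$, and use $64 m\log(mk)\le d$ to make the error negligible compared to the signal $\sigma\sqrt{d/m}$.

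For the first assertion you take a genuinely cleaner route than the paper. The paper expands $\|u_i\|_2^2$ into the diagonal sum $\tfrac{1}{\sigma^2 md}\sum_t\|\bar y^{(t)}\|_2^2$ and the off-diagonal sum $\tfrac{2}{\sigma^2 md}\sum_t Q(t)$, controls each $Q(t)$ separately, and then sums, which is where the extra $\sqrt{m}$ in the stated bound $6\sqrt{m\log(mk)/d}$ comes from. You instead observe that $u_i$ is itself distributed as $\calN(0,I/d)$ (a rescaled sum of $m$ i.i.d.\ spherical Gaussians), so a single chi-square tail bound on $d\|u_i\|_2^2$ gives $|\|u_i\|_2^2-1|=O(\sqrt{\log(k)/d})$ directly. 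This is both shorter and sharper; the looser bound in the lemma statement is all that is used downstream (only $\|u_i\|_2\in[1/2,2]$ is needed), so either argument suffices. One small pointer: the concentration lemma you want to cite here is Lemma~\ref{lem:lengthconc} (Laurent--Massart), not Lemma~\ref{lem:Gaussian-variance-concentration}.
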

\begin{proof}
Let us fix an $i \in \set{1,3,\dots, k-1}$. Let $y^{(1)}, y^{(2)}, \dots, y^{(m)} \in Z_i$ and $\ybar^{(t)}=y^{(t)}-\mu_i$. 

From \eqref{eq:lowerbound:lengthconc}, we know that w.h.p., $\norm{\ybar^{(t)}}_2 \le \sigma(\sqrt{d}+2\sqrt{\log m})~ \forall t \in [m]$. 
Fix $t \in [m]$, and let $Q(t)= \sum_{t' \in [m]\setminus \set{t} } \iprod{\ybar^{(t)}, \ybar^{(t')}}$. For $t' \ne t$, due to independence and spherical symmetry,  $\tfrac{1}{\norm{\ybar^{(t)}}_2}\iprod{\ybar^{(t)}, \ybar^{(t')}}$ is distributed as a normal r.v. with mean $0$ and variance $\sigma^2$. Further, $Q(t)/\norm{y^{(t)}}_2$ is distributed as a normal r.v. with mean $0$ and variance $\sigma^2 m$. Hence, 
\begin{equation}\label{eq:qt}
Q(t) = \norm{y^{(t)}}_2 \cdot \sum_{t' \in [m]\setminus \set{t}} \inner[\big]{\ybar^{(t')}, \frac{\ybar^{(t)}}{\norm{\ybar^{(t)}}_2}} \le \sigma^2(\sqrt{d}+\sqrt{\log (mk)})\cdot 2\sqrt{m \log (mk)},  
\end{equation}
with probability at least $1-1/(mk)^2$. Hence, w.h.p. $Q(t) \le 4\sigma^2 \sqrt{d m \log (mk)}$ for all $t \in [m]$. 

For the first part, we see that
$$\norm{u_i}_2^2=\frac{1}{\sigma^2 md} \parens[\Big]{\sum_{t \in [m]} \norm{\ybar^{(t)}}_2^2  + 2\sum_{t \ne t' \in [m]}\iprod{\ybar^{(t)},\ybar^{(t')}} }=\frac{1}{\sigma^2 md} \parens[\Big]{\sum_{t \in [m]} \norm{\ybar^{(t)}}_2^2  + 2\sum_{t \in [m]} Q(t)}.$$
 Along with \eqref{eq:lowerbound:lengthconc}, the bound on $Q(t)$ and $\E[\norm{y^{(t)}}_2^2]=d \sigma^2$, this implies 
\begin{align*}
\abs{\norm{u_i}_2^2 - 1}&\le \frac{1}{md} \parens{4m \sqrt{d \log (mk)}+4m \log(mk)+ 4m \sqrt{dm \log (mk)}} \text{ w.p. at least } 1- 1/(mk) \\
\abs{\norm{u_i}_2^2 - 1}&\le 6 \sqrt{\frac{m \log (mk)}{d}}  ~\text{ with probability at least } 1-1/(mk). 
\end{align*}
Since $\ehat_i$ is the unit vector along $u_i$, and performing a union bound over all $i$ we have that w.h.p.,  $\norm{\ehat_i - u_i}^2_2 \le 6  \sqrt{m \log (mk)/d}$. 

For the furthermore part, suppose $x=y^{(t)}$ for some $t \in [m]$ then 
\begin{align*}
\iprod{\xbar, \ehat_i}&= \frac{1}{\norm{u_i}_2 \sqrt{md}} \sum_{t' \in [m]} \iprod{y^{(t)}, y^{(t')}} \ge \frac{1}{\norm{u_i}_2 \sqrt{md}} \parens[\Big]{ \norm{\ybar^{(t)}}_2^2 -  \sum_{t' \ne t} \abs{\iprod{\ybar^{(t)}, \ybar^{(t')}}}}\\
&\ge \frac{\sigma^2}{\norm{u_i}_2 \sqrt{md}}\parens[\Big]{(d-\sqrt{d \log (mk)}) - Q(t) }\ge \frac{\sigma^2}{\norm{u_i}_2 \sqrt{md}} \parens[\big]{d - 4\sqrt{dm \log (mk)}}\\
&\ge \frac{\sigma^2}{4} \sqrt{d}{m}, 
\end{align*}
since $64 m \log m \le d$ and $\norm{u_i}_2 \le 2$ w.h.p. 
\end{proof}

Let $\tmu_1, \dots, \tmu_k$ be the (empirical) means of the clusters in the planted clustering $C_1, C_2, \dots, C_k$ after the semi-random perturbations. The following lemma shows that $\norm{\tmu_i - \mu_i}_2 \le \sigma$. 

\begin{lemma}\label{lem:meanmovement}
There exists a universal constant $c_3>0$ s.t. for the semi-random instance $\calX$ described above, we have that w.h.p. 
\begin{align*}
\forall i \in [k], ~ \tmu_i &=\begin{cases} \mu_i + \tfrac{1}{\sqrt{2\pi}}\sigma \ehat_{i-1} + z_{i} & \text{ if } i \text{ is even} \\
\mu_i + z_{i} & \text{ if } i \text{ is odd}
\end{cases}, ~~\text{where }~  \norm{z_i}_2 \le c_3 \sigma \sqrt{\frac{dk}{N}}.
\end{align*}
\end{lemma}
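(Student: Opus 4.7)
The plan is to handle odd and even indices separately, exploiting the crucial independence that $\ehat_{i-1}$ is computed only from the samples in $Z_{i-1} \subseteq C_{i-1}$ and is therefore independent of the samples indexed by $C_i$. As a preliminary reduction, standard multinomial concentration combined with the equal mixing weights $1/k$ and the hypothesis $N \ge N_0 = c_0 k^2 d^{3/2} \log^2(kd)$ guarantees that $N_i \ge N/(2k)$ simultaneously for all $i \in [k]$ w.h.p.; it then suffices to prove $\|z_i\|_2 \le c \sigma \sqrt{d/N_i}$, which gives the stated bound.

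For odd $i$, no semi-random perturbation is applied to $C_i$, so $\tmu_i$ is simply the empirical mean of $N_i$ i.i.d.~samples from $\mathcal{N}(\mu_i, \sigma^2 I)$. A standard $\chi^2$ tail bound on $\tmu_i - \mu_i \sim \mathcal{N}(0, (\sigma^2/N_i) I)$ gives $\|\tmu_i - \mu_i\|_2 \le c \sigma \sqrt{d/N_i}$ w.h.p., and we may take $z_i := \tmu_i - \mu_i$.

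For even $i$, I would condition on the realization of $\ehat := \ehat_{i-1}$ and treat it as a fixed unit vector. Under this conditioning, the samples $\{y^{(t)}: t \in C_i\}$ remain i.i.d.~$\mathcal{N}(\mu_i, \sigma^2 I)$, and Lemma~\ref{lem:movingcenter} gives $\E[x^{(t)}] = \mu_i + \tfrac{1}{\sqrt{2\pi}}\sigma \ehat$. Setting $z_i := \tmu_i - \mu_i - \tfrac{1}{\sqrt{2\pi}}\sigma \ehat$, it remains to control the empirical average of the i.i.d.~mean-zero vectors $W^{(t)} := x^{(t)} - \E[x^{(t)}]$. For any unit $v \in \R^d$, the projection $\iprod{W^{(t)}, v}$ equals either $-\tfrac{1}{\sqrt{2\pi}}\sigma \iprod{\ehat, v}$ or $\iprod{y^{(t)} - \mu_i, v} - \tfrac{1}{\sqrt{2\pi}}\sigma \iprod{\ehat, v}$, so it is a truncation of a Gaussian of variance $\sigma^2$ and is therefore sub-Gaussian with parameter $O(\sigma)$. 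A Hoeffding-type bound on the sum yields $|\iprod{\frac{1}{N_i}\sum_t W^{(t)}, v}| \le O(\sigma t /\sqrt{N_i})$ with failure probability $e^{-\Omega(t^2)}$ for any fixed $v$.

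To pass from a fixed direction to the worst direction, I would use the standard $(1/2)$-net argument: the unit sphere in $\R^d$ admits such a net of cardinality $e^{O(d)}$, and choosing $t = \Theta(\sqrt{d})$ union-bounds the Hoeffding tail over the net to conclude $\|\frac{1}{N_i}\sum_t W^{(t)}\|_2 \le O(\sigma \sqrt{d/N_i})$ w.h.p. A final union bound over the $k$ clusters completes the proof. The main technical obstacle is obtaining sub-Gaussian control of $\iprod{W^{(t)}, v}$ along \emph{every} direction $v$, not merely along $\ehat$; this is handled by observing that the perturbation $h$ is a measurable pointwise contraction of the Gaussian vector $y^{(t)} - \mu_i$ toward the origin along $\ehat$, so every projection is either zero or agrees with the corresponding Gaussian projection and inherits its sub-Gaussian tail.
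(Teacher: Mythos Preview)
Your proposal is correct and follows essentially the same two-case structure as the paper: for odd $i$ apply Gaussian mean concentration directly, and for even $i$ invoke Lemma~\ref{lem:movingcenter} (after conditioning on $\ehat_{i-1}$, which is independent of $C_i$) to identify the shifted expectation, then concentrate the empirical mean. The paper's own proof is terser and simply cites Lemma~\ref{lem:Gaussian-mean-concentration} for the even case as well, glossing over the fact that the perturbed samples $h(y^{(t)})$ are no longer Gaussian; your sub-Gaussian plus $\eps$-net argument is a more careful way to fill exactly that gap, and your pointwise domination $|\iprod{x^{(t)}-\mu_i,v}| \le |\iprod{y^{(t)}-\mu_i,v}|$ is the right observation to drive it.
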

\begin{proof}
The lemma follows in a straightforward way from Lemma~\ref{lem:movingcenter} and by standard concentration bounds. Firstly, the clusters $C_i$ for odd $i$ are unaffected by the perturbation. Hence, $\E_{x \in C_i}[x]=\mu_i$ and from Lemma~\ref{lem:Gaussian-mean-concentration}, the empirical mean of the points in $C_i$ (there are at least $N/(2k)$ of them w.h.p.) gives the above lemma.  Consider any even $i$. From Lemma~\ref{lem:movingcenter}, the semi-random perturbation applied to the points in $C_{i}$ along the direction $\ehat_{i-1}$ ensures that $\E_{x \in C_i}[x]=\mu_i +\tfrac{\sigma}{\sqrt{2\pi}} \ehat_{i-1}$. Again by Lemma~\ref{lem:Gaussian-mean-concentration} applied to the points from $C_i$, the lemma follows.  
\end{proof}

The following lemma shows that if $C_i, C'_i$ are close, then the empirical means are also close.  
\begin{lemma}\label{lem:smallimpurity}
Consider any cluster $C_i$ of the instance $\calX$, and let $C'_i$ satisfy $|C'_i \triangle C_i| \le m'$. Suppose $\tmu_i$ and $\mu'_i$ are the means of clusters $C_i$ and $C'_i$ respectively, then 
$$\norm{\mu'_i-\tmu_i}_2 \le 4 \sigma \cdot  \frac{m'}{|C_i|}\parens{\sqrt{d}+2\sqrt{\log N}+\Delta} .$$    
\end{lemma}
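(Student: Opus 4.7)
The plan is to express the difference $\mu'_i - \tmu_i$ directly in terms of the $m'$ points on which $C_i$ and $C'_i$ differ, and then apply the triangle inequality term-by-term. Set $A := C_i \setminus C'_i$ and $B := C'_i \setminus C_i$, so $|A|+|B| \le m'$ and $|C'_i| - |C_i| = |B| - |A|$. Starting from $|C'_i| \mu'_i = |C_i|\tmu_i - \sum_{x \in A} x + \sum_{x \in B} x$ and subtracting $|C'_i|\tmu_i$ from both sides gives the clean identity
$$|C'_i|(\mu'_i - \tmu_i) \;=\; \sum_{x \in B}(x - \tmu_i) \;-\; \sum_{x \in A}(x - \tmu_i),$$
so that $|C'_i|\cdot \norm{\mu'_i - \tmu_i}_2 \le \sum_{x \in A \cup B}\norm{x - \tmu_i}_2$ by the triangle inequality.

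Next I would bound $\norm{x-\tmu_i}_2$ separately for $x \in A$ and $x \in B$. For $x \in A \subseteq C_i$, \eqref{eq:sr:length} gives $\norm{x-\mu_i}_2 \le \sigma(\sqrt{d}+2\sqrt{\log N})$, and Lemma~\ref{lem:semirandom-mean} gives $\norm{\tmu_i - \mu_i}_2 \le 2\sigma$; combining yields $\norm{x-\tmu_i}_2 \le \sigma(\sqrt{d}+2\sqrt{\log N}) + 2\sigma$. For $x \in B$, writing $x \in C_j$ with $j \ne i$, a two-step triangle inequality via $\mu_j$ and $\mu_i$ gives $\norm{x-\tmu_i}_2 \le \norm{x-\mu_j}_2 + \norm{\mu_j-\mu_i}_2 + \norm{\mu_i-\tmu_i}_2 \le \sigma(\sqrt{d}+2\sqrt{\log N}) + \Delta \sigma + 2\sigma$, where the separation bound $\norm{\mu_i - \mu_j}_2 \le \Delta \sigma$ is used for the (partner) clusters from which $B$ can plausibly recruit points in the lower-bound construction.

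Summing the two bounds, each term in $A \cup B$ is at most $2\sigma(\sqrt{d}+2\sqrt{\log N}+\Delta)$ (since $\Delta, \sqrt{d} \ge 2$ in our regime), so
$$|C'_i|\cdot\norm{\mu'_i - \tmu_i}_2 \;\le\; 2 m' \sigma\bigparen{\sqrt{d}+2\sqrt{\log N}+\Delta}.$$
Finally, since $|C'_i| \ge |C_i| - m' \ge |C_i|/2$ (the conclusion of the lemma is vacuous unless $m' \le |C_i|/2$), dividing yields the claimed $4\sigma \cdot (m'/|C_i|)(\sqrt{d}+2\sqrt{\log N}+\Delta)$ bound.

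There is no real obstacle here—the content is entirely in the identity of the first step and in the triangle-inequality bookkeeping; the only mild care needed is to extract the factor of $2$ from $|C'_i| \ge |C_i|/2$, which together with the $2$ from bounding each $\norm{x-\tmu_i}_2$ gives the final constant $4$.
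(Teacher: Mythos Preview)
Your proof is correct and follows essentially the same approach as the paper: both express $\mu'_i-\tmu_i$ as a normalized sum over the at most $m'$ points in the symmetric difference, bound each $\norm{x-\tmu_i}_2$ (or $\norm{x-\mu_i}_2$) by $\sigma(\sqrt{d}+2\sqrt{\log N}+\Delta)$ using \eqref{eq:sr:length} and the fact that any $x\in C'_i\setminus C_i$ must come from the partner cluster at distance $\Delta\sigma$, and then absorb the factor $|C_i|/|C'_i|\le 2$ into the constant. Your identity $|C'_i|(\mu'_i-\tmu_i)=\sum_{x\in B}(x-\tmu_i)-\sum_{x\in A}(x-\tmu_i)$ is slightly cleaner than the paper's route via $\mu_i$, but the content is identical.
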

\begin{proof}
Let $i$ be even (an even cluster).   
First, we note that from our construction, all the points in $C'_i \setminus C_i \in C_{i-1}$ w.h.p., since the distance between the means $\norm{\mu_i- \mu_j}_2 \geq M \sigma$ when $j \notin \set{i-1, i}$, for $M$ that is chosen to be appropriately large enough. Further, $\norm{\mu_i - \mu_{i-1}}_2 =\Delta \sigma$. Let $\xbar=x-\mu_i$ if $i \in C_i$ and $\xbar=x-\mu_j$ if $x \in C_j$. Hence w.h.p.,
$$\forall x \in C'_i \cup C_i, ~~ \norm{x - \mu_i}_2 \le \Delta \sigma+ \norm{\xbar}_2 \le  \Delta \sigma+ (\sqrt{d}+2\sqrt{\log N}) \sigma.$$
Further, $\tmu_i$ is the empirical mean of all the points in $C_i$. Let $\delta=m'/|C_i|$.  
\begin{align*}
\mu'_i -\mu_i&=\frac{\sum_{x \in C_i} (x-\mu_i)}{|C'_i|} - \frac{\sum_{x \in C_i \setminus C'_i} (x-\mu_i)}{|C'_i|} + \frac{\sum_{x \in C'_i \setminus C_i} (x-\mu_i)}{|C'_i|}  \\
\mu'_i -\tmu_i&= (\mu'_i - \mu_i)- (\tmu_i - \mu_i) = (\mu'_i-\mu_i) + \frac{\sum_{x \in C_i} (x-\mu_i)}{|C_i|} \\
\text{Hence, } \mu'_i - \tmu_i &=\parens[\big]{\tfrac{|C_i|}{|C'_i|}-1} (\tmu_i-\mu_i) - \frac{1}{|C'_i|}\sum_{x \in C_i \setminus C'_i} (x-\mu_i) + \frac{1}{|C'_i|}\sum_{x \in C'_i \setminus C_i} (x-\mu_i)\\
\norm{\mu'_i - \tmu_i}_2 &\le \parens[\big]{\frac{\delta}{1-\delta}} \norm{\tmu_i-\mu_i}_2 + \parens[\big]{\frac{2\delta}{1-\delta}} \max_{x \in C_i \cup C'_i}\norm{x -\mu_i}\\
& \le \parens[\big]{\frac{2\delta \sigma}{1-\delta}} (1+ \Delta+\sqrt{d}+2\sqrt{\log N}) \le 4 \delta \sigma (\Delta+\sqrt{d}+2\sqrt{\log N}),
\end{align*}
where $\norm{\mu_i-\tmu_i}_2$ is bounded because of Lemma~\ref{lem:meanmovement}. 
A similar argument follows when $i$ is odd. 
\end{proof}

The following lemma shows that the Voronoi partition about $\tmu_1, \dots, \tmu_k$ (or points close to it) incorrectly classify all points in $Z_i$ for each $i \in [k]$.   
\begin{lemma}\label{lem:misclassification}
Let $\mu'_1,\mu'_2, \dots, \mu'_k$ satisfy $\norm{\mu'_i - \tmu_i}_2 \le \sigma/(16 \sqrt{m}(1+2\sqrt{\tfrac{\log N}{d}}))$, where $\tmu_i$ is the empirical mean of the points in $C_i$.  Then, we have w.h.p. that for each $i \in \set{1,3,\dots, k-1}$, $\norm{x-\mu'_i}_2^2 > \norm{x-\mu'_{i+1}}_2^2$, i.e., every point $x \in Z_i$ is misclassified. 
\end{lemma}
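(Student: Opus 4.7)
The plan is to recast the misclassification inequality as a half-space test: $\|x-\mu'_i\|_2^2 > \|x-\mu'_{i+1}\|_2^2$ is equivalent to $\langle x - w, v\rangle > 0$, where $v := \mu'_{i+1} - \mu'_i$ and $w := \tfrac12(\mu'_i + \mu'_{i+1})$. So the task reduces to lower-bounding this single inner product for every $x \in Z_i$ and every odd $i$.

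Next I would approximate $v$ and $w$ using the ingredients already in hand. By Lemma~\ref{lem:meanmovement} and the hypothesis on $\|\mu'_j - \tmu_j\|_2$, we have $\mu'_i \approx \mu_i$ and $\mu'_{i+1} \approx \mu_{i+1} + \tfrac{\sigma}{\sqrt{2\pi}}\ehat_i$, up to error terms controlled by $\sigma/(16\sqrt{m}(1+2\sqrt{\log N/d}))$ and $c_3 \sigma\sqrt{dk/N}$. Setting $\delta := \mu_{i+1}-\mu_i$ and $\alpha := \sigma/\sqrt{2\pi}$, expanding $\langle x-w, v\rangle$ with $\xbar := x-\mu_i$ gives, up to these small errors,
\begin{equation*}
\langle x-w, v\rangle = \langle \xbar,\delta\rangle + \alpha\langle\xbar,\ehat_i\rangle - \tfrac12\|\delta\|_2^2 - \alpha\langle\delta,\ehat_i\rangle - \tfrac{\alpha^2}{2}.
\end{equation*}

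The main work is then to verify that the dominant positive contribution $\alpha\langle\xbar,\ehat_i\rangle$ overwhelms the rest. By Lemma~\ref{lem:lengthofu} this term is at least $\tfrac12 \alpha \sigma\sqrt{d/m}$, and with $m = c_1 d/\Delta^4$ for a small universal constant $c_1$, the lower bound is of order $\Delta^2 \sigma^2/\sqrt{c_1}$, which beats the leading negative $\tfrac12\|\delta\|_2^2 = \tfrac12 \Delta^2\sigma^2$. The cross term $\langle \xbar, \delta\rangle$ is bounded via \eqref{eq:lowerbound:innerprod} by $2\sqrt{\log(mk)}\,\Delta\sigma^2$, while $\alpha\langle\delta,\ehat_i\rangle = O(\Delta\sigma^2)$ and $\alpha^2 = O(\sigma^2)$ are lower order in $\Delta$; since $\Delta \ge \sqrt{\log N} \ge \sqrt{\log mk}$, all of these remain dominated by $\alpha\langle\xbar,\ehat_i\rangle$ once $c_1$ is taken small enough.

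The subtle step I expect to be the main obstacle is carefully propagating the approximation errors from $\mu'_j$ to $\tmu_j$ to $\mu_j$ (or $\mu_j + \alpha\ehat_i$) into $\langle x-w, v\rangle$. The total error is controlled by $\|x-w\|_2 \cdot \|v - v^{\mathrm{ideal}}\|_2 + \|v^{\mathrm{ideal}}\|_2 \cdot \|w - w^{\mathrm{ideal}}\|_2$, where $\|x-w\|_2 \le (\sqrt{d}+2\sqrt{\log N}+\Delta)\sigma$ by Lemma~\ref{lem:sr:length}. The cancellation $(\sqrt{d}+2\sqrt{\log N})/(1+2\sqrt{\log N/d}) = \sqrt{d}$ that the hypothesis was engineered to provide turns this error into $O(\sigma^2\sqrt{d/m})$, i.e., a fixed fraction of the main signal, leaving strict positivity by the constant-factor slack above. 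A final union bound over the at most $mk$ points in $\bigcup_{i \text{ odd}} Z_i$ then yields the claim with high probability.
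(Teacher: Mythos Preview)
Your proposal is correct and follows essentially the same approach as the paper. The half-space reformulation $\langle x-w,v\rangle>0$ is algebraically identical (up to a factor of $2$) to the paper's direct expansion of $\norm{x-\mu'_i}_2^2-\norm{x-\mu'_{i+1}}_2^2$, and both proofs then invoke Lemma~\ref{lem:lengthofu} for the dominant positive term $\alpha\iprod{\xbar,\ehat_i}$, equation~\eqref{eq:lowerbound:innerprod} for the $\iprod{\xbar,\delta}$ cross term, and Lemma~\ref{lem:meanmovement} together with the hypothesis to control the perturbations $\mu'_j-\mu_j$. One small point: when you bound the approximation error by $\norm{x-w}_2\cdot\norm{v-v^{\mathrm{ideal}}}_2$ and claim the cancellation yields $O(\sigma^2\sqrt{d/m})$, remember that $\norm{x-w}_2$ also carries a $+\Delta\sigma$ contribution (and similarly $\norm{v^{\mathrm{ideal}}}_2\approx\Delta\sigma$), so there is an additional error of order $\Delta\sigma^2/\sqrt{m}=\Delta^3\sigma^2/\sqrt{c_1 d}$; this is harmless because the construction forces $m\ge 1$ and hence $\Delta=O(d^{1/4})$, but it is worth stating explicitly.
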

\begin{proof}
Let $i$ be odd, and consider a point $x$ in $Z_i$, and let $\xbar=x-\mu_i$.
\begin{align*}
&\norm{x-\mu'_i}_2^2 - \norm{x-\mu'_{i+1}}_2^2 = \norm{(x-\mu_i)+\mu_i-\mu'_i}_2^2 -  \norm{(x - \mu_{i+1})+(\mu_{i+1} - \mu'_{i+1})}_2^2  \\
&\quad =\norm{x-\mu_i}_2^2 - \norm{x-\mu_i + (\mu_i-\mu_{i+1})}_2^2
+ 2\iprod{x-\mu_{i}, \mu_i - \mu'_i} - 2\iprod{x-\mu_{i+1}, \mu_{i+1} - \mu'_{i+1}}\\
&\quad\quad\quad +\norm{\mu_{i} - \mu'_i}_2^2-\norm{\mu_{i+1} - \mu'_{i+1}}_2^2 \\
& \quad \ge 2\iprod{\xbar, \mu_i - \mu'_i} + 2\iprod{\xbar, \mu'_{i+1} - \mu_{i+1}}+2\iprod{\xbar,\mu_{i+1} - \mu_{i}}- \Delta^2 \sigma^2 - 2\iprod{\mu_i-\mu_{i+1}, \mu_{i+1} - \mu'_{i+1}}- \sigma^2 \\
& \quad \ge 2\iprod{\xbar, \mu_i - \mu'_i} + 2\iprod{\xbar, \mu'_{i+1} - \mu_{i+1}}- 4 \Delta \sqrt{\log (mk)} \sigma^2- \Delta^2 \sigma^2 - 2\Delta \sigma^2- \sigma^2,
\end{align*}
where the last inequality follows from \eqref{eq:lowerbound:innerprod}.
From Lemma~\ref{lem:meanmovement}, we have 
\begin{align*}
\mu'_{i+1}-\mu_{i+1}&=(\tmu_{i+1} - \mu_{i+1})+(\mu'_{i+1}-\tmu_{i+1})=\tfrac{1}{\sqrt{2\pi}}\sigma \ehat_i + z'_{i+1},\\
&\text{where }~ \norm{z'_{i+1}}_2 \le \norm{z_{i+1}}_2+\norm{\mu'_{i+1}-\tmu_{i+1}}_2 \le \sigma \cdot \frac{1}{(12 \sqrt{m}(1+2\sqrt{\log N/d})}, 
\end{align*}
since $N/\sqrt{\log N} \ge C d^{3/2} km$ for some appropriately large constant $C>0$. Similarly $\mu'_i - \mu_i=z'_i$, where $\norm{z'_i}_2 \le \sigma/(12 \sqrt{m}(1+\sqrt{\log N/d}))$. 
Hence, simplifying and applying Lemma~\ref{lem:lengthofu} we get
\begin{align*}
\norm{x-\mu'_i}_2^2 - \norm{x-\mu'_{i+1}}_2^2 &\ge \tfrac{2}{\sqrt{2\pi}}\iprod{\xbar,\ehat} \sigma - 2\abs{\iprod{\xbar,z'_i}}-2\abs{\iprod{\xbar,z'_{i+1}}}- 4 \Delta \sqrt{\log (mk)} \sigma^2-\Delta^2 \sigma^2- \sigma^2\\
&\ge \sqrt{\frac{d}{2\pi m}} \cdot \sigma^2 - 2\norm{x}_2 (\norm{z'_i}_2+\norm{z'_{i+1}}_2)-4 \Delta^2 \sigma^2 \\
&\ge \sigma^2 \sqrt{\frac{d}{2\pi m}} - \sigma^2 \sqrt{\frac{d}{9m}} - 4 \sigma^2 \Delta^2 >0,
\end{align*}
since $m \le c d/\Delta^4$ for some appropriate constant $c$ (say $c=16 \pi$).
\end{proof}

\begin{proof}[Proof of Theorem~\ref{thm:lowerbound}]
Let $C'_1, \dots, C'_k$ be a locally optimal $k$-means clustering of $\calX$, and suppose $\sum_i |C'_i \triangle C_i| < mk/2$ (for sake of contradiction). For each $i \in [k]$, let $\tmu_i$ be the empirical mean of $C_i$ and $\mu'_i$ be the empirical mean of $C'_i$. Since $C'_1, \dots, C'_k$ is a locally optimal clustering, the Voronoi partition given by $\mu'_1, \dots,\mu'_k$ classifies all the points in agreement with $C'_1, \dots, C'_k$. 

We will now contradict the local optimality of the clustering $C'_1, \dots, C'_k$.
Every cluster $C_i$ has at least $N/(2k)$ points w.h.p. Hence, for each $i \in [k]$, from Lemma~\ref{lem:smallimpurity} we have 
\begin{align*}
\norm{\mu'_i - \tmu_i}_2 &\le \sigma (\sqrt{d}+2\sqrt{\log N}+\Delta)\cdot \frac{4|C_i \triangle C'_i|}{\tfrac{N}{2k}} \le \sigma \cdot \frac{8k^2 m (\sqrt{d}+2\sqrt{\log N}+\Delta)}{N} \\
&\le \frac{\sigma}{16 \sqrt{m}(1+\sqrt{(\log N)/d})}.
\end{align*}
However, from Lemma~\ref{lem:misclassification}, every point in $\cup_{i \in [k]} Z_i$ is misclassified by $\mu'_1, \mu'_2, \dots, \mu'_k$, i.e., the clustering given the Voronoi partition around $\mu'_1, \dots, \mu'_k$ differs from $C_1, \dots, C_k$ on at least $mk/2$ points in total. But $\sum_{i \in [k]} |C'_i \triangle C_i|<mk/2$. Hence, this contradicts the local optimality of the clustering $C'_1, \dots, C'_k$. 
\end{proof}

\section{Conclusion}
In this work we initiated the study of clustering data from a semi-random mixture of Gaussians. We proved that the popular Lloyd's algorithm achieves near optimal error.   
The robustness of the Lloyd's algorithm for the semi-random model suggests a theoretical justification for its widely documented success in practice. Similar robust analysis under stronger adversaries or for related heuristics such as the EM algorithm will significantly improve the gap between our theoretical understanding and observed practical performance of these algorithms. 
It would also be interesting to extend our results to study semi-random variants for other statistical models that are popular in machine learning.

\bibliographystyle{alpha}
\bibliography{aravind}

\appendix

\section{Standard Properties of Gaussians}

\newcommand{\phit}{\Phi}
\newcommand{\iphit}{\Phi^{-1}}
\begin{lemma}\label{lem:gaussian1d}
Suppose $x \in \R$ be generated according to $N(0,\sigma^2)$, let $\phit(t)$ represent the probability that $x >t$, and let $\iphit{y}$ represent the quantile $t$ at which $\phit(t)\le y$.
Then 
\begin{equation}\label{eq:tails}
\frac{\frac{t}{\sigma}}{(\frac{t^2}{\sigma^2}+1)} e^{-\frac{t^2}{2 \sigma^2}}  \le \phit(t) \le \frac{\sigma}{t} e^{-\frac{t^2}{2 \sigma^2}}.
\end{equation}
Further, there exists a universal constant $c \in (1,4)$ such that
\begin{equation}\label{eq:tails2}
 \frac{1}{c} \sqrt{\log (1/y)} \le \frac{t}{\sigma}  \le c \sqrt{\log (1/y)}.
\end{equation}
\end{lemma}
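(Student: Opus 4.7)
The plan is to establish the tail estimate \eqref{eq:tails} first and then derive the quantile statement \eqref{eq:tails2} by inverting it. Since the lemma is scale-invariant in $\sigma$, I would begin by substituting $u = x/\sigma$ and writing $s = t/\sigma$, so that both inequalities reduce to statements about the standard Gaussian tail $\Psi(s) := \Pr_{Z \sim N(0,1)}[Z > s] = \frac{1}{\sqrt{2\pi}} \int_s^\infty e^{-u^2/2}\,du$, which is the form in which I would actually carry out the estimates.

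For the upper bound in \eqref{eq:tails}, I would use the classical trick $\int_s^\infty e^{-u^2/2}\,du \le \int_s^\infty (u/s)\, e^{-u^2/2}\,du = (1/s)\, e^{-s^2/2}$, using that $u \ge s$ on the domain of integration; the density normalization $1/\sqrt{2\pi} < 1$ is then absorbed into the stated bound. For the lower bound, I would integrate by parts twice: writing $\int_s^\infty e^{-u^2/2}\,du = \int_s^\infty (1/u) \cdot (u\, e^{-u^2/2})\,du$ and integrating by parts gives $(1/s)\, e^{-s^2/2} - \int_s^\infty (1/u^2)\, e^{-u^2/2}\,du$. Applying integration by parts once more to the remaining integral yields $\int_s^\infty e^{-u^2/2}\,du \ge \bigl((1/s) - (1/s^3)\bigr)\, e^{-s^2/2}$, which after algebraic manipulation loosens to the stated form $\frac{s}{s^2+1}\, e^{-s^2/2}$ in the regime $s$ bounded away from $1$. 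Undoing the substitution $s = t/\sigma$ yields \eqref{eq:tails}.

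For \eqref{eq:tails2} I would simply take logarithms in \eqref{eq:tails}. With $y = \Phi(t)$ and $s = t/\sigma$, the upper bound $y \le (1/s)\, e^{-s^2/2}$ gives $\log(1/y) \ge s^2/2 + \log s$, whence $s \le \sqrt{2 \log(1/y)}$ provided $s \ge 1$; inverting the lower bound analogously gives $s \ge c^{-1} \sqrt{\log(1/y)}$ for an appropriate universal constant. Choosing $c$ to be the maximum of the resulting two constants (and at most $4$) completes the derivation.

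The main obstacle I anticipate is the small-$s$ (equivalently, large-$y$) regime, where the Mills-ratio lower bound is vacuous (in fact not even positive for $s \le 1$) and where the logarithmic inversion develops lower-order corrections of comparable size. The standard fix is to enlarge the universal constant $c$ in \eqref{eq:tails2} so that any bounded range of $s$ is covered by direct inspection rather than by the asymptotic Mills formula, and to check that $c \in (1,4)$ still leaves enough slack for the remaining (moderate-to-large $s$) regime where the integration-by-parts argument is sharp.
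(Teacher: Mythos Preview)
The paper does not prove this lemma; it is stated in the appendix as a standard fact about Gaussian tails and is not accompanied by any argument. So there is nothing to compare against, and your proposal is the right way to fill in the details. Your outline for the upper bound in \eqref{eq:tails} and for the inversion step \eqref{eq:tails2} is standard and correct.

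There is, however, a small slip in your lower-bound derivation. Two rounds of integration by parts give
\[
\int_s^\infty e^{-u^2/2}\,du \;\ge\; \Bigl(\tfrac{1}{s}-\tfrac{1}{s^3}\Bigr)e^{-s^2/2}
\;=\;\frac{s^2-1}{s^3}\,e^{-s^2/2},
\]
and one checks that $\tfrac{s^2-1}{s^3} < \tfrac{s}{s^2+1}$ for every $s>0$ (it is equivalent to $s^4-1<s^4$). So your bound is \emph{weaker} than the stated form $\tfrac{s}{s^2+1}e^{-s^2/2}$, not a strengthening that can be ``loosened'' to it. The clean fix is the usual one-line antiderivative argument: since $\tfrac{d}{du}\bigl(-\tfrac{1}{u}e^{-u^2/2}\bigr) = (1+1/u^2)e^{-u^2/2}$ and $1+1/u^2 \le 1+1/s^2$ on $[s,\infty)$,
\[
\Bigl(1+\tfrac{1}{s^2}\Bigr)\int_s^\infty e^{-u^2/2}\,du
\;\ge\; \int_s^\infty \Bigl(1+\tfrac{1}{u^2}\Bigr)e^{-u^2/2}\,du
\;=\; \tfrac{1}{s}\,e^{-s^2/2},
\]
which rearranges directly to $\int_s^\infty e^{-u^2/2}\,du \ge \tfrac{s}{s^2+1}e^{-s^2/2}$. (Note also that after dividing by $\sqrt{2\pi}$ this is the sharp Mills-ratio inequality; the lemma's displayed lower bound suppresses that constant, so you should read \eqref{eq:tails} as holding up to the $\tfrac{1}{\sqrt{2\pi}}$ factor, which is in any case immaterial for the quantile inversion in \eqref{eq:tails2}.) With this correction your plan goes through as written.
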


Let $\gamma_d$ be the Gaussian measure associated with a standard Gaussian with mean $0$ and variance $1$ in each direction. 
We start with a simple fact about the probability mass of high-dimensional spherical Gaussians being concentrated at around $\sqrt{d} \sigma$.

Using concentration bounds for the $\chi^2$ random variables, we have the following bounds for the lengths of vectors picked according to a standard Gaussian in $d$ dimensions (see (4.3) in \cite{laurent2000}).
\begin{lemma}\label{lem:lengthconc}
For a standard Gaussian in $d$ dimensions (mean $0$ and variance $\sigma^2$ in each direction), and any $t>0$  
\begin{align*}
\Pr_{x \sim \gamma_d}\Big[ \norm{x}^2 \ge \sigma^2(d + 2 \sqrt{d t}+2 t) \Big] &\le e^{-t}.\\
\Pr_{x \sim \gamma_d}\Big[ \norm{x}^2 \le \sigma^2(d - 2 \sqrt{d t}) \Big] &\le e^{-t}.
\end{align*}
\end{lemma}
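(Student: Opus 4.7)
The plan is to reduce everything to a chi-squared concentration inequality and then apply a standard Chernoff / MGF argument, essentially reproducing the Laurent--Massart bound the paper cites. Writing $x = \sigma z$ where $z \sim \gamma_d$ is a standard Gaussian with mean $0$ and identity covariance, we have $\|x\|^2/\sigma^2 = \sum_{i=1}^d z_i^2 =: Y$, and $Y$ follows a $\chi^2_d$ distribution. So both tail bounds reduce to showing
\[
\Pr\bigl[Y \ge d + 2\sqrt{dt} + 2t\bigr] \le e^{-t}
\qquad \text{and} \qquad
\Pr\bigl[Y \le d - 2\sqrt{dt}\bigr] \le e^{-t}.
\]

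For the upper tail, the plan is to use Markov's inequality on $e^{s(Y-d)}$ for $s \in (0,1/2)$. The moment generating function of $z_i^2 - 1$ is $e^{-s}(1-2s)^{-1/2}$, so by independence
\[
\log \E\bigl[e^{s(Y-d)}\bigr] = -sd - \tfrac{d}{2}\log(1-2s).
\]
Using the Taylor bound $-\log(1-2s) - 2s \le \tfrac{(2s)^2}{1-2s}$ for $s \in (0,1/2)$, we obtain $\log \E[e^{s(Y-d)}] \le \tfrac{2 d s^2}{1-2s}$. Setting $u = 2\sqrt{dt} + 2t$ and choosing $s = \tfrac{1}{2}\cdot\tfrac{u}{u+d}$ (which equivalently chooses $s$ so that the exponent $s u - 2ds^2/(1-2s)$ is maximized up to a constant) yields, after a short algebraic simplification, $\Pr[Y - d \ge u] \le \exp(-t)$. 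The tightness of this choice is essentially what Laurent and Massart establish via the identity $\sqrt{dt} + t = \tfrac{1}{2}\cdot (d + 2\sqrt{dt} + 2t) - \tfrac{d}{2}$.

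For the lower tail, one symmetrically applies Markov's inequality to $e^{-s(Y-d)}$ for $s > 0$, using $\log \E[e^{-s(Y-d)}] = sd - \tfrac{d}{2}\log(1+2s) \le 2ds^2$ (valid for all $s>0$ by Taylor expansion). Optimizing gives the one-sided bound $\Pr[Y - d \le -2\sqrt{dt}] \le e^{-t}$, where the absence of the linear $2t$ term reflects the lighter lower tail of $\chi^2_d$.

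I do not expect any real obstacle here: both bounds are textbook instances of the Cram\'er--Chernoff method applied to $\chi^2$ variables, and the cleanest route is simply to quote equation~(4.3) of Laurent--Massart \cite{laurent2000}, which gives precisely the two inequalities in the statement. If a self-contained derivation is preferred, the MGF computation above together with the standard quadratic-in-$\sqrt{t}$ optimization is the only nontrivial step, and it takes only a few lines.
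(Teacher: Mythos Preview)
Your proposal is correct and matches the paper's approach: the paper does not give a proof at all but simply cites equation~(4.3) of Laurent--Massart, and your Cram\'er--Chernoff/MGF sketch is precisely the derivation underlying that citation. If anything you have supplied more detail than the paper, so there is nothing to add.
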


The following lemma follows from Lemma~\ref{lem:lengthconc} and a simple coupling to a spherical Gaussian with variance $\sigma^2 I$.
\begin{lemma}\label{lem:amphelp:length}
Consider any points $y^{(1)}, \dots, y^{(N)}$ drawn from a Gaussian with mean $0$ and variance at most $\sigma^2$ in each direction. Then with high probability we have
$$\forall \ell \in [N], \norm{y^{(\ell)}}_2 \le \sigma(\sqrt{d}+2\sqrt{\log N})  .$$ 
\end{lemma}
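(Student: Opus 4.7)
The plan is to reduce to the spherical case already handled by Lemma~\ref{lem:lengthconc} via a deterministic domination argument, and then apply a union bound over the $N$ samples.

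Fix $\ell \in [N]$ and let $\Sigma_\ell$ denote the covariance of $y^{(\ell)}$, so that $y^{(\ell)} \stackrel{d}{=} \Sigma_\ell^{1/2} g^{(\ell)}$ for some $g^{(\ell)} \sim N(0, I_d)$. The hypothesis that $y^{(\ell)}$ has variance at most $\sigma^2$ in every direction means $u^\transpose \Sigma_\ell u \le \sigma^2$ for every unit $u$, i.e.\ $\norm{\Sigma_\ell} \le \sigma^2$. Consequently
\[
\norm{y^{(\ell)}}_2^2 = (g^{(\ell)})^\transpose \Sigma_\ell\, g^{(\ell)} \le \sigma^2 \norm{g^{(\ell)}}_2^2,
\]
so the problem reduces to bounding $\norm{g^{(\ell)}}_2$ for standard Gaussian vectors. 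Note that this deterministic coupling does not require independence between the $y^{(\ell)}$ nor require their covariances to agree.

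Next I would invoke Lemma~\ref{lem:lengthconc} applied to $g^{(\ell)}$ (i.e.\ with unit variance) at the parameter $t = 2\log N$, giving
\[
\Pr\Bigbrac{\norm{g^{(\ell)}}_2^2 \ge d + 2\sqrt{2 d \log N} + 4\log N} \le e^{-2\log N} = N^{-2}.
\]
A union bound over the $N$ samples yields that, with probability at least $1 - N^{-1}$, every $g^{(\ell)}$ satisfies $\norm{g^{(\ell)}}_2^2 \le d + 2\sqrt{2 d \log N} + 4 \log N$. Since
\[
d + 2\sqrt{2 d \log N} + 4\log N \le d + 4\sqrt{d \log N} + 4 \log N = (\sqrt{d} + 2\sqrt{\log N})^2,
\]
combining with the coupling bound gives $\norm{y^{(\ell)}}_2 \le \sigma(\sqrt{d} + 2\sqrt{\log N})$ for all $\ell \in [N]$ simultaneously with high probability.

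There is no real obstacle here; the only mild point to watch is ensuring that the tail parameter $t$ in Lemma~\ref{lem:lengthconc} is at least $\log N$ with enough slack so that after the union bound the failure probability is $o(1)$, and that the resulting bound absorbs cleanly into the form $\sigma(\sqrt{d} + 2\sqrt{\log N})$. The choice $t = 2\log N$ above accomplishes both.
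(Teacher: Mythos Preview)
Your proof is correct and follows essentially the same approach as the paper: both reduce to the spherical case via a coupling (the paper phrases it as ``a simple coupling to the spherical Gaussian random variable $z$'' while you make it explicit via $y^{(\ell)}=\Sigma_\ell^{1/2} g^{(\ell)}$), apply Lemma~\ref{lem:lengthconc} with $t=2\log N$, and finish with a union bound over the $N$ samples.
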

\begin{proof}
Consider a random vector $z \in \R^d$ generated from a Gaussian with mean $0$ and variance $\sigma^2$ in each direction. From Lemma~\ref{lem:lengthconc},
$$Pr[\norm{z}_2 \ge \sigma(\sqrt{d}+2\sqrt{\log N})]=Pr[\norm{z}^2_2 \ge \sigma^2(d+4\sqrt{d\log N}+4\log N)] \le \exp(- 2 \log N) < N^{-2} .$$
Fix $\ell \in [N]$. By a simple coupling to the spherical Gaussian random variable $z$ we have
$$Pr[\norm{y^{(\ell)}}_2 \ge \sigma(\sqrt{d}+2\sqrt{\log N})]\le Pr[\norm{z} \ge \sigma(\sqrt{d}+2\sqrt{\log N})] < N^{-2} .$$
By a union bound over all $\ell \in [N]$, the lemma follows. 
\end{proof}

\begin{lemma}[\cite{vershynin2010introduction}, Proposition 5.10]
\label{lem:Gaussian-mean-concentration}
Let $Y_i \sim N(\mu,\sigma^2I_{d \times d})$ for $i=1,2,\ldots N$ where $N = \Omega(\frac{d + \log(\frac 1 \delta)}{\epsilon^2})$. Then, with probability at least $1-\delta$ we have that 
$$
\norm{\frac 1 N \sum_{i=1}^N Y_i - \mu}_2 \leq \sigma \epsilon.
$$
\end{lemma}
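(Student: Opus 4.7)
The plan is to exploit the fact that sums of independent Gaussians are Gaussian, reducing the statement to a chi-squared tail bound that is already available in the paper as Lemma~\ref{lem:lengthconc}.

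First I would observe that by translation and rescaling it suffices to handle $\mu=0$ and $\sigma=1$: replacing $Y_i$ by $(Y_i-\mu)/\sigma$ gives i.i.d.\ standard Gaussians in $\R^d$, and the conclusion for the rescaled variables is $\|\tfrac{1}{N}\sum_i \tilde Y_i\|_2 \le \epsilon$, which is equivalent to the stated inequality. Next, set $Z:=\tfrac{1}{N}\sum_{i=1}^N \tilde Y_i$. Since each $\tilde Y_i \sim N(0,I_d)$ and the $\tilde Y_i$ are independent, the sum $\sum_i \tilde Y_i$ is distributed as $N(0,N\cdot I_d)$, so $Z \sim N(0, \tfrac{1}{N} I_d)$. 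Equivalently, $\sqrt{N}\, Z$ is a standard Gaussian in $\R^d$.

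With this reduction in place the rest is a direct invocation of Lemma~\ref{lem:lengthconc}. Writing $W=\sqrt{N}\, Z$, we have $\|Z\|_2^2 = \|W\|_2^2/N$. Applying the first inequality of Lemma~\ref{lem:lengthconc} (with variance $1$) with the parameter $t:=\log(1/\delta)$, we get
\[
\Pr\bigl[\|W\|_2^2 \ge d + 2\sqrt{d\log(1/\delta)} + 2\log(1/\delta)\bigr] \le \delta.
\]
Using the elementary bound $2\sqrt{d\log(1/\delta)} \le d + \log(1/\delta)$ (AM--GM), the right-hand side inside the probability is at most $2d + 3\log(1/\delta) \le 3(d+\log(1/\delta))$. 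Therefore, with probability at least $1-\delta$,
\[
\|Z\|_2^2 \;\le\; \frac{3\bigl(d+\log(1/\delta)\bigr)}{N}.
\]
Finally I would choose the hidden constant in the hypothesis $N=\Omega((d+\log(1/\delta))/\epsilon^2)$ large enough (e.g.\ $N \ge 3(d+\log(1/\delta))/\epsilon^2$) so that the right-hand side is at most $\epsilon^2$. Re-introducing $\sigma$ via the rescaling yields $\|\tfrac{1}{N}\sum_i Y_i - \mu\|_2 \le \sigma \epsilon$, as desired.

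There is essentially no hard step here: the only subtlety is recognizing that the empirical mean of $N$ independent $d$-dimensional Gaussians is itself a single Gaussian with covariance $(\sigma^2/N) I_d$, which collapses a potentially messy concentration-of-measure argument (Gaussian tail bound plus an $\epsilon$-net over $S^{d-1}$) into a single application of the $\chi^2_d$ concentration already established in Lemma~\ref{lem:lengthconc}. The constants above are not optimal (Vershynin's version uses sharper numerical constants via a net argument), but they are of the right order and match the $\Omega$-notation in the statement.
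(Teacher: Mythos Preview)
Your argument is correct. Note that the paper does not actually supply a proof of this lemma: it is stated as a citation to Vershynin (Proposition~5.10) and used as a black box. So there is nothing to compare against in the paper itself.

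That said, your route is arguably tidier than the one in the reference you cite. Vershynin's Proposition~5.10 is a general sub-Gaussian Hoeffding inequality proved via the moment-generating function and (implicitly) a net over the sphere, which is necessary because it treats arbitrary sub-Gaussian coordinates. You instead exploit the extra structure specific to Gaussians---closure under averaging---so that the empirical mean is literally a single Gaussian vector with covariance $(\sigma^2/N)I_d$, and the whole question collapses to one application of the $\chi^2_d$ tail bound already recorded in Lemma~\ref{lem:lengthconc}. This is shorter and fully self-contained within the paper, at the cost of not generalizing beyond the Gaussian case; since the lemma is only ever applied to Gaussians here, nothing is lost.
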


\begin{lemma}[\cite{vershynin2010introduction}, Corollary 5.50]
\label{lem:Gaussian-variance-concentration}
Let $Y_i \sim N(\mu,\sigma^2I_{d \times d})$ for $i=1,2,\ldots N$ where $N = \Omega(\frac{d + \log(\frac 1 \delta)}{\epsilon^2})$. Then, with probability at least $1-\delta$ we have that 
$$
\norm{\frac 1 N \sum_{i=1}^N (Y_i-\mu)(Y_i-\mu)^T - \sigma^2I} \leq \sigma\epsilon.
$$
\end{lemma}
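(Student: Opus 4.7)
The plan is to prove this as a standard spectral-norm concentration bound for the sample covariance of Gaussian vectors via an $\epsilon$-net argument on the unit sphere, following the Vershynin-style approach that is cited. After reducing by translation/scaling to the case $\mu = 0$, $\sigma = 1$ (so the claim becomes $\lVert \tfrac{1}{N} \sum_i Y_i Y_i^T - I\rVert \le \epsilon$), the key identity is that for any symmetric matrix $M$,
\[
\lVert M \rVert \;=\; \sup_{v : \lVert v\rVert = 1} \bigl|v^T M v\bigr|,
\]
so it suffices to control $\bigl|\tfrac{1}{N}\sum_{i=1}^N \langle Y_i, v\rangle^2 - 1\bigr|$ uniformly over all unit vectors $v \in \R^d$.

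For a \emph{fixed} unit vector $v$, $\langle Y_i, v\rangle \sim N(0,1)$ independently, so $Z_i := \langle Y_i, v\rangle^2$ are i.i.d.\ chi-squared random variables with mean $1$. These are sub-exponential with an absolute constant norm, so Bernstein's inequality (e.g.\ Proposition~5.16 of \cite{vershynin2010introduction}, or equivalently the two-sided Laurent--Massart bound from Lemma~\ref{lem:lengthconc} applied to $\sum_i Z_i = \lVert (\langle Y_1,v\rangle,\dots,\langle Y_N,v\rangle)\rVert_2^2$) gives
\[
\Pr\Bigl[\,\Bigl|\tfrac{1}{N}\sum_{i=1}^N Z_i - 1\Bigr| > t\,\Bigr] \;\le\; 2\exp\bigl(-c\, N \min\{t^2, t\}\bigr)
\]
for some absolute $c > 0$. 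Setting $t = \epsilon/2$ with $\epsilon \le 1$, this probability is at most $2\exp(-c N \epsilon^2 / 4)$.

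Next I would discretize the sphere. Let $\mathcal{N}$ be a $(1/4)$-net of the unit sphere $S^{d-1}$; by standard volume arguments $|\mathcal{N}| \le 9^d$. A net approximation argument (Lemma~5.4 of \cite{vershynin2010introduction}) yields $\lVert M \rVert \le 2 \sup_{v \in \mathcal{N}} |v^T M v|$ for symmetric $M$. So union-bounding the per-direction tail over $\mathcal{N}$,
\[
\Pr\Bigl[\,\lVert \tfrac{1}{N}\textstyle\sum_i Y_i Y_i^T - I\rVert > \epsilon\,\Bigr] \;\le\; 2 \cdot 9^d \exp(-c N \epsilon^2 / 16).
\]
Choosing $N \ge C(d + \log(1/\delta))/\epsilon^2$ for a sufficiently large constant $C$ absorbs the $9^d$ factor and makes the right-hand side at most $\delta$, giving the desired bound.

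The only mild subtlety is the reduction from general $\mu$ and $\sigma$: one sets $\tilde Y_i = (Y_i - \mu)/\sigma$, notes these are standard Gaussians, and observes that the sample second-moment matrix of the $Y_i - \mu$ equals $\sigma^2$ times that of the $\tilde Y_i$, so the bound scales by $\sigma^2$ (this is actually why I expect the stated conclusion should have $\sigma^2 \epsilon$ rather than $\sigma\epsilon$ on the right-hand side; I would state the lemma with $\sigma^2 \epsilon$ to be dimensionally consistent, since the proof transparently gives this). No step is a serious obstacle — the only place requiring care is picking the net coarseness and the Bernstein regime ($t \le 1$ so $\min\{t^2,t\} = t^2$) so that the exponent beats $d \log 9$.
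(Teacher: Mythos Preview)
Your proposal is correct and is essentially the standard $\epsilon$-net plus Bernstein argument from \cite{vershynin2010introduction} that the paper simply cites without proof; the paper does not supply its own argument for this lemma, so there is nothing further to compare. Your observation that the right-hand side should scale as $\sigma^2\epsilon$ rather than $\sigma\epsilon$ is also correct and matches the statement in the cited reference.
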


\section{Proof of Proposition~\ref{prop:initialization-strong}}
The proof will follow the outline in~\cite{KK10}. Given $N$ points from a semi random mixture $\calX$, we first randomly partition them into two sets $S_1$ and $S_2$ of equal size. Let $T_1, \ldots ,T_k$ be the partition induced by the true clustering over $S_1$ and $T'_1, \ldots ,T'_k$ be the partition induced over $S_2$. Furthermore, let $A$ be the $\frac N 2 \times d$ matrix consisting of points in $S_1$ as rows and $C$ be the $\frac N 2 \times d$ matrix of the corresponding true centers. It is easy to see that with probability at least $1-\delta$, we will have that 
\begin{equation}\label{eq:boost:partition:size}
\forall r \in [k], \min(|T_r|, |T'_r|) \geq \frac{|C_r|}{4}.
\end{equation}

Assuming that equations~(\ref{eq:sr:length}) to (\ref{eq:baddirections}) hold with high probability, we next prove that the following conditions will also hold with high probability
\begin{align}
\max_{v: \|v\|=1} \frac 1 {|T_r|} \sum_{x \in T_r} [(x-\mu_r) \cdot v]^2 & \leq  4\sigma^2, \forall r \in [k]\label{eq:boost:var1}\\
\max_{v: \|v\|=1} \frac 1 {|T'_r|} \sum_{x \in T'_r} [(x-\mu_r) \cdot v]^2 & \leq  4\sigma^2, \forall r \in [k] \label{eq:boost:var2}\\
\|\frac 1 {|T_r|} \sum_{x \in T_r} x - \mu_i\| & \leq  8\sigma, \forall r \in [k] \label{eq:boost:mean}\\
\|A-C\|^2 & \leq  4\sigma^2 N \label{eq:boost:varall}
\end{align}

To prove (\ref{eq:boost:var1}) notice that $\frac 1 {|T_r|} \sum_{x \in T_r} [(x-\mu_r) \cdot v]^2 \leq \frac  4 {|C_r|} \sum_{x \in C_r} [(x-\mu_r) \cdot v]^2 \leq 16\sigma^2$~(using \ref{eq:sr:variance}). Similarly, (\ref{eq:boost:var2}) follows. The proof of (\ref{eq:boost:mean}) follows directly from (\ref{eq:sr:mean:subset}). Finally notice that $\|A-C\|^2 = \max_{v:\|v\|=1} \sum_r \sum_{x \in T_r} [(x-\mu_r)\cdot v]^2 \leq \max_{v:\|v\|=1} \sum_r \sum_{x \in C_r} [(x-\mu_r)\cdot v]^2 \leq 4\sigma^2 N$~(using (\ref{eq:sr:variance:all})).
 
In the analysis below we will assume that the above equations are satisfied by the random partition. Define a graph $G = (A \cup B, E)$ where the edge set consists of any pair of points that have a distance of at most $\gamma = 4\sigma(\sqrt{d} + \sqrt{\log N})$. Notice that from the definition of $\gamma$, any two points from the same true cluster $C_r$ will be connected by an edge in $G$~(using \ref{eq:sr:length}). Next we map the points in $A$ to a new $\frac N 2$ dimensional space as follows. For any row $A_i$ of $A$ define $A'_{i,j} = (A_i-\mu)\cdot(B_j-\mu)$ if $A_i$ and $B_j$ are in the same connected component of $G$. Otherwise, define $A'_{i,j} = L$ where $L$ is a large quantity. Here $\mu$ denotes the mean of the points in the component in $G$ to which $A_i$ belongs to. Let $\theta_r$ denote the mean of the points in $T_r$ in the new space. We will show that the new mapping amplifies the mean separation.  
\begin{lemma}
\label{lem:boost-mean-separation}
For all $r \neq s$, $\|\theta_r - \theta_s\| \geq \Omega(\sqrt{|N \wmin|} k \log N) \sigma^2$.
\end{lemma}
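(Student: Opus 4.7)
The plan is to reduce the lemma to a carefully chosen single-cluster sum, and then exploit the mean separation through an inner-product decomposition. The first step is to exploit condition~\eqref{eq:sr:length}: any two points in the same true cluster $C_r$ lie within distance $\gamma = 4\sigma(\sqrt{d}+\sqrt{\log N})$, so each $C_r$ lies in a single connected component of $G$. I will split into the easy case where $T_r$ and $T_s$ sit in distinct components of $G$ (so that $\Omega(N w_{\min})$ coordinates of $\theta_r - \theta_s$ have magnitude $\Omega(L)$ and the bound is immediate for $L$ large enough), and the main case where $T_r \cup T_s \subseteq P$ for a single component $P$ with empirical mean $\mu$. In the main case, for $B_j \in P$ we have $(\theta_r - \theta_s)_j = (\hat\mu_r - \hat\mu_s) \cdot (B_j - \mu)$, while for $B_j \notin P$ the coordinate cancels to $0$. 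Setting $w := \hat\mu_r - \hat\mu_s$, this identifies
\[
\|\theta_r - \theta_s\|^2 \;=\; \sum_{B_j \in P} (w \cdot (B_j - \mu))^2.
\]

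Next I will use \eqref{eq:boost:mean} to conclude that $\|\hat\mu_r - \mu_r\|, \|\hat\mu_s - \mu_s\| \le 8\sigma$, which yields $w \cdot (\mu_r - \mu_s) \ge \|\mu_r - \mu_s\|^2 - 16\sigma\|\mu_r - \mu_s\| \ge \tfrac{1}{2}\Delta^2\sigma^2$ using the separation $\|\mu_r - \mu_s\| \ge \Delta\sigma$ and $\Delta$ large. Since $w \cdot (\mu_r - \mu) - w \cdot (\mu_s - \mu) = w \cdot (\mu_r - \mu_s)$, the triangle inequality forces at least one of $|w \cdot (\mu_r - \mu)|, |w \cdot (\mu_s - \mu)|$ to be at least $\tfrac{1}{4}\Delta^2\sigma^2$; I will denote the corresponding index $\ell \in \{r,s\}$ and set $c := w \cdot (\mu_\ell - \mu)$.

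The final step restricts the sum to $B_j \in T'_\ell \subseteq P$ and decomposes $B_j - \mu = (B_j - \mu_\ell) + (\mu_\ell - \mu)$ to obtain
\[
\sum_{B_j \in T'_\ell} (w \cdot (B_j - \mu))^2 \;=\; |T'_\ell|\, c^2 \;+\; 2c \sum_{B_j \in T'_\ell} w\cdot(B_j - \mu_\ell) \;+\; \sum_{B_j \in T'_\ell}(w\cdot(B_j - \mu_\ell))^2 .
\]
The last term is non-negative and will be dropped. The cross term is controlled by the mean concentration of $T'_\ell$: by \eqref{eq:boost:mean} the empirical mean of $T'_\ell$ is within $8\sigma$ of $\mu_\ell$, giving $|\sum_{B_j \in T'_\ell} w\cdot(B_j-\mu_\ell)| \le 8\sigma\|w\| |T'_\ell| = O(\Delta\sigma^2)|T'_\ell|$. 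Since $|c| \ge \tfrac{1}{4}\Delta^2\sigma^2$ dominates this, the cross term is at most $\tfrac{1}{2}|T'_\ell|c^2$, yielding $\|\theta_r - \theta_s\|^2 \ge \tfrac{1}{2}|T'_\ell| c^2 \ge \tfrac{1}{32}|T'_\ell|\Delta^4\sigma^4$. Plugging in $|T'_\ell| \ge N w_{\min}/4$ from \eqref{eq:boost:partition:size} and using $\Delta^2 \ge c_0^2 k \log N$ gives $\|\theta_r - \theta_s\| \ge \Omega(\sqrt{Nw_{\min}}\, k \log N)\sigma^2$, as required.

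The main obstacle is that $\mu$ is only the empirical mean of the connected component $P$, not a true Gaussian mean, so $w\cdot(\mu_\ell-\mu)$ could \emph{a priori} be small for arbitrary $\ell$. The key observation that avoids this is that by choosing $\ell$ from the pair $\{r,s\}$ we are trying to distinguish, the difference $w\cdot(\mu_r-\mu) - w\cdot(\mu_s-\mu)$ equals $w\cdot(\mu_r-\mu_s)$, which is automatically large from the mean separation together with the subset-mean concentration in \eqref{eq:boost:mean}; this forces one of the two projections to inherit this lower bound.
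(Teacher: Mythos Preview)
Your approach is essentially the paper's: reduce to the single-component case, write $\|\theta_r-\theta_s\|^2$ as a quadratic sum over $B_j$ in the component, select $\ell\in\{r,s\}$ via the identity $w\cdot(\mu_r-\mu)-w\cdot(\mu_s-\mu)=w\cdot(\mu_r-\mu_s)$, restrict to $T'_\ell$, and decompose around $\mu_\ell$. The only real difference is in the last step: the paper uses $(a-b)^2\ge\tfrac12 a^2-b^2$ and bounds the quadratic error $\sum_{B_j\in T'_\ell}[(\mu_r-\mu_s)\cdot(B_j-\mu_\ell)]^2$ via the variance condition \eqref{eq:boost:var2}, whereas you expand the square, drop the (nonnegative) quadratic term, and control the cross term by a mean bound. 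Both routes work.

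Two small points to fix. First, \eqref{eq:boost:mean} is stated only for $T_r$ (the $S_1$ side), not for $T'_\ell\subset S_2$; you need the analogous bound for $T'_\ell$, which follows the same way from \eqref{eq:sr:mean:subset}, or more directly by Cauchy--Schwarz from \eqref{eq:boost:var2}. Second, the step ``$8\sigma\|w\||T'_\ell|=O(\Delta\sigma^2)|T'_\ell|$'' implicitly uses $\|w\|=O(\Delta\sigma)$, but the separation hypothesis only gives a \emph{lower} bound on $\|\mu_r-\mu_s\|$. The remedy is to keep $D:=\|\mu_r-\mu_s\|$ throughout: then $\|w\|\le D+16\sigma\le 2D$ and $|c|\ge\tfrac18 D^2$, so $|c|\ge\tfrac{D}{16}\|w\|\ge\tfrac{\Delta\sigma}{16}\|w\|$, and the cross term $2|c|\cdot 8\sigma\|w\||T'_\ell|$ is still at most $\tfrac12|T'_\ell|c^2$ once $\Delta$ exceeds an absolute constant. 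With these two corrections your argument goes through.
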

\begin{proof}
We can assume that points in $T_r, T'_r$ and $T_s, T'_s$ belong to the same connected component in $G$. Otherwise, $\|\theta_r - \theta_S\| > L$. Let $Q$ be the component to which $T_r$ and $T_s$ belong with $\mu$ being the mean of the points in $Q$. Then, $\|\theta_r - \theta_s\|^2 \geq \sum_{B_j \in Q} [(\mu_r - \mu_s)\cdot(B_j-\mu)]^2$. Notice that $(\mu_r - \mu_s)\cdot(\mu_r - \mu_s) = (\mu_r - \mu)\cdot(\mu_r - \mu_s) - (\mu_s - \mu)\cdot(\mu_r - \mu_s)$. Hence, one of the two terms is at least $\frac 1 2 \|\mu_r - \mu_s\|^2$ in magnitude. Without loss of generality assume that $|(\mu_r - \mu)\cdot(\mu_r - \mu_s)| \geq \frac 1 2 \|\mu_r - \mu_s\|^2 \geq \frac{125^2}{2} k \log N$. 

Now, $\|\theta_r - \theta_s\|^2 \geq \sum_{B_j \in T'_r} [(\mu_r - \mu_s)\cdot(B_j-\mu)]^2 = \sum_{B_j \in T'_r} [(\mu_r - \mu_s)\cdot(\mu_r - \mu) - (\mu_r - \mu_s)\cdot(\mu_r - B_j)]^2 \geq \frac 1 2 |B_j| [(\mu_r-\mu_s)\cdot(\mu_r-\mu)]^2 - \sum_{B_j \in T'_r} [(\mu_r - \mu_s)\cdot(\mu_r - B_j)]^2$. The first term is at least $\frac{|T'_r|}{8} \|\mu_r - \mu_s\|^4$ and the second term (in magnitude) is at most $4|T'_r| \|\mu_r-\mu_s\|^2 \sigma^2$~(using \ref{eq:boost:var2}). Substituting the bound on $\|\mu_r-\mu_s\|$ and using \ref{eq:boost:partition:size}, we get that $\|\theta_r - \theta_s\| = \Omega(\sqrt{|N \wmin|}) (k \log N) \sigma^2$.
\end{proof}

Let $A'$ be the matrix of points in the new space and $C'$ be the matrix of the corresponding centers. We next bound $\|A'-C'\|$. 
\begin{lemma}
\label{lem:boosting-spectral-norm}
$\|A'-C'\| \leq 24 \sigma^2 k (\sqrt{d}+2\sqrt{\log N}) \sqrt{N}$.
\end{lemma}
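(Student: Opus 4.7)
The plan is to exploit the block structure of $A'-C'$: for any pair $(i,j)$ with $A_i$ and $B_j$ in different connected components of $G$, both $A'_{i,j}$ and $C'_{i,j}$ equal $L$, so these entries vanish in the difference. Consequently, $A'-C'$ is supported on the disjoint row-column blocks corresponding to the connected components $Q$ of $G$, which means
\[
\|A'-C'\| = \max_{Q} \|E^{(Q)}\|, \qquad E^{(Q)}_{i,j} = (A_i - \mu_{r(i)}) \cdot (B_j - \mu^{(Q)}) \text{ for } A_i,B_j \in Q.
\]
The first step is to factor $E^{(Q)} = U_Q V_Q^T$, where the rows of $U_Q$ are $A_i - \mu_{r(i)}$ (for $A_i \in Q\cap A$) and the rows of $V_Q$ are $B_j - \mu^{(Q)}$ (for $B_j \in Q\cap B$); then $\|E^{(Q)}\| \le \|U_Q\|\cdot\|V_Q\|$.

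For $\|U_Q\|$, the key observation is that every true cluster lies entirely inside a single component of $G$: by \eqref{eq:sr:length}, any two points in cluster $C_r$ are within $2\sigma(\sqrt{d}+2\sqrt{\log N}) < \gamma$ of each other, so each cluster is a clique in $G$. Therefore $Q\cap A = \bigcup_{r : C_r \subseteq Q} T_r$, and applying \eqref{eq:boost:var1} termwise gives $\|U_Q v\|^2 = \sum_r \sum_{i \in T_r} [(A_i - \mu_r)\cdot v]^2 \le 4\sigma^2 \sum_r |T_r| = 4\sigma^2 n_Q$, where $n_Q := |Q\cap A|$. Hence $\|U_Q\| \le 2\sigma\sqrt{n_Q}$. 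For $\|V_Q\|$, split $V_Q = V_{1,Q} + V_{2,Q}$ where $V_{1,Q}$ has rows $B_j - \mu_{r(j)}$ and $V_{2,Q}$ has rows $\mu_{r(j)} - \mu^{(Q)}$. The same argument using \eqref{eq:boost:var2} gives $\|V_{1,Q}\| \le 2\sigma\sqrt{n_Q}$. For $V_{2,Q}$, note that any chain of cluster centers in $Q$ connected by edges of $G$ satisfies $\|\mu_r - \mu_s\| \le \gamma + 2 \cdot \frac{\gamma}{2} = 6\sigma(\sqrt d + 2\sqrt{\log N})$ per link, and a chain has length at most $k$; combining with \eqref{eq:boost:mean} to relate $\mu^{(Q)}$ to the weighted centroid of the cluster means shows $D_Q := \max_{r: C_r \subseteq Q}\|\mu_r - \mu^{(Q)}\| \le O(k\sigma(\sqrt d + \sqrt{\log N}))$. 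Since $V_{2,Q}$ has at most $k$ distinct rows, $\|V_{2,Q}\|^2 \le \|V_{2,Q}\|_F^2 = \sum_j \|\mu_{r(j)} - \mu^{(Q)}\|^2 \le n_Q D_Q^2$.

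Putting these bounds together yields $\|V_Q\| \le O(k\sigma(\sqrt d + \sqrt{\log N})\sqrt{n_Q})$ and hence $\|E^{(Q)}\| \le 2\sigma\sqrt{n_Q} \cdot O(k\sigma(\sqrt d + \sqrt{\log N})\sqrt{n_Q})$, which after maximizing over $Q$ gives a bound scaling like $\sigma^2 k(\sqrt d + \sqrt{\log N})\cdot n_Q$. \textbf{The main obstacle} is that this naive submultiplicative bound produces a linear-in-$N$ dependence, whereas the lemma asserts the tighter $\sqrt{N}$ dependence. The improvement must come from exploiting the \emph{independence} between $U_Q$ and $V_Q$ arising from the random partition into $S_1$ and $S_2$: for independent factors with isotropic noise of scale $\sigma$, the product $\|U_QV_Q^T\|$ is of order $\sigma^2\sqrt{n_Q d}$ rather than $\sigma^2 n_Q$, by standard concentration for sums of independent rank-one matrices (matrix Bernstein) applied to $\sum_{i} (A_i-\mu_{r(i)})\otimes (\text{column vector})$. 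Carrying this independence-based concentration through the semi-random perturbation is the delicate part: although the adversary may correlate perturbations, the underlying Gaussian samples in $S_1$ and $S_2$ are independent, and the variance bounds \eqref{eq:boost:var1}--\eqref{eq:boost:varall} survive the perturbation, which is what enables the sharper $\sqrt N$ bound to go through for both the within-cluster contribution $U_QV_{1,Q}^T$ and, by a separate rank-$k$ argument, for the cluster-drift contribution $U_QV_{2,Q}^T$.
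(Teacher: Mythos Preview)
Your setup (block-diagonal structure in the components of $G$, factorization $E^{(Q)}=U_QV_Q^T$, variance bound for $\|U_Q\|$, diameter bound for $\|V_Q\|$) is sound, and you are right that the submultiplicative estimate $\|U_Q\|\,\|V_Q\|$ produces a factor of $N$ rather than $\sqrt N$. But the remedy you propose --- exploiting independence of $S_1$ and $S_2$ via matrix Bernstein --- cannot close this gap. Even in the clean case where $U_Q\in\R^{n_Q\times d}$ and $V_Q\in\R^{m_Q\times d}$ are \emph{independent} with i.i.d.\ $N(0,\sigma^2)$ entries and $n_Q,m_Q\gg d$, one has $V_Q^TV_Q\approx m_Q\sigma^2 I_d$ and hence
\[
\|U_QV_Q^T\|^2=\|U_Q(V_Q^TV_Q)U_Q^T\|\approx m_Q\sigma^2\,\|U_Q\|^2\approx n_Qm_Q\sigma^4,
\]
so $\|U_QV_Q^T\|\approx\sigma^2\sqrt{n_Qm_Q}=\Theta(\sigma^2 N)$, not $\sigma^2\sqrt{n_Qd}$ as you assert. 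Matrix Bernstein on the rank-one summands $U_{\cdot\ell}V_{\cdot\ell}^T$ gives the same conclusion, since the term $R\log N$ with $R=\|U_{\cdot\ell}\|\,\|V_{\cdot\ell}\|\approx\sigma^2 N$ dominates. Independence between the two halves simply does not push $\|UV^T\|$ below $\|U\|\,\|V\|$ in this regime, so this line of attack is a dead end; and in the semi-random model the adversary can correlate the two halves anyway.

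The paper's argument is entirely different and introduces no probabilistic step beyond the deterministic conditions already recorded. It expands
\[
\|Yv\|^2=\sum_r\sum_{x\in T_r}\Bigl[\sum_{B_j\in Q_r}v_j\,(x-\mu_r)\cdot(B_j-\mu)\Bigr]^2
\]
directly, passes to $\sum_r\sum_{B_j\in Q_r}v_j^2\sum_{x\in T_r}[(x-\mu_r)\cdot(B_j-\mu)]^2$, then for each fixed $B_j$ bounds $\sum_{x\in T_r}[(x-\mu_r)\cdot(B_j-\mu)]^2\le 4\sigma^2|T_r|\,\|B_j-\mu\|^2$ via \eqref{eq:boost:var1}, invokes the combinatorial diameter bound $\|B_j-\mu\|\le k\gamma$, and finishes using $\sum_j v_j^2=1$ together with $\sum_r|T_r|\le N$. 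The mechanism that yields a single factor of $N$ (and hence $\sqrt N$ after the square root) is the unit budget $\sum_j v_j^2=1$ carried across the $B$-columns, not any cancellation coming from randomness; your factorization $\|U_Q\|\,\|V_Q\|$ discards exactly this structure.
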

\begin{proof}
Let $Y = A'-C'$. Then we have that $\|Y\|^2 \leq \|Y^T Y\| = \max_{v:\|v\|=1} \sum_{r} \sum_{x \in T_r} [(x-\theta_r)\cdot v]^2$. Let $Q_r$ be the connected component in $G$ that the points in $T_r$ belong to. Then we can write $\|Y^TY\| = \max_{v:\|v\|=1} \sum_{r} \sum_{x \in T_r} \sum_{B_j \in Q_r} v^2_j [(x-\mu_r)\cdot (B_j-\mu)]^2 \leq \sum_{r} \sum_{B_j \in Q_r} v^2_j  \sum_{x \in T_r} [(x-\mu_r)\cdot (B_j-\mu)]^2$. Using \ref{eq:boost:var1}, we can bound the inner term as $ \sum_{x \in T_r} [(x-\mu_r)\cdot (B_j-\mu)]^2 \leq 4 |T_r| \|B_j-\mu\|^2 \sigma^2$.

Next notice that because of the way $G$ is constructed, points within the same connected component have distance at most $k \gamma$. Hence, $\|B_j - \mu\| \leq k \gamma$. Hence, $\|Y^TY\| \leq \sum_r \sum_{B_j \in Q_r} v^2_j 4|T_r| (k^2 \gamma^2)\sigma^2 \leq 4N k^2 \gamma^2 \sigma^2$. This gives the desired bound on $\|Y\| = \|A'-C'\|$.
\end{proof}
Combining the previous two lemmas we get that $\|\theta_r - \theta_s\| \geq \Omega(\sqrt{\frac{|N \wmin|}{d}}) \frac{\|A'-C'\|}{\sqrt{N}}$. We next run the initialization procedure from Section~\ref{sec:initialization} by projecting $A'$ onto the top $k$ subspace and running a $k$-means approximation algorithm. Let $\phi_1, \ldots \phi_k$ be the means obtained. Using Lemma~\ref{lem:initialization-weak} with $M^* = C'$, we get that for all $r$, $\|\phi_r - \theta_r\| \leq 20 \sqrt{k\alpha} \frac{\|A'-C'\|}{\sqrt{|N\wmin|}}$, where $\alpha$ is the approximation guarantee of the $k$-means approximation used. If $\Delta > c_0 \sqrt{\min\{k,d\} \log N}$, then we use a constant factor approximation algorithm~\cite{AhmadianNSW16}. If $\Delta > c_0 \sqrt{\min\{k,d\} \log k \log N}$, then we can use the simpler $k$-means++ algorithm~\cite{AV07}
\begin{proof}[Proof of Proposition~\ref{prop:initialization-strong}]
Assuming $N = \Omega(\frac{k^2d^2}{\wmin^2})$ we get that for all $r \neq s$, $\|\phi_r - \phi_s\| \geq 10\sqrt{kd}\frac{\|A'-C'\|}{\sqrt{N\wmin}}$. Let $P_1, \ldots P_k$ be the clustering of points in $A'$ obtained by using centers $\phi_1, \ldots \phi_k$. Then we have that for each $r$, $|T_r \triangle P_r| \leq \frac{N\wmin}{10\sqrt{d]}}$, since otherwise the total cost paid by the misclassified points will be more than $4k\|A'-C'\|^2$. Next we use the clustering $P_1, \ldots P_k$ to compute means for the original set of points in $A$. Let $\nu_1, \ldots \nu_k$ be the obtained means. We will show that for all $r$, $\|\nu_r - \mu_r\| \leq \tau \sigma$, where $\tau < \frac \Delta 4$.

Consider a particular partition $P_r$ that is uniquely identified with $T_r$. Let $n_{r,r}$ be the number of points that belong to both $P_r$ and $T_r$ and $\mu_{r,r}$ be the mean of those points. Similarly, let $n_{r,s}$ be the number of points that belong to $T_s$ originally but belong to $P_r$ in the current clustering, and let $\mu_{r,s}$ be their mean. Then, $\|\mu_r - \nu_r\| \leq \frac{n_{r,r}}{|P_r|}\|\mu_{r,r}-\mu_r\| + \sum_{s \neq r} \frac{n_{r,s}}{|P_r|}\|\mu_{r,s}-\mu_r\|$. We can bound $\|\mu_{r,r}-\mu_r\|$ by $O(\sigma)$ using \ref{eq:sr:mean:subset} and $\|\mu_{r,s}-\mu_s\|$ by $O(k (\sqrt{d} + 2\sqrt{\log N})$ using \ref{eq:sr:length} and the fact that points in $r$ and $s$ must belong to the same component in $G$. Combining we get the claim.
\end{proof}

\end{document}